\documentclass[a4paper]{amsart}

\usepackage[centering]{geometry}
\usepackage{xcolor}

\usepackage[english]{babel}
\usepackage[utf8]{inputenc}
\usepackage[T1]{fontenc}

\usepackage{amsmath}
\usepackage{amssymb}
\usepackage[foot]{amsaddr}

\usepackage{xcolor}

\usepackage{enumerate}
\usepackage{scalerel}

\usepackage{todonotes}
\usepackage[normalem]{ulem}

\usepackage{biblatex}
\usepackage{amsthm}
\usepackage{aliascnt}

\newtheorem{theorem}{Theorem}[section]

\newaliascnt{lemma}{theorem}
\newtheorem{lemma}[lemma]{Lemma}
\aliascntresetthe{lemma}

\newaliascnt{proposition}{theorem}

\aliascntresetthe{proposition}

\newaliascnt{corollary}{theorem}
\newtheorem{corollary}[corollary]{Corollary}
\aliascntresetthe{corollary}

\theoremstyle{definition}

\newaliascnt{definition}{theorem}
\newtheorem{definition}[definition]{Definition}
\aliascntresetthe{definition}

\newaliascnt{example}{theorem}

\aliascntresetthe{example}

\newaliascnt{remark}{theorem}
\newtheorem{remark}[remark]{Remark}
\aliascntresetthe{remark}

\newaliascnt{convention}{theorem}

\aliascntresetthe{convention}

\DeclareMathOperator{\AVaR}{AVaR}

\usepackage[most]{tcolorbox}

\newtcolorbox{notationbox}{
  breakable,
  colframe=black,
  boxrule=0.8pt,
  left=6pt,
  right=6pt,
  top=6pt,
  bottom=6pt
}

\newcommand{\R}{\mathbb{R}}

\newcommand{\FF}{\mathbb{F}}
\newcommand{\E}{\mathbb{E}}

\newcommand{\F}{\mathcal{F}}
\newcommand{\G}{\mathcal{G}}

\newcommand{\cH}{\mathcal{H}}
\newcommand{\simad}{\sim_{\mathrm{ad}}}

\newcommand{\weaklyto}{\rightharpoonup}

\newcommand{\supp}{\textup{supp}}

\newcommand{\id}{\textup{id}}

\newcommand{\law}{\mathcal{L}}

\newcommand{\Pc}{\mathcal P}

\renewcommand{\epsilon}{\varepsilon}
\renewcommand{\subset}{\subseteq}

\renewcommand{\P}{\mathbb{P}}

\newcommand{\lawad}{\law^{\textup{ad}}}

\renewcommand{\phi}{\varphi}

\let\oldmarginpar\marginpar
\renewcommand\marginpar[1]{\-\oldmarginpar[\raggedleft\footnotesize #1]{\raggedright\footnotesize\color{red} #1}}
\usepackage[colorlinks=true, linkcolor=black, citecolor=black, urlcolor=black]{hyperref}
\usepackage{cleveref}

\crefname{theorem}{theorem}{theorems}
\Crefname{theorem}{Theorem}{Theorems}

\crefname{lemma}{lemma}{lemmas}
\Crefname{lemma}{Lemma}{Lemmas}

\crefname{proposition}{proposition}{propositions}
\Crefname{proposition}{Proposition}{Propositions}

\crefname{corollary}{corollary}{corollaries}
\Crefname{corollary}{Corollary}{Corollaries}

\crefname{definition}{definition}{definitions}
\Crefname{definition}{Definition}{Definitions}

\crefname{remark}{remark}{remarks}
\Crefname{remark}{Remark}{Remarks}

\crefname{example}{example}{examples}
\Crefname{example}{Example}{Examples}

\crefname{convention}{convention}{conventions}
\Crefname{convention}{Convention}{Conventions}

\begin{document}

\begin{abstract} 
In static risk measurement, law invariance expresses the principle that the
risk of a position should depend only on its distribution, and not on the
particular probability space on which it is represented. In a dynamic setting,
the same principle leads naturally to adapted law invariance: the risk
assessment should depend only on the probabilistic structure of the financial position 
together with the way information about it is revealed over time.

We show that, for time-consistent risk measures, adapted law invariance is equivalent to a recursive
one-step conditional-law representation. More precisely, assuming Fatou regularity, the one-step risk
evaluations are exactly conditional lifts of static law-invariant risk
measures, and the full dynamic risk measure is obtained by backward
composition of these one-step maps. Convexity and coherence of the dynamic
risk measure are characterized by the corresponding properties of the static
one-step risk measures.

This identifies adapted law invariance as the dynamic counterpart of ordinary
law invariance. It also clarifies the strength of terminal-law invariance, as it appears in
the rigidity theorem of Kupper and Schachermayer: it
does not distinguish risks with the same distribution but different
times of resolution. 
We further obtain an
adapted Kusuoka representation in the coherent case and establish an extension of the Kupper--Schachermayer theorem.

\end{abstract}

\keywords{
dynamic risk measures;
adapted law invariance;
law invariance;
time consistency;
conditional law;
Kusuoka representation;
Average Value-at-Risk
}

\title{Adapted Law Invariance and Time-Consistent Dynamic Risk Measures}
\author{Mathias Beiglböck$^\dagger$}
\address{$^\dagger$Faculty of Mathematics, University of Vienna, Austria}
\email{mathias.beiglboeck@univie.ac.at}

\author{Silvana M. Pesenti$^\ddagger$}
\address{$^\ddagger$Department of Statistical Sciences, University of Toronto, Canada}
\email{silvana.pesenti@utoronto.ca}

\author{Maxime Sylvestre$^\dagger$}
\email{maxime.sylvestre@univie.ac.at}




\thanks{\textit{Acknowledgements.}  The authors are grateful to Walter Schachermayer for his helpful comments and suggestions.}

\maketitle

\section{Introduction}\label{sec:intro}

Law invariance is one of the fundamental structural principles in the theory of
risk measures. A static risk measure
\(
\bar\rho:L^\infty(\Omega,\F,\P)\to\mathbb R
\)
is called law invariant if
\[
\law(X)=\law(\widetilde X)
\quad\Longrightarrow\quad
\bar\rho(X)=\bar\rho(\widetilde X).
\]
Equivalently $\bar \rho (X) = \rho(\law (X))$ for some $\rho: \Pc_b (\R) \to \R$ and by a slight abuse of notation we use the term law invariant risk measure both for $\bar \rho$ and $\rho$. 

Law invariance is a natural requirement both for conceptual and statistical reasons. The
underlying probability space \((\Omega,\F,\P)\) is primarily a representation
device: a random variable \(X:\Omega\to\mathbb R\) represents its distribution
\(\law(X)\), and all probabilistic quantities attached to \(X\), such as
expectations, quantiles, moments, and tail functionals, are determined by this
law. If two random variables have the same distribution, then any distinction
between them which is not visible through their law is a feature of the
particular probability space on which they happen to be represented, rather
than of the risk itself. Moreover, from a statistical point of view, the law is
the only object which can be inferred from observations of the random variable;
one never observes the abstract points of the sample space. Thus a risk measure
which depends on more than the law of \(X\) uses information which is neither
probabilistically intrinsic nor statistically identifiable.

In a dynamic setting the situation is more subtle. The model is no longer just
a probability space, but a filtered probability space
\[
(\Omega,\F,\P,(\F_t)_{t=0}^N),
\]
where the filtration describes the evolution of information. A terminal payoff / liability 
is then not fully described, for dynamic purposes, by its terminal law alone.
Equally important is the way in which information about the payoff is revealed
over time. Two payoffs may have the same terminal distribution while one is
revealed at an early date and the other remains unresolved until maturity.
These two risks have the same ordinary law, but they need not have the same
dynamic risk profile. The appropriate dynamic analogue of law invariance should
therefore retain not only the distribution of the terminal payoff, but also its
information structure.

This leads to the notion of \emph{adapted law invariance}. Roughly speaking, two
random variables, possibly defined on different filtered probability spaces,
are identified if they have the same probabilistic properties together with the
same structure of information revelation. There are various, mathematically equivalent, ways to formalize this intuitions, see \cite{HoKe84, Pf09, BaBePa21, BeiglProbabilistic2024} and we postpone the precise definition to
\Cref{sec:notation-main-results}. Adapted law invariance then means that an object depends
only on this adapted law, and not on any irrelevant features of the filtered
probability space used to realize it.

In dynamic risk theory, the central
structural condition is time consistency. In the sign convention used here,
time consistency means that
\[
R_t(X)=R_t( R_{t+1}(X)),
\qquad t=0,\dots,N-1.
\]
Equivalently, a time-consistent dynamic risk measure is obtained by backward
composition of its one-step restrictions
\[
 S_t:=
 R_t\big|_{L^\infty(\Omega,\F_{t+1},\P)}.
\]
Our  main characterization result shows that, under the natural Fatou regularity
assumption, adapted law invariance of a time-consistent dynamic risk measure is
equivalent to a purely local conditional-law property. More precisely, the
one-step maps must be of the form
\[
S_t(Y)
=
\sigma_t\bigl(\law(Y\mid\F_t)\bigr),
\qquad
Y\in L^\infty(\Omega,\F_{t+1},\P),
\]
where each $\sigma_t:\Pc_b(\mathbb R)\to\mathbb R$ is a static law invariant risk measure.

Thus, once time consistency is imposed, adapted law invariance is characterized
by the requirement that the one-step risk evaluation are static law invariant risk measures depending  only on the
conditional law of the next-period position given the present information. The
full dynamic risk measure is then recovered by backward iteration of these
one-step conditional-law-invariant maps.

A consequence of this characterization  result is an adapted version of the
Kusuoka representation. In the coherent case, with the appropriate Fatou
property, we find that adapted-law-invariant time-consistent dynamic risk
measures are precisely nested conditional Kusuoka operators. Equivalently, the
one-step maps admit representations of the form
\[
\bar\sigma_t(Y)
=
\sup_{\nu\in\mathcal M_t}
\int_{[0,1]}
\AVaR_\alpha(Y\mid\F_t)\,\nu(d\alpha),
\]
where \(\mathcal M_t\) is a nonempty convex weakly closed set of probability
measures on \([0,1]\). The full dynamic risk measure is then obtained by
backward iteration:
\[
\bar\rho_t(X)
=
\bar\sigma_{t}\circ
\bar\sigma_{t+1}\circ\cdots\circ
\bar\sigma_{N-1}(X).
\]
Thus the dynamic analogue of the classical Kusuoka theorem is not a single
static Kusuoka representation at time zero, but a nested representation by
conditional Averaged Value-at-Risk operators.
\medskip

This paper is partially  motivated by the striking rigidity theorem of
Kupper and Schachermayer \cite{KupperSchachermayer2009}. In a standard
filtered probability space, they show that a normalized, relevant,
time-consistent dynamic risk measure whose initial value is law invariant in
the ordinary terminal-law sense must be of entropic type. 
 This theorem is often read as showing that, under ordinary terminal-law
invariance, time consistency leaves essentially only the entropic family.

Our results complement this perspective by isolating the role played by the
notion of law invariance imposed in a filtered setting.

The Kupper--Schachermayer condition identifies terminal payoffs solely through
their ordinary terminal law. Consequently, it abstracts from the time at which
uncertainty is resolved, and in particular does not distinguish whether the
relevant information is revealed early or late. From the viewpoint of filtered
models, this is a strong invariance requirement.
Adapted law invariance is weaker and tailored to the filtered setting:
it treats the temporal resolution of uncertainty as part of the probabilistic
object.
It
identifies two risks only when their laws and their information structures
coincide. Thus the adapted-law-invariant framework retains the natural
statistical and probabilistic motivation of law invariance while avoiding the
loss of information caused by reducing a filtered payoff to its terminal law. 


The Kupper--Schachermayer theorem is formulated in an infinite-horizon setting.
We give a self-contained proof showing that the same rigidity phenomenon is
already present in two periods: for \(N=2\), terminal-law invariance, relevance,
and time consistency imply that the risk measure is of entropic type. This
finite-horizon formulation makes the comparison with our adapted-law-invariant
results transparent.

\medskip

\paragraph{\bf Related literature.}
The axiomatic theory of monetary risk measures starts with coherent risk
measures of Artzner, Delbaen, Eber and Heath \cite{ArDeEbHe99} and with the
subsequent convex theory developed by F{\"o}llmer and Schied
\cite{FoSc02,FollmerSchied2004} and Frittelli and Rosazza Gianin
\cite{FrRoGi02}. The law-invariant subclass is represented by the Kusuoka
theorem for coherent risk measures \cite{Kusuoka2001} and by its convex
extensions and regularity results, in particular
\cite{FrittelliRosazzaGianin2005,JoScTo06,CheriditoLi2008}. Average
Value-at-Risk, or Expected Shortfall, is the basic building block in these
representations; see \cite{AcerbiTasche2002,RockafellarUryasev2002} and, for
later refinements of the Kusuoka representation, \cite{Shapiro2013Kusuoka}.

Dynamic risk measures and time consistency have been studied from several
complementary viewpoints. For coherent and convex conditional risk measures
see, among others,
\cite{Riedel2004,DetlefsenScandolo2005,Artzner2007,CheriditoDelbaenKupper2006,
FollmerPenner2006,delbaen2006,DelbaenPengRosazzaGianin2010,AcciaioPenner2011,
BionNadal2008}. The recursive formulation of time-consistent monetary risk
measures as compositions of one-step maps is standard in the discrete-time
literature; see \cite{CheriditoKupper2011,RuszczynskiShapiro2006,ruszczynski2010risk} and, from a
risk-averse optimisation and control perspective,
\cite{DentchevaRuszczynski2024DynamicRisk}. For unified frameworks and surveys
of the many notions of time consistency in discrete time, see
\cite{BieleckiCialencoPitera2018,BieleckiCialencoPitera2017Survey}. A related
cash-flow perspective, in which the timing of payments is part of the object to
be assessed, is developed in \cite{AcciaioFollmerPenner2012}. Recent related
developments also include dynamic robust risk measures under propagated
uncertainty \cite{MorescoMailhotPesenti2025,coache2026robust}, portfolio-allocation
applications of time-consistent dynamic risk measures
\cite{PesentiJaimungalSaporitoTargino2025}, and risk sharing with dynamic risk measure preferences \cite{tam2026dynamic}.

The interaction between law invariance and dynamic consistency is particularly
delicate. 
Weber \cite{Weber2006} and Kupper--Schachermayer
\cite{KupperSchachermayer2009} show that ordinary (conditional) terminal-law invariance
leads to strong rigidity under dynamic consistency.
A closely related one-step rigidity result is due to Föllmer
\cite[Theorem~3.4]{Follmer2014Spatial}. In an abstract setting with a
single nontrivial sub-\(\sigma\)-field, he shows that a convex law-invariant
risk measure with the Lebesgue property and a suitable consistency property
must be entropic. This result can be viewed as a static/one-step counterpart
of the Kupper--Schachermayer rigidity phenomenon.
The work \cite{Elliott2015JAP} considers conditional law invariance in a Binomial tree setting and a recent continuous-time BSDE
counterpart of Kupper--Schachermayer is \cite{BensaidDumitrescuMatoussiSabbagh2026}. The contribution of the present
paper is to replace terminal-law invariance by adapted law invariance: the law
of the terminal position is not discarded, but the temporal resolution of
uncertainty encoded by the filtration is retained. This leads to a one-step
conditional-law representation and, in the coherent case, to a nested
conditional Kusuoka representation.

\medskip

The paper is organized as follows. In \Cref{sec:notation-main-results} we fix
the notation and state the main results. We first recall static and dynamic
risk measures in the loss convention used throughout the paper, record the
basic locality, time-consistency, reconstruction, and Fatou-regularity facts,
and then introduce adapted laws and adapted-law-invariant functionals. After
restricting to the standard filtered cube, we state the main characterization
theorem: relevant time-consistent adapted-law-invariant risk measures with the
Fatou property are precisely backward compositions of one-step conditional
lifts of relevant static law-invariant risk measures. We also state the
finite-horizon Kupper--Schachermayer rigidity theorem in our convention and
the adapted Kusuoka representation in the coherent case. In
\Cref{sec:dynamic} we prove the main characterization theorem. The proof uses
the reconstruction of the dynamic family from \(R_0\), the one-step
randomization available on the standard cube, and a filtration-preserving
automorphism argument to identify the one-step maps. After the proof of the main characterization, in
\Cref{sec:cash-flow-processes} we record the corresponding formulation for
discounted cash-flow processes under conditional one-step dated cash
additivity and explain why the cash-subadditive case requires a genuinely
process-level extension. In
\Cref{sec:KS-two-period-proof} we give a self-contained proof of the
two-period Kupper--Schachermayer rigidity theorem; as explained in the results
section, this already implies the finite-horizon theorem, and also recovers
the original infinite-horizon result. Finally, in
\Cref{sec:adapted-kusuoka} we proof the adapted Kusuoka representation from
the main characterization theorem and the classical static Kusuoka theorem.


\section{Notation and main results}
\label{sec:notation-main-results}

Throughout the paper all random variables are real-valued and bounded, unless
explicitly stated otherwise. We write \(\Pc_b(\R)\) for the set of Borel
probability measures on \(\R\) with bounded support, equipped with the Borel
structure induced by weak convergence on \(\Pc(\R)\). If \(X\) is a random
variable, then \(\law(X)\) denotes its law. Below we recall the definitions and basic properties of static and dynamic risk measures. For the convenience of the reader we include short proofs. 

\subsection{Static law-invariant risk measures}

We first fix the static notation. 
For this we consider the probability space \((\Omega,\F,\P)\). 
A general risk functional on random variables is
denoted by a capital letter, typically \(R\). Law-level representatives are
denoted by lower-case Greek letters, typically \(\rho\).

\begin{definition}[Static risk measure]
\label{def:static-risk-measure}
A static risk measure is a map \(R:L^\infty(\Omega,\F,\P)\to\R\) such that,
for all \(X,Y\in L^\infty(\Omega,\F,\P)\) and all \(m\in\R\),

\begin{enumerate}[(1)]
\item Monotonicity: if \(X\le Y\), then \(R(X)\le R(Y)\).

\item Cash additivity: \(R(X+m)=R(X)+m\).

\item Normalization: \(R(0)=0\).
\end{enumerate}
\end{definition}

In the loss convention used here, \(X\) is interpreted as a random loss and
\(R(X)\) as its monetary assessment. Thus the usual diversification principle
corresponds to convexity in \(X\).

\begin{definition}[Static properties]
\label{def:static-properties}
Let \(R:L^\infty(\Omega,\F,\P)\to\R\) be a static risk measure.

\begin{enumerate}[(1)]
\item \(R\) is \emph{relevant} if
\(R(\varepsilon\mathbf 1_A)>0\) for every \(\varepsilon>0\) and every
\(A\in\F\) with \(\P(A)>0\).

\item \(R\) is \emph{law invariant} if \(\law(X)=\law(Y)\) implies
\(R(X)=R(Y)\).

\item \(R\) has the \emph{Fatou property} if \(X_n\to X\) a.s. and
\(\sup_n\|X_n\|_\infty<\infty\) imply
\[
R(X)\le\liminf_{n\to\infty}R(X_n).
\]

\item \(R\) is \emph{convex} if
\[
R(\lambda X+(1-\lambda)Y)
\le
\lambda R(X)+(1-\lambda)R(Y)
\]
for all \(X,Y\in L^\infty(\Omega,\F,\P)\) and all \(\lambda\in[0,1]\).

\item \(R\) is \emph{coherent} if it is convex and positively homogeneous,
i.e.\  \(R(aX)=aR(X)\) for all \(a\ge0\).
\end{enumerate}
\end{definition}
\emph{Relevance} is an economically natural property which is technically convenient in the context of time consistent risk measures, see \Cref{lem:reconstruction-from-R0} below. \emph{Convexity} is so ubiquitously assumed as a property of risk measures, that is often taken as part of the definition. By \cite{JoScTo06} every law-invariant convex risk measure has the \emph{Fatou property}. Moreover every law-invariant convex risk measure is relevant, see e.g., corollary 4.59 in \cite{follmer2025book}.

A law-invariant static risk measure is represented by a law-level functional
\(\rho:\Pc_b(\R)\to\R\), in the sense that \(R(X)=\rho(\law(X))\). Conversely,
each such \(\rho\) defines a lift \(\bar\rho\) on every probability space by
\[
\bar\rho(X):=\rho(\law(X)).
\]
We use the term \emph{static law-invariant risk measure} both for the
law-level functional \(\rho\) and for its lift \(\bar\rho\). The properties in
\Cref{def:static-properties} are understood for \(\rho\) through its lift. In
particular, \(\rho\) has the Fatou property iff it is lower semicontinuous on
each compact-support stratum \(\Pc([-M,M])\) with respect to weak convergence.

\subsection{Dynamic risk measures}

We work in finite discrete time. Fix a filtered probability space with horizon
\(N\),
\[
(\Omega,\F,\P,\FF)
=
(\Omega,\F,\P,(\F_t)_{t=0}^N),
\qquad
0=\F_0\subset\F_1\subset\cdots\subset\F_N\subset\F,
\]
where \(0\) denotes the trivial \(\sigma\)-field, up to completion. Throughout we use the notation $L^\infty(\F_t):= L^\infty(\Omega,\F_t,\P)$, $t = 0, \ldots, N$. 

\begin{definition}[Dynamic risk measure]
\label{def:dynamic-risk}
A dynamic risk measure is a family
\(R=(R_t)_{t=0}^N\) of maps
\[
R_t:L^\infty(\F_N)\to L^\infty(\F_t)
\]
such that, for \(t<N\), \(X,Y\in L^\infty(\F_N)\), and
\(m\in L^\infty(\F_t)\), we have

\begin{enumerate}[(1)]
\item Monotonicity: if \(X\le Y\), then \(R_t(X)\le R_t(Y)\) a.s.

\item Cash additivity: \(R_t(X+m)=R_t(X)+m\) a.s.

\item Normalization: \(R_t(0)=0\) a.s.

\item Terminal condition: \(R_N(X)=X\) a.s.
\end{enumerate}
\end{definition}

By cash additivity and normalization, \(R_t(Z)=Z\) for every
\(Z\in L^\infty(\F_t)\), \(t<N\); for \(t=N\) this is the terminal
condition.

\begin{definition}[Dynamic properties]
\label{def:dynamic-properties}
Let \(R=(R_t)_{t=0}^N\) be a dynamic risk measure.

\begin{enumerate}[(1)]
\item \(R\) is \emph{relevant} if \(R_0\) is relevant, that is,
\(R_0(\varepsilon\mathbf 1_A)>0\) for every \(\varepsilon>0\) and every
\(A\in\F_N\) with \(\P(A)>0\).

\item \(R\) is \emph{time consistent} if
\[
R_t(X)=R_t(R_{t+1}(X)),
\qquad t=0,\dots,N-1.
\]

\item \(R\) is \emph{convex} if, for all \(t\), all \(X,Y\), and all
\(\lambda\in L^\infty(\F_t)\) with \(0\le\lambda\le1\),
\[
R_t(\lambda X+(1-\lambda)Y)
\le
\lambda R_t(X)+(1-\lambda)R_t(Y).
\]

\item \(R\) is \emph{coherent} if it is convex and conditionally positively
homogeneous, that is,
\(R_t(\lambda X)=\lambda R_t(X)\) for all
\(\lambda\in L^\infty(\F_t)\) with \(\lambda\ge0\).
\end{enumerate}
\end{definition}

The following elementary locality fact will be used repeatedly.

\begin{lemma}[Locality]
\label{lem:locality}
Let \(S:L^\infty(\F_{t+1})\to L^\infty(\F_t)\) be monotone and cash additive.
Then, for every \(B\in\F_t\) and all \(X,Y\in L^\infty(\F_{t+1})\),
\[
S(\mathbf 1_BX+\mathbf 1_{B^c}Y)
=
\mathbf 1_BS(X)+\mathbf 1_{B^c}S(Y).
\]
Consequently, each \(R_t\) of a dynamic risk measure is local.
\end{lemma}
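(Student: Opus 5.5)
The plan is the standard locality argument: first show that $S$ commutes with multiplication by $\mathbf 1_B$ in the sense that $\mathbf 1_B S(Z)=\mathbf 1_B S(\mathbf 1_B Z)$ for every $Z\in L^\infty(\F_{t+1})$ and $B\in\F_t$, and then assemble the two pieces. Only monotonicity and cash additivity are used, with the cash term $m$ allowed to be $\F_t$-measurable, as in \Cref{def:dynamic-risk}. We do not need normalization.

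For the key step, fix $Z$ and $B\in\F_t$, and put $c:=\|Z\|_\infty$. Then
\[
\mathbf 1_B Z-c\mathbf 1_{B^c}\;\le\; Z\;\le\; \mathbf 1_B Z+c\mathbf 1_{B^c},
\]
and $\pm c\mathbf 1_{B^c}\in L^\infty(\F_t)$. By monotonicity and cash additivity,
\[
S(\mathbf 1_BZ)-c\mathbf 1_{B^c}\le S(Z)\le S(\mathbf 1_BZ)+c\mathbf 1_{B^c}\quad\text{a.s.}
\]
Multiplying by $\mathbf 1_B$ gives $\mathbf 1_BS(Z)=\mathbf 1_BS(\mathbf 1_BZ)$. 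Replacing $B$ by $B^c\in\F_t$ gives the analogous identity on $B^c$.

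To conclude, set $W:=\mathbf 1_BX+\mathbf 1_{B^c}Y$. Since $\mathbf 1_BW=\mathbf 1_BX$, applying the key step twice yields
\[
\mathbf 1_BS(W)=\mathbf 1_BS(\mathbf 1_BX)=\mathbf 1_BS(X),
\]
and likewise $\mathbf 1_{B^c}S(W)=\mathbf 1_{B^c}S(Y)$. Adding the two identities proves the claim.

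For the consequence, each $R_t$ with $t<N$ is monotone and cash additive on $L^\infty(\F_N)$, so the same argument applies verbatim with $\F_{t+1}$ replaced by $\F_N$. It only requires the domain to be closed under multiplication by $\F_t$-indicators and the cash terms to range over $L^\infty(\F_t)$. For $t=N$, the identity $R_N=\mathrm{id}$ is trivially local.

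There is no real obstacle. The only point needing care is that the bounds $\pm c\mathbf 1_{B^c}$ must be $\F_t$-measurable, which is exactly what conditional cash additivity permits. All inequalities hold a.s., so this suffices.
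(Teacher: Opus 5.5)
Your proof is correct and is essentially the paper's argument: both sandwich the position using the $\F_t$-measurable cash terms $\pm c\mathbf 1_{B^c}$ and then multiply by $\mathbf 1_B$. The paper does this in one step, bounding $\mathbf 1_BX+\mathbf 1_{B^c}Y$ between $X\pm c\mathbf 1_{B^c}$ with $c>\|X-Y\|_\infty$, whereas you pass through the intermediate identity $\mathbf 1_BS(Z)=\mathbf 1_BS(\mathbf 1_BZ)$, which is only a cosmetic difference.
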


\begin{proof}
Let \(Z=\mathbf 1_BX+\mathbf 1_{B^c}Y\), and choose
\(c>\|X-Y\|_\infty\). Then
\(X-c\mathbf 1_{B^c}\le Z\le X+c\mathbf 1_{B^c}\). Monotonicity and cash
additivity give
\[
S(X)-c\mathbf 1_{B^c}\le S(Z)\le S(X)+c\mathbf 1_{B^c},
\]
hence \(\mathbf 1_BS(Z)=\mathbf 1_BS(X)\). Interchanging \(B\) and \(B^c\)
gives the claim.
\end{proof}

\begin{lemma}[Equivalent forms of time consistency]
\label{lem:time-consistency-equivalences}
For a dynamic risk measure \(R=(R_t)_{t=0}^N\), the following are equivalent.

\begin{enumerate}[(i)]
\item Recursive time consistency:
\(
R_t(X)=R_t(R_{t+1}(X)),
 t=0,\dots,N-1.
\)

\item Semigroup time consistency:
\(
R_s(X)=R_s(R_t(X)),
 0\le s\le t\le N.
\)

\item Strong time consistency:
\(
R_{t+1}(X)\le R_{t+1}(Y)
\quad\Longrightarrow\quad
R_t(X)\le R_t(Y),
 t=0,\dots,N-1.
\)

\item Equality consistency:
\(
R_{t+1}(X)=R_{t+1}(Y)
\ \Longrightarrow\ 
R_t(X)=R_t(Y),
\ t=0,\dots,N-1.
\)
\end{enumerate}

If, in addition, \(R\) is relevant, these conditions are also equivalent to
the time-zero condition
\[
R_0(X)=R_0(R_t(X)),
\qquad t=0,\dots,N.
\]
\end{lemma}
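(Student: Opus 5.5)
I will establish the cycle (i)\(\Rightarrow\)(ii)\(\Rightarrow\)(iii)\(\Rightarrow\)(iv)\(\Rightarrow\)(i), and then treat the time-zero condition separately under relevance.

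\emph{(i)\(\Rightarrow\)(ii).} Fix \(s\le t\) and argue by induction on \(t-s\); the case \(t=s\) is trivial since \(R_s(R_s(X))=R_s(X)\), because \(R_s(X)\in L^\infty(\F_s)\) (for \(s=N\) use the terminal condition). For the inductive step, assume \(R_{s+1}(X)=R_{s+1}(R_t(X))\). Applying (i) twice gives
\[
R_s(X)=R_s(R_{s+1}(X))=R_s(R_{s+1}(R_t(X)))=R_s(R_t(X)).
\]

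\emph{(ii)\(\Rightarrow\)(iii).} If \(R_{t+1}(X)\le R_{t+1}(Y)\), then monotonicity of \(R_t\) together with (ii) gives
\[
R_t(X)=R_t(R_{t+1}(X))\le R_t(R_{t+1}(Y))=R_t(Y).
\]

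\emph{(iii)\(\Rightarrow\)(iv).} Apply (iii) in both directions.

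\emph{(iv)\(\Rightarrow\)(i).} Put \(Y:=R_{t+1}(X)\in L^\infty(\F_{t+1})\). Then \(R_{t+1}(Y)=Y=R_{t+1}(X)\), using the remark after \Cref{def:dynamic-risk} (or the terminal condition when \(t+1=N\)). Hence (iv) yields \(R_t(X)=R_t(Y)\).

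For the time-zero condition, (ii) with \(s=0\) gives it immediately. The converse is the main obstacle: we must recover \(R_t(X)=R_t(R_{t+1}(X))\) from information at time zero only, and this is where relevance is needed. I will prove (i) directly by contradiction, working at the fixed level \(t\). Set \(Y:=R_{t+1}(X)\) and \(Z:=R_t(X)-R_t(Y)\in L^\infty(\F_t)\), and suppose \(A:=\{Z>\epsilon\}\) has positive probability for some \(\epsilon>0\). By \Cref{lem:locality} and cash additivity,
\[
R_t(\mathbf 1_AX+\mathbf 1_{A^c}Y)=\mathbf 1_AR_t(X)+\mathbf 1_{A^c}R_t(Y)\ge R_t(Y)+\epsilon\mathbf 1_A=R_t(Y+\epsilon\mathbf 1_A).
\]
The variable \(W:=\mathbf 1_AX+\mathbf 1_{A^c}Y\) satisfies, by locality of \(R_{t+1}\) (note \(A\in\F_t\subset\F_{t+1}\)), \(R_{t+1}(W)=\mathbf 1_AY+\mathbf 1_{A^c}Y=Y\). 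Hence \(R_{t+1}(W)=Y=R_{t+1}(Y)\).

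Apply the hypothesis \(R_0=R_0\circ R_s\) at suitable \(s\). First, \(R_0(W)=R_0(R_{t+1}(W))=R_0(Y)\). Second, \(R_0(W)=R_0(R_t(W))\ge R_0(R_t(Y)+\epsilon\mathbf 1_A)\) by monotonicity. Third, \(R_0(R_t(Y)+\epsilon\mathbf 1_A)\ge R_0(R_t(Y))+\delta=R_0(Y)+\delta\) for some \(\delta>0\). Chaining these gives \(R_0(Y)\ge R_0(Y)+\delta\), a contradiction. The symmetric argument, with \(\{Z<-\epsilon\}\) and the roles of \(X\) and \(Y\) swapped, gives \(Z\ge0\). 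Hence \(Z=0\) almost surely, which is (i).

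The delicate point is the strict increase \(R_0(V+\epsilon\mathbf 1_A)>R_0(V)\) for a general \(V\), since relevance only controls \(R_0(\epsilon\mathbf 1_A)\). I would first try to reduce this to the stated form using cash additivity and localization to \(A\); if that fails, I would instead assume the standard strict-monotonicity form of relevance (as in Kupper--Schachermayer) and verify its equivalence separately.
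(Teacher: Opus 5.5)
Your cycle (i)\(\Rightarrow\)(ii)\(\Rightarrow\)(iii)\(\Rightarrow\)(iv)\(\Rightarrow\)(i) and the derivation of the time-zero condition from (ii) are correct. Your localized variable \(W=\mathbf 1_AX+\mathbf 1_{A^c}R_{t+1}(X)\) is exactly the one the paper uses.

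The gap is the step you flagged yourself. The inequality \(R_0(R_t(Y)+\varepsilon\mathbf 1_A)\ge R_0(R_t(Y))+\delta\) does not follow from relevance, and it is false in general. The family \(R_t=\operatorname{ess\,sup}(\cdot\mid\F_t)\) is a relevant time-consistent dynamic risk measure. Yet \(R_0(V+\varepsilon\mathbf 1_A)=R_0(V)\) whenever \(V\in L^\infty(\F_t)\) and \(A\) is a non-null set on which \(V\le\operatorname{ess\,sup}V-\varepsilon\).

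Neither remedy repairs this. Your first remedy does not help as stated: cash additivity of \(R_0\) only permits deterministic shifts, so it cannot move the random base point \(R_t(Y)\) to \(0\). Your fallback fails as well. A strict-monotonicity form of relevance is strictly stronger than the relevance in the statement (the same example shows this). Assuming it would prove a weaker lemma, and the equivalence you intended to verify does not hold.

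The missing idea is to shift \emph{before} passing to time zero, at a level where \(\F_t\)-measurable shifts are allowed. Put \(V:=R_t(Y)\in L^\infty(\F_t)\) and replace \(W\) by \(W-V\). Conditional cash additivity of \(R_t\) and \(R_{t+1}\) (note \(V\) is also \(\F_{t+1}\)-measurable) gives \(R_t(W-V)=R_t(W)-V\ge\varepsilon\mathbf 1_A\) and \(R_{t+1}(W-V)=Y-V\). The time-zero condition at \(t\), \(t+1\), \(t\) then yields
\begin{align*}
0=R_0(0)&=R_0\bigl(R_t(Y-V)\bigr)=R_0(Y-V)=R_0\bigl(R_{t+1}(W-V)\bigr)\\
&=R_0(W-V)=R_0\bigl(R_t(W-V)\bigr)\ge R_0(\varepsilon\mathbf 1_A)>0,
\end{align*}
a contradiction that uses relevance exactly in the stated form. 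This is the paper's choice \(h=-W\), where the paper's \(W\) is your \(V\).

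The reverse inequality is handled the same way: apply the shift \(-R_t(X)\) to \(\mathbf 1_{A'}Y+\mathbf 1_{A'^c}X\) with \(A':=\{Z<-\varepsilon\}\). With this normalization inserted, your argument coincides with the paper's proof.
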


\begin{proof}
The equivalence of (i) and (ii) follows by iteration; (ii) implies (i) by
choosing \(s=t\) and replacing \(t\) by \(t+1\).

Assume (i). If \(R_{t+1}(X)\le R_{t+1}(Y)\), then monotonicity of \(R_t\)
gives
\[
R_t(R_{t+1}(X))\le R_t(R_{t+1}(Y)),
\]
hence \(R_t(X)\le R_t(Y)\). Thus (i) implies (iii). Clearly, (iii) implies
(iv). Conversely, if (iv) holds, apply it to \(Y:=R_{t+1}(X)\). Since
\(R_{t+1}(Y)=Y=R_{t+1}(X)\), one obtains
\(R_t(X)=R_t(R_{t+1}(X))\). Hence (i)--(iv) are equivalent.

Under any of these conditions, the time-zero condition follows from (ii) with
\(s=0\). Conversely, assume relevance and the time-zero condition. Fix
\(t<N\), set \(U:=R_t(X)\) and \(W:=R_t(R_{t+1}(X))\). We prove \(U=W\).

Let \(A\in\F_t\), \(h\in L^\infty(\F_t)\), and
\(Z:=\mathbf 1_AX+\mathbf 1_{A^c}R_{t+1}(X)\). Locality gives
\[
R_t(Z+h)=\mathbf 1_AU+\mathbf 1_{A^c}W+h,
\qquad
R_{t+1}(Z+h)=R_{t+1}(X)+h.
\]
Using the time-zero condition at times \(t,t+1,t\), respectively, yields
\[
R_0(\mathbf 1_AU+\mathbf 1_{A^c}W+h)=R_0(W+h).
\]
If \(U>W\) on a set of positive probability, choose \(A\in\F_t\) and
\(\varepsilon>0\) such that \(U\ge W+\varepsilon\) on \(A\). With \(h=-W\) we
get \(R_0(\mathbf 1_A(U-W))=0\), contradicting monotonicity and relevance.
Thus \(U\le W\).

For the reverse inequality, let
\(\widetilde Z:=\mathbf 1_AR_t(X)+\mathbf 1_{A^c}X\). Locality gives
\(R_t(\widetilde Z+h)=U+h\), while
\[
R_t(R_{t+1}(\widetilde Z+h))
=
\mathbf 1_A(U+h)+\mathbf 1_{A^c}(W+h).
\]
The time-zero condition implies
\[
R_0(U+h)=R_0(\mathbf 1_AU+\mathbf 1_{A^c}W+h).
\]
If \(W>U\) on a set \(B\in\F_t\) of positive probability, take
\(A=B^c\), \(h=-U\), and argue as above. Hence \(W\le U\), and therefore
\(U=W\).
\end{proof}

For a time-consistent dynamic risk measure we define its one-step maps by
\[
S_t:=R_t|_{L^\infty(\F_{t+1})}:
L^\infty(\F_{t+1})\to L^\infty(\F_t),
\qquad t=0,\dots,N-1.
\]
Then \(R_N=\id\) and, for \(t<N\), $R_t$ is given via the backward recursion
\[
R_t=S_t\circ S_{t+1}\circ\cdots\circ S_{N-1}.
\]
The equivalence of time consistency and the above backward recursion has been established in \cite{CheriditoDelbaenKupper2006, ruszczynski2010risk}. Indeed, time-inconsistency of conditional risk measures mapping $L^\infty(\F_N)$ to $L^\infty(\F_t)$, have been documented, see e.g., \cite{cheridito2009VaR} for conditional Value-at-Risk and \cite{Artzner2007} for conditional Average Value-at-Risk or Expected Shortfall.

\begin{lemma}[Reconstruction from the initial functional]
\label{lem:reconstruction-from-R0}
Let \(R=(R_t)_{t=0}^N\) be time consistent and relevant. Then \(R\) is uniquely
determined by \(R_0\). More precisely, for every \(t=0,\dots,N\) and every
\(X\in L^\infty(\F_N)\),
\[
R_t(X)
=
\operatorname*{ess\,inf}
\Bigl\{
m\in L^\infty(\F_t):
R_0(\mathbf 1_A(X-m))\le0
\text{ for every }A\in\F_t
\Bigr\}.
\]
\end{lemma}

\begin{proof}
Put \(V:=R_t(X)\). For every \(A\in\F_t\), locality and cash additivity give
\(R_t(\mathbf 1_A(X-V))=0\). Hence, by time consistency,
\(R_0(\mathbf 1_A(X-V))=0\), so \(V\) belongs to the set on the right.

Conversely, let \(m\in L^\infty(\F_t)\) belong to that set. If
\(V>m\) on a set of positive probability, then for some
\(A\in\F_t\) and \(\varepsilon>0\) one has \(V\ge m+\varepsilon\) on \(A\).
Locality gives
\[
R_t(\mathbf 1_A(X-m))=\mathbf 1_A(V-m)\ge\varepsilon\mathbf 1_A.
\]
Thus, by time consistency, monotonicity, and relevance,
\[
R_0(\mathbf 1_A(X-m))
=
R_0(R_t(\mathbf 1_A(X-m)))
\ge
R_0(\varepsilon\mathbf 1_A)>0,
\]
contradicting the defining inequality. Hence \(V\le m\). Since \(V\) itself is
admissible, it is the essential infimum.
\end{proof}

\begin{remark}[On the opposite sign convention]
\label{rem:reconstruction-sign}
The formally dual formula
\[
R_t(X)
=
\operatorname*{ess\,sup}
\{m\in L^\infty(\F_t):
R_0(\mathbf 1_A(X-m))\ge0\text{ for every }A\in\F_t\}
\]
is not valid under the present axioms. For instance, if \(R_0=\operatorname{ess\,sup}\),
then \(R_0(-c\mathbf 1_A)=0\) whenever \(A^c\) is non-null, so the displayed
essential supremum may be too large. The essential-infimum formula above is
the one compatible with the loss convention and the usual relevance assumption.
\end{remark}

\subsection{Adapted laws and adapted law invariance}
\label{subsec:adapted-laws}

In the static setting, the probability space is mainly a representation device:
a random variable \(X\) represents its law \(\law(X)\), and law invariance
expresses the principle that no further structure of the chosen probability
space should matter. In the filtered setting, the corresponding object is not
only the terminal law of \(X\), but also the way in which information about
\(X\) is revealed over time. At time \(0\) one only knows the initial
distributional information; at time \(N\), since \(X\) is \(\F_N\)-measurable,
the outcome is known. The intermediate \(\sigma\)-fields describe the gradual
resolution of uncertainty.

Thus a filtered random variable
\[
\mathbf X=(\Omega,\F,\P,(\F_t)_{t=0}^N,X)
\]
should be identified with another filtered random variable (on a possibly different probability space)
\[
\widetilde{\mathbf X}
=(\widetilde\Omega,\widetilde\F,\widetilde\P,
(\widetilde\F_t)_{t=0}^N,\widetilde X)
\]
if they have the same intrinsic filtered probabilistic structure. There are
several equivalent ways to make this precise. 

\begin{enumerate}
    \item 
    Hoover--Keisler
\cite{HoKe84} formalize  from a model-theoretic point of view what it means that $X, \widetilde X$ have the same probabilistic properties.
\item
In optimal transport one may say that the adapted Wasserstein distance between $X$ and $\widetilde X$
 is zero \cite{Pf09, BaBePa21}. 
 \item Equivalently, there
exist real-valued Markov processes \(M=(M_t)_{t=0}^N\) and
\(\widetilde M=(\widetilde M_t)_{t=0}^N\) on \((\Omega,\F,\P,(\F_t)_{t=0}^N)\) and \( (\widetilde\Omega,\widetilde\F,\widetilde\P,
(\widetilde\F_t)_{t=0}^N)\), resp.\   with \(M_N=X\),
\(\widetilde M_N=\widetilde X\), and such that \(M\) and \(\widetilde M\)
have the same joint law, see \cite{BeiglProbabilistic2024}. 
\end{enumerate}
We use the
equivalent formulation in terms of iterated conditional laws given in
\cite{BePaSc25}.

For \(k\ge0\), write \(\Pc_b^0(\R):=\R\) and
\(\Pc_b^{k+1}(\R):=\Pc_b(\Pc_b^k(\R))\), with the usual bounded-support Borel
structure. For \(X\in L^\infty(\F_N)\), its adapted law is the
iterated conditional law
\begin{align}
\lawad(X)
:=
\law\bigl(
\law(
\cdots
\law(
\law(X\mid\F_{N-1})
\mid\F_{N-2})
\cdots
\mid\F_1)
\bigr)
\in\Pc_b^N(\R).
\label{eq:adapted-law-iterated}
\end{align}
Here \eqref{eq:adapted-law-iterated} is read recursively: after the first step,
\(\law(X\mid\F_{N-1})\) is a \(\Pc_b(\R)\)-valued random variable, and each
subsequent conditional law is the regular conditional law of the preceding
state-valued random variable. Since \(\F_0\) is trivial, the outermost law is a
deterministic element of \(\Pc_b^N(\R)\).

\begin{definition}[Adapted law]
\label{def:adapted-law}
Two bounded terminal random variables \(X\) and \(\widetilde X\), possibly
defined on different filtered probability spaces with the same horizon, have
the same adapted law, written \(X\simad\widetilde X\), if
\[
\lawad(X)=\lawad(\widetilde X).
\]
\end{definition}

The adapted law refines the ordinary terminal law.\footnote{There is a
canonical terminal-marginal map
\(\operatorname{ter}_N:\Pc_b^N(\R)\to\Pc_b(\R)\) such that
\(\operatorname{ter}_N(\lawad(X))=\law(X)\), indeed
\(\operatorname{ter}_N\) is a repeated application of the intensity operator.}
Hence equality of adapted laws implies equality of terminal laws. The converse
is false: two payoffs may have the same terminal distribution while one is
revealed early and the other only at maturity.

\begin{definition}[Adapted-law-invariant functional]
\label{def:ad-li-functional}
A real-valued functional \(H\), assigning to each filtered random variable
\(\mathbf X=(\Omega,\F,\P,(\F_t)_{t=0}^N,X)\) a real number \(H(X)\), is
\emph{adapted-law invariant} if
\[
X\simad\widetilde X
\quad\Longrightarrow\quad
H(X)=H(\widetilde X).
\]
Equivalently, \(H\) factors through \(\lawad\).
\end{definition}

\begin{remark}
\label{rem:standardEnough}
While one can consider equivalence in adapted law for random variables on
different filtered probability spaces, this is sometimes cumbersome\footnote{
E.g.\ a functional on all filtered random variables is then necessarily a
proper class (in the sense of set theory) and there is no canonical way of
defining measurability of such functionals.} and can be avoided: by
\cite{BePaSc25}, every adapted law can be represented on an a priori fixed
probability space provided that this space is rich enough, in particular if
this space is a standard filtered probability space.
\end{remark}

{\bf Convention:} From now on we work on a standard filtered probability space,
which we take wlog as the standard filtered cube
\[
\mathbb U_N
:=
([0,1]^N,\mathcal B([0,1]^N),\lambda^N,(\F_t^N)_{t=0}^N),
\qquad
\F_t^N:=\sigma(\pi_1,\dots,\pi_t),
\]
with \(\F_0^N=0\) and $\pi_t :[0,1]^N \to [0,1]$ the projection onto the $t$-th variable. We suppress the superscript \(N\) and write simply
\(\F_t\).

\subsection{Main characterization theorem}
\label{subsec:main-characterization}

In view of \Cref{lem:reconstruction-from-R0}, adapted law invariance is the
natural invariance notion for dynamic risk measures. A relevant
time-consistent dynamic risk measure \(R=(R_t)_{t=0}^N\) (defined on the
standard filtered cube \(\mathbb U_N\)) is called \emph{adapted-law invariant} if
its initial functional \(R_0\) is adapted-law invariant, that is,
\[
X\simad\widetilde X
\quad\Longrightarrow\quad
R_0(X)=R_0(\widetilde X).
\]
Indeed, once relevance and time consistency are imposed, the whole family
\(R\) is reconstructed from \(R_0\); thus no additional invariance condition at
later times has to be specified separately.

\begin{theorem}[Main characterization of relevant time-consistent risk measures]
\label{thm:main-dynamic-characterization}
\label{thm:main-standard-cube}
Let \(R=(R_t)_{t=0}^N\) be a relevant time-consistent dynamic risk measure on
\(\mathbb U_N\), and assume that \(R_0\) has the Fatou property. Then the
following are equivalent.

\begin{enumerate}[(i)]
\item \(R\) is adapted-law invariant, that is, \(R_0\) is adapted-law invariant.

\item There are unique relevant static law-invariant risk measures
\(\rho_t:\Pc_b(\R)\to\R\), \(t=0,\dots,N-1\), with the Fatou property, such
that, for every \(Y\in L^\infty(\F_{t+1})\),
\begin{align}
S_t(Y)=\rho_t(\law(Y\mid\F_t))
\qquad\text{a.s.}
\label{eq:main-one-step-representation}
\end{align}
Put otherwise, for every \(X\in L^\infty(\F_N)\),
\begin{align}
R_N(X)&=X,
&
R_t(X)&=\rho_t(\law(R_{t+1}(X)\mid\F_t)),
\quad t=N-1,\dots,0.
\label{eq:main-backward-recursion}
\end{align}
\end{enumerate}

Under these equivalent conditions, \(R\) is convex if and only if each
\(\rho_t\) is convex, and \(R\) is coherent if and only if each \(\rho_t\) is
coherent.

Conversely, any family \(\rho_0,\dots,\rho_{N-1}\) of relevant static
law-invariant risk measures with the Fatou property defines, by
\eqref{eq:main-backward-recursion}, a relevant time-consistent dynamic risk
measure on \(\mathbb U_N\). Its initial functional has the Fatou property and
is adapted-law invariant. If the $\rho_t$'s are convex, respectively
coherent, then so is the resulting dynamic risk measure.
\end{theorem}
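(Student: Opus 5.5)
I would prove (ii)$\Rightarrow$(i) and the converse statement first, since they are the constructive direction. Given relevant law-invariant Fatou $\rho_t$, define $S_t(Y):=\rho_t(\law(Y\mid\F_t))$ on $L^\infty(\F_{t+1})$. This is well defined because $\omega\mapsto\law(Y\mid\F_t)(\omega)$ is a measurable $\Pc_b(\R)$-valued map and $\rho_t$ is Borel on each stratum $\Pc([-M,M])$; this uses lower semicontinuity, which is the Fatou property. Monotonicity, cash additivity and normalization of $S_t$ follow pointwise from those of $\rho_t$, since conditional laws respect a.s. inequalities and $\F_t$-measurable shifts. Setting $R_t:=S_t\circ\cdots\circ S_{N-1}$ gives time consistency by construction. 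Relevance of $R_0$ holds because $R_t(\epsilon\mathbf 1_A)\ge0$ everywhere and is $>0$ on an $\F_t$-set of positive measure by relevance of $\rho_t$; inductively $R_0(\epsilon\mathbf 1_A)>0$. Adapted law invariance holds because $\law(R_{t+1}(X)\mid\F_t)$ is a measurable image of the iterated conditional law at level $t$: by backward induction, $R_{t}(X)=F_t(\text{level-}t\text{ iterated law of }X)$ for a measurable $F_t$ built from $\rho_t,\dots,\rho_{N-1}$ via push-forwards, so $R_0(X)=F_0(\lawad(X))$. The Fatou property of $R_0$ I would obtain by backward induction. If $X_n\to X$ a.s. and is bounded, then conditional laws converge weakly a.s. (conditional dominated convergence applied to bounded continuous test functions along a countable determining family). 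Lower semicontinuity of $\rho_t$ then gives $\liminf S_t(Y_n)\ge S_t(Y)$, together with monotonicity applied to $\inf_{k\ge n}$ envelopes to pass through compositions. Convexity and coherence transfer pointwise, using locality (Lemma~\ref{lem:locality}) to handle $\F_t$-measurable $\lambda$: on atoms of the conditional law, $\lambda$ is a constant.

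For (i)$\Rightarrow$(ii), the key step is showing that $S_t(Y)(\omega)$ depends only on $\law(Y\mid\F_t)(\omega)$. My plan is to define $\rho_t(\mu):=S_t(Y_\mu)$ for $Y_\mu:=q_\mu(\pi_{t+1})$, where $q_\mu$ is the quantile function. Since $Y_\mu$ is independent of $\F_t$, $S_t(Y_\mu)$ should be constant. To see this, take a measure-preserving filtration-preserving automorphism $T$ of $\mathbb U_N$ that rearranges the coordinates $\pi_1,\dots,\pi_t$ while fixing the later ones. Any $X$ and $X\circ T$ have the same adapted law, so by the reconstruction formula (Lemma~\ref{lem:reconstruction-from-R0}) and adapted law invariance of $R_0$ applied to $\mathbf 1_A(X-m)$, we get $R_t(X\circ T)=R_t(X)\circ T$. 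Applied to $Y_\mu$, which is $T$-invariant, $S_t(Y_\mu)$ is $T$-invariant for all such $T$, hence a.s. constant. Next, for general $Y\in L^\infty(\F_{t+1})$, write $Y=q_{\law(Y\mid\F_t)}(\pi_{t+1})$ up to an $\F_t$-measurable fiberwise rearrangement of $\pi_{t+1}$. I would establish equality with $\rho_t(\law(Y\mid\F_t))$ first for $Y$ whose conditional law takes finitely many values, using locality (Lemma~\ref{lem:locality}) together with an automorphism that fixes $\F_t$ and rearranges $\pi_{t+1}$ fiberwise; these automorphisms again commute with $R_t$ by the argument above. The general case then follows by approximating $\law(Y\mid\F_t)$ by finitely valued kernels with a.s. convergence of the corresponding variables. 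I expect this approximation to be the main obstacle, since Fatou yields only one inequality. For the other inequality I would approximate from above using monotone discretizations $Y_n\downarrow Y$ (rounding up the quantiles) and monotonicity, and approximate from below via Fatou. Uniqueness of $\rho_t$ is immediate by evaluating at $Y_\mu$.

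Finally, the properties of $\rho_t$ follow from those of $S_t$. Relevance and Fatou of $\rho_t$ are inherited from $R_0$ through time consistency and the evaluation $Y_\mu$; here I would use that $R_0$ restricted to $\pi_{t+1}$-measurable variables, composed with $S_{t'}$ for $t'<t$ acting on constants, equals $\rho_t$. Convexity and coherence of $R$ restrict to $S_t$ and hence to $\rho_t$ via $Y_\mu$, using a common comonotone or product coupling on $[0,1]$ realized in coordinate $t+1$; the converse transfer was handled above.
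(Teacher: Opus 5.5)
Your architecture matches the paper's. You define $\rho_t(\mu):=S_t(q_\mu(\pi_{t+1}))$, get constancy from equivariance under adapted automorphisms plus ergodicity, use locality for finitely valued kernels, and prove the converse through a backward recursion of law-level maps. Deriving relevance and Fatou of $\rho_t$ from $\rho_t(\mu)=R_0(q_\mu(\pi_{t+1}))$ is also fine. The gap is where you expected it: passing from finitely valued kernels to a general $M=\law(Y\mid\F_t)$. The remedy you propose does not work. If $Y_n\downarrow Y$ with $S_t(Y_n)=\rho_t(M_n)$, monotonicity gives only $S_t(Y)\le\inf_n\rho_t(M_n)$ and $\rho_t(M)\le\inf_n\rho_t(M_n)$. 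These do not compare $S_t(Y)$ with $\rho_t(M)$. You would need $\rho_t(M_n)\to\rho_t(M)$ along decreasing approximations, i.e.\ continuity from above, and the Fatou property (a lower semicontinuity) does not provide it. For instance, with $\alpha\in(0,1)$, $\rho(\mu)=\tfrac12\,q_\mu(\alpha)+\tfrac12\sup\supp(\mu)$ is relevant, law invariant and Fatou. Yet $\mu=\alpha\delta_0+(1-\alpha)\delta_1$ and $\mu_n=(\alpha-\tfrac1n)\delta_0+(1-\alpha+\tfrac1n)\delta_1$ ($n>1/\alpha$) satisfy $q_{\mu_n}(\pi_{t+1})\downarrow q_\mu(\pi_{t+1})$ a.s., while $\rho(\mu_n)=1\not\to\tfrac12=\rho(\mu)$. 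The Lipschitz bound from cash additivity does not rescue this either. It handles rounding of values, but discretizing the random weights of the kernel gives approximants that converge only a.s., not uniformly.

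The missing idea is that approximation from below alone gives equality, once Fatou is available on both sides. For $Y_n\uparrow Y$ with finitely valued kernels $M_n$, one has $\rho_t(M_n)\uparrow\rho_t(M)$ by monotonicity and Fatou of the conditional lift of $\rho_t$. One also has $S_t(Y_n)\uparrow S_t(Y)$ by monotonicity and Fatou of $S_t$ (\Cref{lem:main-continuity-from-below}). The latter needs the Fatou property of the one-step map $S_t$ itself, which you never establish; you only use Fatou of $R_0$. This propagation (\Cref{lem:R0-fatou-implies-one-step-fatou}) is a genuine step that relies on relevance:
if $X_n\uparrow X$, $V:=\lim_nR_t(X_n)$ and $R_t(X)\ge V+\varepsilon$ on a non-null $A\in\F_t$, then $R_0(\mathbf 1_A(X_n-V))\le0$ for all $n$;
Fatou of $R_0$ then gives $R_0(\mathbf 1_A(X-V))\le0$;
but $R_t(\mathbf 1_A(X-V))\ge\varepsilon\mathbf 1_A$ and relevance force $R_0(\mathbf 1_A(X-V))>0$.
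With this lemma, an increasing discretization closes the argument: first dyadic rounding down of values, which converges uniformly, then dyadic rounding up of the cumulative weights on the resulting finite grid, which makes the quantile resamplings increase a.s.

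Two smaller points. First, a fiberwise rearrangement carrying $q_M(\pi_{t+1})$ to $Y$ is in general only a non-invertible measure-preserving map, not an automorphism. It is cleaner to note directly, as in \Cref{lem:current-kernel-invariance}, that $\mathbf 1_A(Y-m)\simad\mathbf 1_A(\widetilde Y-m)$ whenever the $\F_t$-conditional laws of $Y$ and $\widetilde Y$ agree. Second, to get convexity of $\rho_t$ from that of $S_t$ you must realize arbitrary bounded joint laws as $(f(\pi_{t+1}),g(\pi_{t+1}))$. Special couplings such as the comonotone one do not suffice: Value-at-Risk is comonotonically additive but not convex.
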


\begin{remark}[Role of the standard filtered space]
\label{rem:standard-filtered-space-role}
The assumption that we work on the standard filtered cube is used only as a
richness assumption. What is needed is that, at each date \(t<N\), the model
contains fresh randomness after time \(t\): equivalently, the probability space
is conditionally atomless over \(\F_t\). This allows us to realize arbitrary
\(\F_t\)-measurable conditional laws between times \(t\) and \(t+1\). In this
sense the standard filtered cube plays exactly the same role in the dynamic
setting as an atomless probability space plays in the static theory of
law-invariant risk measures.
\end{remark}

\subsection{The Kupper--Schachermayer rigidity theorem revisited}
\label{subsec:KS-revisited}

We now compare adapted law invariance with the stronger terminal-law
invariance used by Kupper and Schachermayer
\cite{KupperSchachermayer2009}. We keep the loss convention of the present
paper; Kupper and Schachermayer use the opposite sign convention.

\begin{definition}[Terminal-law invariance]
\label{def:terminal-law-invariance}
A dynamic risk measure \(R=(R_t)_{t=0}^N\) is called
\emph{terminal-law invariant}, or law invariant in the
Kupper--Schachermayer sense, if \(R_0\) is law invariant with respect to the
ordinary terminal law:
\[
\law(X)=\law(\widetilde X)\quad\Longrightarrow\quad R_0(X)=R_0(\widetilde X).
\]
\end{definition}

For \(\gamma\in\R\setminus\{0\}\), we define the conditional entropic risk measure 
\begin{align}
\mathcal E_t^\gamma(X)
:=
\frac1\gamma\log\E[e^{\gamma X}\mid\F_t],
\label{eq:KS-conditional-entropic}
\end{align}
with limiting conventions
\begin{align}
\mathcal E_t^0(X):=\E[X\mid\F_t],
\qquad
\mathcal E_t^\infty(X):=\operatorname{ess\,sup}(X\mid\F_t).
\label{eq:KS-conditional-entropic-limits}
\end{align}

\begin{theorem}[Kupper--Schachermayer rigidity, finite-horizon form]
\label{thm:KS-finite-rigidity}
Let \(N\ge2\), and let \(R=(R_t)_{t=0}^N\) be a relevant time-consistent
dynamic risk measure on the standard filtered cube \(\mathbb U_N\). If \(R\)
is terminal-law invariant, then there exists
\(\gamma\in(-\infty,\infty]\) such that
\[
R_t(X)=\mathcal E_t^\gamma(X),
\qquad t=0,\dots,N.
\]
Conversely, each family \(R_t=\mathcal E_t^\gamma\),
\(\gamma\in(-\infty,\infty]\), is relevant, time consistent, and
terminal-law invariant. Moreover, \(R\) is convex if and only if
\(\gamma\in[0,\infty]\), and coherent if and only if
\(\gamma\in\{0,\infty\}\).
\end{theorem}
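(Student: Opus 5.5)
The plan is to reduce to the two-period case and then run a functional-equation argument in the spirit of Kupper--Schachermayer, but using the one-step structure from \Cref{thm:main-dynamic-characterization}.

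\textbf{Step 1: the converse direction.} This is routine. The tower property gives
\[
\mathcal E^\gamma_t(\mathcal E^\gamma_{t+1}(X))=\mathcal E^\gamma_t(X),
\]
and $\mathcal E^\gamma_0$ depends only on $\law(X)$. Relevance follows from strict monotonicity of $x\mapsto e^{\gamma x}$, and from the essential supremum when $\gamma=\infty$. Convexity holds exactly for $\gamma\ge0$, by Hölder's inequality and the concavity of $\log\E e^{\gamma X}$ for $\gamma<0$. Positive homogeneity singles out $\gamma\in\{0,\infty\}$.

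\textbf{Step 2: reduction to one step and law invariance.} Terminal-law invariance implies adapted-law invariance, since $\simad$ refines equality of laws. I would like to invoke \Cref{thm:main-dynamic-characterization} to get $S_t(Y)=\rho_t(\law(Y\mid\F_t))$. However, that theorem assumes the Fatou property, which is not given here. So I would instead prove the one-step representation directly, using only the part of the argument that does not need Fatou. Relabelling coordinates of $\mathbb U_N$ preserves the terminal law, and conditionally on $\F_t$ one can realize the same conditional law by a measure-preserving map of the $(t+1)$-th coordinate. Combined with locality (\Cref{lem:locality}) and the reconstruction \Cref{lem:reconstruction-from-R0}, this shows that $S_t(Y)$ depends only on $\law(Y\mid\F_t)$ pointwise, at least on simple $Y$.

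\textbf{Step 3: the key identity.} Work with two consecutive steps $t,t+1$, say $t=0$ and $N=2$; the general case follows by embedding a two-period problem into consecutive coordinates. Take $X=f(U_1,U_2)$ with simple $f$, so that $\law(X)$ is the mixture $\int\law(f(u,U_2))\,du$. Then
\[
R_0(X)=\rho_0\Bigl(\law\bigl(\rho_1(\law(f(U_1,U_2)\mid U_1))\bigr)\Bigr)
\]
depends only on this mixture. Two consequences follow:
\begin{enumerate}[(a)]
\item If $X$ is $\F_1$-measurable, we get $\rho_0(\mu)=\rho_0(\law(X))$. If instead $X$ is independent of $\F_1$ with the same law $\mu$, we get $\rho_0(\delta_{\rho_1(\mu)})=\rho_1(\mu)$. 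Hence $\rho_0=\rho_1=:\rho$.
\item $\rho$ satisfies the mixture equation
\[
\rho\Bigl(\sum_i p_i\mu_i\Bigr)=\rho\Bigl(\sum_i p_i\delta_{\rho(\mu_i)}\Bigr).
\]
\end{enumerate}
Thus $\rho$ is a monotone, cash-additive functional on simple laws that is "consistent under mixtures". Setting $\phi(x):=\rho(\tfrac12\delta_0+\tfrac12\delta_x)$ and related quantities, I would derive a quasi-arithmetic (certainty-equivalent) representation $\rho(\mu)=u^{-1}\bigl(\int u\,d\mu\bigr)$ with $u$ strictly increasing and continuous. Continuity comes from monotonicity plus the mixture equation, and relevance excludes degenerate cases. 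The case $\operatorname{ess\,sup}$ appears as the limit $\gamma=\infty$ and would be handled separately: if $\rho(p\delta_x+(1-p)\delta_0)=x$ for some $p<1$, then the mixture equation forces $\rho=\operatorname{ess\,sup}$.

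\textbf{Step 4: the entropic form, and the main obstacle.} Cash additivity of a quasi-arithmetic mean forces $u$ to be affine or exponential, by the classical Nagumo--Kolmogorov--de Finetti argument. Concretely, $u^{-1}(\int u(x+m)\,d\mu)=m+u^{-1}(\int u\,d\mu)$ for all $m$ yields the Cauchy-type equation $u(x+m)=a(m)u(x)+b(m)$. Its monotone solutions are $u(x)=e^{\gamma x}$ or $u(x)=x$. This gives $\rho_t=\mathcal E^\gamma$, and extending from simple $X$ to bounded $X$ uses monotonicity and cash additivity; here $\|X-X_n\|_\infty\to0$ suffices, so no Fatou property is needed. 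I expect the main obstacle to be Step 3(b)$\Rightarrow$ quasi-arithmetic representation without any continuity assumption, that is, extracting $u$ from the mixture equation. I would first try to build $u$ on $[0,1]$ by $u(x):=$ the unique $p$ with $\rho(p\delta_1+(1-p)\delta_0)=x$. Existence and uniqueness of such $p$ need relevance, to get strict monotonicity in $p$, together with a mixture-based intermediate-value argument. I would then verify linearity of $\mu\mapsto u(\rho(\mu))$ from the mixture equation.
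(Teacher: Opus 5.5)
Most of your plan is the paper's proof. The mixture identity $\rho(\sum_i p_i\mu_i)=\rho(\sum_i p_i\delta_{\rho(\mu_i)})$ is \Cref{lem:KS-initial-decomposable-two-period}. The endpoint analysis and the passage to a quasi-arithmetic mean are \Cref{lem:KS-finite-lottery-reduction}. Your calibration $u(x):=p$ with $\rho(p\delta_1+(1-p)\delta_0)=x$ is a hands-on substitute for citing \Cref{thm:finite-vNM-deFinetti}: once strict internality and continuity in $p$ are established, linearity of $u\circ\rho$ follows directly from the mixture identity. The Cauchy-type step is the paper's cash-additivity lemma. You show that $S_1(g(\pi_2))$ is constant via the adapted-automorphism and ergodicity lemmas, rather than the direct swap argument of \Cref{lem:KS-independent-two-period}. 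That is legitimate, since those lemmas (and current-kernel invariance) do not use Fatou.

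The gap is in Step 2 and in the final assembly of Step 4. Without Fatou, these tools give $S_t(Y)=\rho_t(\law(Y\mid\F_t))$ only when the kernel $\law(Y\mid\F_t)$ is \emph{finitely valued}, not for all simple $Y$. For example, $Y=\mathbf 1_{\{\pi_2\le\pi_1\}}$ has $\law(Y\mid\F_1)=\pi_1\delta_1+(1-\pi_1)\delta_0$. Passing to such kernels is exactly where the proof of \Cref{lem:one-step-from-R0} uses continuity from below. Step 3 survives, because the mixture identity needs only finitely many $\mu_i$.

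Your conclusion ``$\rho_t=\mathrm e^\gamma$, hence $R_t=\mathcal E^\gamma_t$ on simple $X$, then extend by $\|X-X_n\|_\infty\to0$'' fails for $t\ge1$. The positions on which you actually have the backward recursion are not uniformly dense. Indeed, if $h$ had a finitely valued kernel given $\F_1$ and $\|h-Y\|_\infty<\tfrac12$ for the $Y$ above, then $Y=\mathbf 1_{\{h>1/2\}}$ would have a finitely valued kernel too.

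The repair is what the paper does, and it makes the general one-step representation unnecessary. At $t=0$ your approximation argument works. Terminal-law invariance gives $R_0(X)=\rho(\law(X))$ for every $X\in L^\infty(\F_N)$. So once $\rho=\mathrm e^\gamma$ on finitely supported laws, approximating $X$ uniformly by $2^{-n}\lfloor 2^nX\rfloor$ yields $R_0=\mathcal E^\gamma_0$. Since $(\mathcal E^\gamma_t)_t$ is relevant, time consistent and has the same initial functional, \Cref{lem:reconstruction-from-R0} gives $R_t=\mathcal E^\gamma_t$ for all $t$. You then need neither $\rho_t=\rho_{t+1}$ for $t\ge1$ nor any embedding into consecutive coordinates: working with $\pi_1,\pi_2$ at $t=0$ already determines $R_0$.

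Two further points deserve more than an assertion. First, the claim that one equality $\rho(p\delta_x+(1-p)\delta_0)=x$ with $p\in(0,1)$, $x>0$ forces $\max\supp$ needs work. It must be propagated to all $p$ via the binary reduction identity, and to all scales $x$ via downward closedness and the doubling argument in \Cref{lem:KS-finite-lottery-reduction}. Second, for $\gamma<0$ it is $\frac1\gamma\log\E e^{\gamma X}$, not $\log\E e^{\gamma X}$, that is concave. Non-convexity also needs a strict two-point example.
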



The original Kupper--Schachermayer theorem is formulated on the infinite
standard filtered cube, with its canonical coordinate filtration, and its proof
uses the infinite-horizon structure, in particular through a discrete
Skorohod-embedding argument. The argument below shows that the same rigidity
mechanism is already present in two periods.

Indeed, suppose first that \(R\) is given on the finite standard cube
\(\mathbb U_N\), \(N\ge2\). Consider the coarser two-step filtration
\[
\G_0:=\F_0,\qquad
\G_1:=\F_1,\qquad
\G_2:=\F_N,
\]
and define
\[
\widehat R_0:=R_0,\qquad
\widehat R_1:=R_1,\qquad
\widehat R_2:=\id .
\]
The filtered space with filtration \((\G_0,\G_1,\G_2)\) is isomorphic to the
standard two-period cube. Moreover, terminal-law invariance, relevance,
monotonicity and cash additivity are inherited by \(\widehat R\), and the
time-zero consistency condition
\[
\widehat R_0(X)=\widehat R_0(\widehat R_1(X))
\]
follows from the time consistency of \(R\). By
\Cref{lem:time-consistency-equivalences}, \(\widehat R\) is therefore a
relevant time-consistent two-period dynamic risk measure. The two-period
rigidity theorem then gives
\begin{align}
R_0(X)=\mathcal E_0^\gamma(X),
\qquad
X\in L^\infty(\F_N),
\label{eq:KS-R0-entropic-from-two-period}
\end{align}
for some \(\gamma\in(-\infty,\infty]\). Once
\eqref{eq:KS-R0-entropic-from-two-period} holds, the reconstruction lemma
yields \(R_t=\mathcal E_t^\gamma\) for every \(t=0,\dots,N\). Thus \(N=2\)
rigorously implies the finite-horizon theorem for all \(N\ge2\). The same
argument yields that the infinite-horizon Kupper--Schachermayer rigidity
theorem is a consequence of the two-period theorem.

Let us also mention that, under additional convexity, strong sensitivity and
Lebesgue-type regularity assumptions, the two-period entropic conclusion is
already contained in Föllmer's one-step rigidity theorem
\cite[Theorem~3.4]{Follmer2014Spatial}. Indeed, in the loss convention used
here, applying that theorem to
\(\widehat\rho(Z):=R_0(-Z)\) and
\(\widehat\rho_0(Z):=R_1(-Z)\) turns time consistency into Föllmer's
\(\mathcal F_1\)-consistency condition. Our proof of \Cref{thm:KS-finite-rigidity} is self-contained,
does not impose convexity or Fatou/Lebesgue regularity, and also covers the
nonconvex and endpoint cases appearing in \(\gamma\in(-\infty,\infty]\).


\medskip

Law invariance in the sense of Kupper--Schachermayer requires that, at time
zero, the risk of a position depend only on its terminal distribution. It
therefore abstracts from the time at which uncertainty is resolved and from the
information mechanism through which it is resolved. In a setup where the flow of information is modeled in terms of a filtration this is
a strong invariance requirement.

For instance, on \(\mathbb U_2\), let
\[
X:=\mathbf 1_{\{\pi_1\le p\}},
\qquad
Y:=\mathbf 1_{\{\pi_2\le p\}},
\qquad p\in(0,1).
\]
where $\pi_t :[0,1]^2 \to [0,1]$ is the projection onto the $t$-th variable.
Then \(X\) and \(Y\) have the same terminal law. But \(X\) is already known at
time \(1\), while \(Y\) is still unresolved at time \(1\). Thus their adapted laws
are different. Terminal-law invariance forces \(R_0(X)=R_0(Y)\); adapted law
invariance does not. 
The Kupper--Schachermayer rigidity theorem shows that, once this abstraction
from the information structure is combined with relevance and time consistency
on a rich filtered space, the entropic family is singled out.

\subsection{The adapted Kusuoka representation}
\label{subsec:adapted-kusuoka-results}

We now record the coherent consequence of the main characterization theorem.
For \(\mu\in\Pc_b(\R)\), consider the quantile function
\[
q_\mu(u):=\inf\{x\in\R:\mu((-\infty,x])\ge u\},
\qquad u\in(0,1),
\]
and define, for \(\alpha\in[0,1)\), the Average Value-at-Risk,
\begin{align}
\AVaR_\alpha(\mu)
&:=
\frac{1}{1-\alpha}\int_\alpha^1 q_\mu(u)\,du,
&
\AVaR_1(\mu)
&:=\sup\supp(\mu).
\label{eq:AVaR-law-definition}
\end{align}
For \(Y\in L^\infty(\F_{t+1})\), we write
\begin{align}
\AVaR_\alpha(Y\mid\F_t)
:=
\AVaR_\alpha(\law(Y\mid\F_t)).
\label{eq:conditional-AVaR-definition}
\end{align}

\begin{corollary}[Adapted Kusuoka representation]
\label{cor:adapted-kusuoka}
Let \(R=(R_t)_{t=0}^N\) be a relevant time-consistent dynamic risk measure on
the standard filtered cube \(\mathbb U_N\), and assume that \(R_0\) has the
Fatou property. Then the following are equivalent.

\begin{enumerate}[(i)]
\item \(R\) is adapted-law invariant and coherent.

\item For every \(t=0,\dots,N-1\), there exists a nonempty convex weakly closed
set \(\mathcal M_t\subset\Pc([0,1])\) such that, for every
\(Y\in L^\infty(\F_{t+1})\),
\begin{align}
S_t(Y)
=
\sup_{\nu\in\mathcal M_t}
\int_{[0,1]}\AVaR_\alpha(Y\mid\F_t)\,\nu(d\alpha)
\qquad\text{a.s.}
\label{eq:adapted-kusuoka-one-step}
\end{align}
\end{enumerate}

Put otherwise, the dynamic risk measure is the
nested conditional Kusuoka operator
\begin{align}
R_N(X)&=X,
&
R_t(X)
&=
\sup_{\nu\in\mathcal M_t}
\int_{[0,1]}
\AVaR_\alpha(R_{t+1}(X)\mid\F_t)\,\nu(d\alpha),
\quad t=N-1,\dots,0.
\label{eq:adapted-kusuoka-nested}
\end{align}

Conversely, every choice of nonempty convex weakly closed sets
\(\mathcal M_t\subset\Pc([0,1])\), \(t=0,\dots,N-1\), defines by
\eqref{eq:adapted-kusuoka-nested} a relevant time-consistent dynamic risk
measure whose initial functional has the Fatou property and which is
adapted-law invariant and coherent.
\end{corollary}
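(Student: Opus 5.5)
The plan is to reduce the corollary to \Cref{thm:main-dynamic-characterization} combined with the classical static Kusuoka theorem, applied one step at a time.

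\emph{From (i) to (ii).} Assume (i). Since \(R\) is adapted-law invariant, relevant, time consistent, and \(R_0\) is Fatou, \Cref{thm:main-dynamic-characterization} provides relevant static law-invariant Fatou risk measures \(\rho_t\) with \(S_t(Y)=\rho_t(\law(Y\mid\F_t))\). Because \(R\) is coherent, the last part of that theorem shows that each \(\rho_t\) is coherent. By the static Kusuoka theorem (\cite{Kusuoka2001}, in the Fatou version of \cite{JoScTo06}, stated in the loss convention on an atomless space), there is a set \(\mathcal M_t\subset\Pc([0,1])\) with
\[
\rho_t(\mu)=\sup_{\nu\in\mathcal M_t}\int_{[0,1]}\AVaR_\alpha(\mu)\,\nu(d\alpha).
\]
We may replace \(\mathcal M_t\) by its closed convex hull. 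The integrand \(\nu\mapsto\int\AVaR_\alpha(\mu)\,\nu(d\alpha)\) is affine, and it is weakly lower semicontinuous because \(\alpha\mapsto\AVaR_\alpha(\mu)\) is bounded, nondecreasing and lower semicontinuous, so the supremum is unchanged. Taking \(\mu=\law(Y\mid\F_t)\) pointwise yields \eqref{eq:adapted-kusuoka-one-step}. The nested form \eqref{eq:adapted-kusuoka-nested} then follows from \eqref{eq:main-backward-recursion}.

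\emph{From (ii) to (i), and the converse statement.} Given nonempty convex weakly closed sets \(\mathcal M_t\), define
\[
\rho_t(\mu):=\sup_{\nu\in\mathcal M_t}\int\AVaR_\alpha(\mu)\,\nu(d\alpha).
\]
We then check the required static properties, each using that a supremum of suitable functionals inherits them.
\begin{enumerate}[(1)]
\item Monotonicity, cash additivity, normalization, positive homogeneity and convexity (on random variables) hold because each \(\AVaR_\alpha\) is a law-invariant coherent risk measure.
\item The Fatou property holds because each \(\AVaR_\alpha\) is lower semicontinuous on \(\Pc([-M,M])\); for \(\alpha=1\), \(\sup\supp\) is weakly lower semicontinuous. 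A supremum of lower semicontinuous maps is again lower semicontinuous.
\item Relevance holds because \(\AVaR_\alpha(\varepsilon\mathbf 1_A)\ge\varepsilon\P(A)>0\) for every \(\alpha\). Hence \(\rho_t(\law(\varepsilon\mathbf 1_A))\ge\varepsilon\P(A)\) for any choice of \(\nu\in\mathcal M_t\), which uses \(\mathcal M_t\neq\emptyset\).
\end{enumerate}
The converse part of \Cref{thm:main-dynamic-characterization} then shows that \eqref{eq:adapted-kusuoka-nested} defines a relevant time-consistent dynamic risk measure whose initial functional is Fatou and adapted-law invariant, and which is coherent since the \(\rho_t\) are coherent. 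In particular, if \(R\) satisfies (ii), then its one-step maps coincide with those of this construction. By the backward recursion, \(R\) equals the constructed measure, so (i) holds.

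\emph{Main obstacle.} The delicate step is matching the static Kusuoka theorem to the present conventions. One must translate it from the gain convention, check that the quantile-based \(\AVaR_\alpha\) in \eqref{eq:AVaR-law-definition} coincides with the version used there (including the endpoint \(\alpha=1\)), and confirm that the Fatou property on \(\Pc_b(\R)\) (lower semicontinuity on each \(\Pc([-M,M])\)) is the regularity the theorem needs. Once this is settled, everything else is bookkeeping built on \Cref{thm:main-dynamic-characterization}.
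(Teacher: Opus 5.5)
Your proposal is correct and follows essentially the same route as the paper. For (i)$\Rightarrow$(ii), the main characterization theorem gives coherent Fatou one-step functionals $\rho_t$, and the static Kusuoka theorem represents them; for the converse, you apply the converse part of the main theorem to $\rho_t(\mu)=\sup_{\nu\in\mathcal M_t}\int\AVaR_\alpha(\mu)\,\nu(d\alpha)$. The remaining differences are cosmetic: you check coherence, the Fatou property and relevance of this supremum by hand instead of citing the ``if'' direction of the static Kusuoka theorem (for the Fatou property, lower semicontinuity of each map $\mu\mapsto\int\AVaR_\alpha(\mu)\,\nu(d\alpha)$ on $\Pc([-M,M])$ comes from Fatou's lemma), and your closed-convex-hull step is unnecessary because the static theorem as stated already yields a convex weakly closed set.
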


\begin{remark}
\label{rem:kusuoka-nested-not-static}
The representation is nested. It is not, in general, a single static Kusuoka
representation of \(R_0\). The dynamic structure is carried by the successive
conditional laws of \(R_{t+1}(X)\) given \(\F_t\). The one-step risk measures
\(\rho_t\) are uniquely determined by \(R\), but the representing sets
\(\mathcal M_t\) need not be unique.
\end{remark}

\section{Proof of the main characterization}
\label{sec:dynamic}

We prove the results announced in \Cref{subsec:main-characterization}. We use
the locality and reconstruction facts from
\Cref{lem:locality,lem:reconstruction-from-R0}.
Throughout this section the underlying filtered probability space is the
standard filtered cube \(\mathbb U_N\), unless explicitly stated otherwise, and
\(S_t:=R_t|_{L^\infty(\F_{t+1})}\).

Before we give the actual proof, we collect some auxiliary lemmas.

\begin{definition}[One-step Fatou property]
\label{def:one-step-fatou}
A time-consistent dynamic risk measure \(R=(R_t)_{t=0}^N\) has the
\emph{one-step Fatou property} if each \(S_t\) has the Fatou property on
\(L^\infty(\F_{t+1})\): whenever \(Y_n,Y\in L^\infty(\F_{t+1})\),
\(\sup_n\|Y_n\|_\infty<\infty\), and \(Y_n\to Y\) a.s.,
\[
S_t(Y)\le\liminf_{n\to\infty}S_t(Y_n)
\qquad\text{a.s.}
\]
\end{definition}

\begin{lemma}[Fatou regularity propagates to one-step maps]
\label{lem:R0-fatou-implies-one-step-fatou}
Let \(R=(R_t)_{t=0}^N\) be time consistent and relevant. If \(R_0\) has the
Fatou property, then every \(R_t\), and in particular every one-step map
\(S_t\), has the Fatou property.
\end{lemma}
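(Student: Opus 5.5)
Since \(S_t\) is the restriction of \(R_t\) to \(L^\infty(\F_{t+1})\), it suffices to prove the Fatou property for every \(R_t\). Fix \(t\), a bounded sequence \(X_n\to X\) a.s. with \(\sup_n\|X_n\|_\infty\le C\), and set
\[
V_n:=R_t(X_n),\qquad V:=R_t(X),\qquad W:=\liminf_{n} V_n .
\]
Monotonicity and normalization give \(|V_n|\le C\), hence \(W\in L^\infty(\F_t)\). The claim is \(V\le W\) a.s., and I will argue by contradiction, using relevance exactly as in the proof of \Cref{lem:reconstruction-from-R0}.

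Suppose \(V>W\) on a set of positive probability. Then there are \(\varepsilon>0\) and \(A\in\F_t\) with \(\P(A)>0\) and \(V\ge W+2\varepsilon\) on \(A\). To avoid dealing with \(\liminf\) directly, I pass to the increasing \(\F_t\)-measurable minorants \(W_k:=\inf_{n\ge k}V_n\), which increase to \(W\). By Egorov's theorem I shrink \(A\) (keeping \(\P(A)>0\)) so that \(W_k\ge W-\varepsilon\) on \(A\) for all \(k\ge k_0\). In particular, for every \(n\ge k_0\) we have \(V_n\ge W_n\ge W-\varepsilon\) on \(A\), which is the wrong direction for a direct bound on \(V_n\). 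So instead I work with
\[
Y_n:=\mathbf 1_A\,(X_n-W),\qquad Y:=\mathbf 1_A\,(X-W).
\]
These satisfy \(Y_n\to Y\) a.s. and are uniformly bounded. Locality (\Cref{lem:locality}) and cash additivity give
\[
R_t(Y_n)=\mathbf 1_A(V_n-W),\qquad R_t(Y)=\mathbf 1_A(V-W)\ge 2\varepsilon\mathbf 1_A .
\]
By the time-zero consistency of \Cref{lem:time-consistency-equivalences}, \(R_0(Y_n)=R_0\bigl(\mathbf 1_A(V_n-W)\bigr)\), and this quantity is \(\ge R_0(2\varepsilon\mathbf 1_A)>0\) by relevance, but only along the "good" indices.

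The main obstacle is that \(V_n\) is only eventually small along a subsequence on each \(\omega\), and the subsequence depends on \(\omega\). Fatou for \(R_0\) requires a single a.s.-converging sequence, so I will build one by pasting. For fixed \(k\), partition \(A\) into \(\F_t\)-sets \(A_{k,n}\), \(n\ge k\), on which \(V_n\le W+\varepsilon/2\); such a countable partition exists by the definition of \(\liminf\), choosing the first index \(n\ge k\) that works. Then define
\[
Z_k:=\sum_{n\ge k}\mathbf 1_{A_{k,n}}X_n+\mathbf 1_{A^c}X .
\]
This is uniformly bounded, and \(Z_k\to X\) a.s. as \(k\to\infty\) because the indices used tend to infinity. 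Locality, applied countably via truncation, which is valid since the partition sets are \(\F_t\)-measurable and the limit is controlled by monotonicity, gives \(R_t(Z_k)\le \mathbf 1_A(W+\varepsilon/2)+\mathbf 1_{A^c}V\).

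Now let \(U:=\mathbf 1_A(W+\varepsilon/2)+\mathbf 1_{A^c}V\in L^\infty(\F_t)\). Time-zero consistency and monotonicity yield \(R_0(Z_k-U)=R_0(R_t(Z_k)-U)\le 0\). Applying the Fatou property of \(R_0\) to the bounded a.s.-convergent sequence \(Z_k-U\to X-U\), I obtain \(R_0(X-U)\le0\). I then repeat the construction with \(\mathbf 1_{B}(\cdot)\) inserted for an arbitrary \(B\in\F_t\); this still works by locality, so that \(R_0(\mathbf 1_B(X-U))\le 0\) for all \(B\in\F_t\). \Cref{lem:reconstruction-from-R0} then gives \(V\le U\), i.e. \(V\le W+\varepsilon/2\) on \(A\), which contradicts \(V\ge W+2\varepsilon\) on \(A\).

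Hence \(V\le W\) a.s., which is the Fatou property for \(R_t\), and therefore for each one-step map \(S_t\).
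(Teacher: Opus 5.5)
Your proof is correct, but it takes a different route from the paper's.

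\textbf{The paper's route.} The paper first proves continuity from below. If \(X_n\uparrow X\), then \(V_n:=R_t(X_n)\uparrow V\le R_t(X)\). Because \(V_n\le V\) holds for every \(n\) simultaneously, locality gives \(R_0(\mathbf 1_A(X_n-V))\le 0\) for all \(n\) with no selection needed. The Fatou property of \(R_0\) together with relevance then yields a contradiction on \(A\). The general case is reduced to this one via \(Y_k:=\inf_{n\ge k}X_n\uparrow X\).

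\textbf{Your route.} You attack the \(\liminf\) directly. The indices at which \(V_n\) is close to \(W=\liminf_n V_n\) depend on \(\omega\), so you paste the \(X_n\) along the \(\F_t\)-measurable ``first good index'' partition \(A_{k,n}\). This produces one sequence \(Z_k\to X\) with \(R_t(Z_k)\le U\). You then apply the Fatou property of \(R_0\) to \(\mathbf 1_B(Z_k-U)\) and conclude with \Cref{lem:reconstruction-from-R0}, whose proof is exactly the paper's relevance contradiction.

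\textbf{What each buys.} Your argument handles general a.s.\ convergence in one pass, as a measurable-selection argument that only needs countable locality. The paper's two-step argument uses monotonicity of \(V_n\) to avoid the selection altogether, so it is shorter.

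\textbf{Points to tighten.}
\begin{itemize}
\item Countable locality needs no truncation or limit. For each \(n\) you have \(Z_k=\mathbf 1_{A_{k,n}}X_n+\mathbf 1_{A_{k,n}^c}Z_k\), so \Cref{lem:locality} gives \(\mathbf 1_{A_{k,n}}R_t(Z_k)=\mathbf 1_{A_{k,n}}V_n\) directly. Justifying it instead by a limit inside \(R_t\) would be circular, since continuity of \(R_t\) is what you are proving.
\item Delete the Egorov/\(Y_n\) paragraph. It is never used, and it contains a false claim. Along ``good'' indices \(V_n\) is close to \(W\), so \(R_0(\mathbf 1_A(V_n-W))\ge R_0(2\varepsilon\mathbf 1_A)\) does not hold there. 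The inequality holds only for the limit \(Y\).
\item The proof by contradiction is unnecessary. Take \(A=\Omega\) in your construction: it gives \(V\le W+\varepsilon/2\) a.s.\ for every \(\varepsilon>0\) directly.
\end{itemize}
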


\begin{proof}
It is enough to prove continuity from below. Let \(X_n\uparrow X\) a.s.\ with a
common bound, and set \(V_n:=R_t(X_n)\), \(V:=\lim_n V_n\). By monotonicity,
\(V\le R_t(X)\). If the inequality is strict on some \(A\in\F_t\), then
\(R_t(X)\ge V+\varepsilon\) on a smaller non-null \(A\), for some
\(\varepsilon>0\). Since \(V_n\le V\), locality gives
\[
R_t(\mathbf 1_A(X_n-V))\le R_t(\mathbf 1_A(X_n-V_n))=0,
\]
hence \(R_0(\mathbf 1_A(X_n-V))\le0\). The variables
\(\mathbf 1_A(X_n-V)\) increase a.s. to \(\mathbf 1_A(X-V)\), so the Fatou
property of \(R_0\) gives \(R_0(\mathbf 1_A(X-V))\le0\). But
\(R_t(\mathbf 1_A(X-V))=\mathbf 1_A(R_t(X)-V)\ge\varepsilon\mathbf 1_A\),
and time consistency, monotonicity, and relevance yield
\(R_0(\mathbf 1_A(X-V))>0\), a contradiction. Therefore
\(R_t(X_n)\uparrow R_t(X)\).

Now let \(X_n\to X\) a.s. with a common bound, and put
\(Y_k:=\inf_{n\ge k}X_n\). Then \(Y_k\uparrow X\), and
\(Y_k\le X_n\) for \(n\ge k\). By continuity from below and monotonicity,
\[
R_t(X)=\lim_k R_t(Y_k)
\le \liminf_{n\to\infty}R_t(X_n).
\]
Thus \(R_t\) has the Fatou property. Restricting to
\(L^\infty(\F_{t+1})\) gives the one-step Fatou property of \(S_t\).
\end{proof}

As in basic
measure theory, Fatou implies monotone convergence.

\begin{lemma}[Continuity from below]
\label{lem:main-continuity-from-below}
Let \(S:L^\infty(\G)\to L^\infty(\cH)\) be monotone and have the Fatou
property. If \(X_n\uparrow X\) a.s. and \(\sup_n\|X_n\|_\infty<\infty\), then
\[
S(X_n)\uparrow S(X)
\qquad\text{a.s.}
\]
The same holds for the lift of a static law-invariant risk measure with the
Fatou property.
\end{lemma}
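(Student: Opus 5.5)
The plan is to derive the first claim from the Fatou property of \(S\) together with monotonicity, as in the monotone convergence theorem. Let \(X_n\uparrow X\) a.s.\ with \(\sup_n\|X_n\|_\infty<\infty\). Since \(X_n\le X_{n+1}\le X\), monotonicity gives
\[
S(X_n)\le S(X_{n+1})\le S(X)\quad\text{a.s.}
\]
Thus \(V:=\lim_n S(X_n)\) exists a.s. as an increasing limit and satisfies \(V\le S(X)\). We only need to handle the countably many null sets on which these inequalities fail, and take their union.

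For the reverse inequality, we apply the Fatou property to the sequence \(X_n\), which converges a.s.\ to \(X\) and is uniformly bounded. This gives
\[
S(X)\le\liminf_{n\to\infty}S(X_n)=V\quad\text{a.s.}
\]
Hence \(S(X)=V=\lim_n S(X_n)\), and the convergence is monotone, i.e.\ \(S(X_n)\uparrow S(X)\) a.s.

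For the lift \(\bar\rho(X)=\rho(\law(X))\) of a static law-invariant risk measure with the Fatou property, the argument is the same, only with real numbers in place of random variables. Here \(\bar\rho\) is monotone by definition, since the properties of \(\rho\) are understood through its lift. So \(\bar\rho(X_n)\) is a nondecreasing real sequence bounded above by \(\bar\rho(X)\). The Fatou property of the lift then gives \(\bar\rho(X)\le\liminf_n\bar\rho(X_n)\). If instead one uses the equivalent formulation of lower semicontinuity on \(\Pc([-M,M])\), one notes that \(X_n\to X\) a.s.\ implies \(\law(X_n)\to\law(X)\) weakly, with all laws supported in \([-M,M]\) for \(M=\sup_n\|X_n\|_\infty\).

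There is no real obstacle here. The only point requiring care is that the a.s.\ statements hold simultaneously for all \(n\), which is handled by the countable union of null sets noted above.
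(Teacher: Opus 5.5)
Your proposal is correct and follows the paper's proof exactly: monotonicity gives \(S(X_n)\uparrow V\le S(X)\), and the Fatou property gives \(S(X)\le\liminf_n S(X_n)=V\). Your remarks on the countable union of null sets and on the lifted case, via weak convergence on \(\Pc([-M,M])\), are harmless elaborations of the same argument.
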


\begin{proof}
By monotonicity, \(S(X_n)\uparrow V\le S(X)\). The Fatou property gives
\(S(X)\le\liminf_n S(X_n)=V\).
\end{proof}

\begin{lemma}[Conditional lifts]
\label{lem:main-conditional-lifts}
Let \(\rho:\Pc_b(\R)\to\R\) be a Borel static law-invariant risk measure and
set
\[
S_t^\rho(Y):=\rho(\law(Y\mid\F_t)),
\qquad Y\in L^\infty(\F_{t+1}).
\]
Then \(S_t^\rho\) is normalized, monotone, and cash additive. If \(\rho\) has
the Fatou property, is convex, or is coherent, then \(S_t^\rho\) has the
corresponding one-step Fatou, conditional convexity, or conditional coherence
property. Conversely, these one-step properties imply the corresponding
properties of \(\rho\) by testing on one-step spaces with trivial current
\(\sigma\)-field.
\end{lemma}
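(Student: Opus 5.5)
The plan is to work with a regular version of the conditional law. Fix \(Y\in L^\infty(\F_{t+1})\) with \(|Y|\le M\). On the standard cube we can write \(Y=f(\pi_1,\dots,\pi_{t+1})\) for a bounded Borel function \(f\). A regular version of \(\law(Y\mid\F_t)\) is then
\[
\kappa(\omega)=\law\bigl(f(\pi_1(\omega),\dots,\pi_t(\omega),U)\bigr),
\qquad U\sim\lambda,
\]
which is a Borel map \(\Omega\to\Pc([-M,M])\). Since \(\rho\) is Borel, \(S_t^\rho(Y)=\rho\circ\kappa\) is \(\F_t\)-measurable. Any other version of \(\kappa\) agrees with this one a.s., so \(S_t^\rho(Y)\) is well defined in \(L^\infty(\F_t)\). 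It is bounded by \(M\), because \(\rho\) is monotone and cash additive, so \(\rho(\mu)\in[\min\supp\mu,\max\supp\mu]\).

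The algebraic properties are checked pointwise in \(\omega\) on the fibres, as follows.
\begin{enumerate}[(1)]
\item Normalization: \(\law(0\mid\F_t)=\delta_0\) and \(\rho(\delta_0)=0\).
\item Cash additivity: for \(m\in L^\infty(\F_t)\), \(\law(Y+m\mid\F_t)(\omega)\) is the shift of \(\kappa(\omega)\) by \(m(\omega)\). Static cash additivity of the lift, applied to the representative \(f(\omega_{\le t},U)+m(\omega)\), gives the claim.
\item Monotonicity: if \(Y\le Y'\) a.s., then for a.e. \(\omega\) the sections satisfy \(f(\omega_{\le t},\cdot)\le f'(\omega_{\le t},\cdot\,)\) \(\lambda\)-a.e., by Fubini. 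Static monotonicity then applies on \(([0,1],\lambda)\).
\item Conditional convexity and coherence: for \(\F_t\)-measurable \(\lambda\), the section of \(\lambda Y+(1-\lambda)Y'\) at \(\omega\) is \(\lambda(\omega)f(\omega,\cdot)+(1-\lambda(\omega))f'(\omega,\cdot)\), with \(\lambda(\omega)\) a constant. Static convexity and positive homogeneity of \(\bar\rho\) on \([0,1]\) therefore give the inequality and equality fibrewise.
\end{enumerate}
Throughout, \(\bar\rho(g(U))=\rho(\law(g(U)))\), so each step just identifies the fibre law.

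For the one-step Fatou property, let \(Y_n\to Y\) a.s. with common bound \(M\). By Fubini, for a.e. \(\omega\) the sections satisfy \(f_n(\omega_{\le t},u)\to f(\omega_{\le t},u)\) for \(\lambda\)-a.e. \(u\). The static Fatou property of \(\bar\rho\) on \(([0,1],\lambda)\) then gives
\[
\rho(\kappa(\omega))\le\liminf_n\rho(\kappa_n(\omega)).
\]
Equivalently, \(\kappa_n(\omega)\to\kappa(\omega)\) weakly in \(\Pc([-M,M])\), and we use lower semicontinuity of \(\rho\). This yields the a.s. inequality. The one point requiring care is to choose the versions \(f_n,f\) so that the a.s. convergence transfers to sections; taking Borel representatives and discarding a null set suffices.

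For the converse, specialize to \(t=0\), where \(\F_0\) is trivial. Then \(S_0^\rho(Y)=\rho(\law(Y))\) for \(Y\in L^\infty(\F_1)\), and \(([0,1],\lambda)\) is atomless, so every \(\mu\in\Pc_b(\R)\) is realized there, and pairs or sequences can be realized jointly. Hence one-step convexity, coherence or Fatou of \(S_0^\rho\) are literally the static properties of \(\bar\rho\) on an atomless space. For general \(t\), take \(Y=g(\pi_{t+1})\); its conditional law is the deterministic \(\law(g(U))\), which reduces to the same case. The only mildly delicate step overall is the measurable-selection and version bookkeeping for the conditional laws. The standard-cube product structure makes this explicit through the sections of \(f\), so I expect no real obstacle.
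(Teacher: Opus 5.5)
Your proposal is correct and follows the paper's argument: each property is checked pointwise in the current state by applying the corresponding static property to the conditional (fibre) laws, and the Fatou property follows from a.s. weak convergence of the conditional laws together with lower semicontinuity of \(\rho\) on bounded-support strata. Your Fubini/section description on the cube is an explicit realization of the paper's regular conditional law of \((Y,Y_1,Y_2,\dots)\) given \(\F_t\). Your converse, via \(t=0\) or via \(Y=g(\pi_{t+1})\), is exactly the paper's ``testing on one-step spaces with trivial current \(\sigma\)-field''.
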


\begin{proof}
The assertions are pointwise in the current state. For cash additivity, if
\(m\in L^\infty(\F_t)\), then
\[
\law(Y+m\mid\F_t)(\omega)
=
\big(\tau_{m(\omega)}\big)_\#\law(Y\mid\F_t)(\omega),
\]
and cash additivity of \(\rho\) gives \(S_t^\rho(Y+m)=S_t^\rho(Y)+m\).
Monotonicity, convexity, and positive homogeneity follow by applying the
corresponding static property to conditional joint laws.

For the Fatou property, take a regular conditional law of
\(\big(Y,Y_1,Y_2,\dots\big)\) given \(\F_t\). If \(Y_n\to Y\) a.s. and the
sequence is uniformly bounded, then for a.e. \(\omega\),
\(\law(Y_n\mid\F_t)(\omega)\) converges weakly to
\(\law(Y\mid\F_t)(\omega)\) on a common compact support. Lower semicontinuity
of \(\rho\) on bounded-support strata gives the claim.
\end{proof}

For \(t\ge1\), call a measure-preserving automorphism \(T\) of the first
\(t\) coordinates \emph{adapted} if
\(T^{-1}\sigma(\pi_1,\dots,\pi_s)=\sigma(\pi_1,\dots,\pi_s)\) for
\(s=0,\dots,t\). We extend such \(T\) to the full cube by leaving the
coordinates \(t+1,\dots,N\) fixed, and write \(U_TZ:=Z\circ T\).

\begin{lemma}[Equivariance under adapted automorphisms]
\label{lem:adapted-automorphism-equivariance}
Let \(R\) be relevant and time consistent, and assume that \(R_0\) is
adapted-law invariant. If \(T\) is an adapted automorphism of the first
\(t\) coordinates, then
\[
S_t(U_TY)=U_TS_t(Y),
\qquad Y\in L^\infty(\F_{t+1}).
\]
\end{lemma}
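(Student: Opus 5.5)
The plan is to use the reconstruction formula of \Cref{lem:reconstruction-from-R0}: it expresses $R_t$, and hence $S_t$, purely through $R_0$. The task then reduces to showing that $U_T$ commutes with the ingredients of that formula.

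\emph{Step 1: $U_T$ preserves $R_0$.} We first show $R_0(U_TX)=R_0(X)$ for every $X\in L^\infty(\F_N)$. Extend $T$ to the whole cube by leaving coordinates $t+1,\dots,N$ unchanged. The extended map is a measure-preserving automorphism $\widehat T$ of $[0,1]^N$ with $\widehat T^{-1}\F_s=\F_s$ for all $s=0,\dots,N$. For $s\le t$ this is the adaptedness of $T$. For $s>t$ it holds because $\F_s=\F_t\otimes\sigma(\pi_{t+1},\dots,\pi_s)$ and $\widehat T$ acts as $T\times\id$.

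Since $\widehat T$ is measure preserving and filtration preserving, conditional laws transform covariantly: $\law(Z\circ\widehat T\mid\F_s)=\law(Z\mid\F_s)\circ\widehat T$ a.s. for any state-valued $Z$. To check this, take the $\F_s$-measurable kernel $\omega\mapsto\law(Z\mid\F_s)(\omega)$ and compose it with $\widehat T$. The result is $\F_s$-measurable because $\widehat T^{-1}\F_s=\F_s$. The defining identity $\E[f(Z\circ\widehat T)g]=\E[\int f\,d\law(Z\mid\F_s)\circ\widehat T\cdot g]$ for $g$ $\F_s$-measurable follows by writing $g=g'\circ\widehat T$ and using invariance of $\lambda^N$.

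Iterating this along \eqref{eq:adapted-law-iterated}, each iterated conditional law of $X\circ\widehat T$ is the corresponding one of $X$ composed with $\widehat T$. At time $0$ the outermost law is deterministic and therefore equal for both, so $\lawad(U_TX)=\lawad(X)$. Adapted law invariance of $R_0$ then gives $R_0(U_TX)=R_0(X)$.

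\emph{Step 2: equivariance of $R_t$.} Now fix $X\in L^\infty(\F_N)$ and apply \Cref{lem:reconstruction-from-R0}. The map $m\mapsto m\circ\widehat T$ is a bijection of $L^\infty(\F_t)$, and $A\mapsto\widehat T^{-1}A$ is a bijection of $\F_t$ modulo null sets; both use $\widehat T^{-1}\F_t=\F_t$ and the invertibility of $\widehat T$. By Step 1,
\[
R_0\bigl(\mathbf 1_{\widehat T^{-1}A}(U_TX-U_Tm)\bigr)=R_0\bigl(U_T(\mathbf 1_A(X-m))\bigr)=R_0\bigl(\mathbf 1_A(X-m)\bigr).
\]
Hence $m$ is admissible for $X$ if and only if $U_Tm$ is admissible for $U_TX$. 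Finally, the essential infimum commutes with composition by a measure-preserving bijection: $\operatorname{ess\,inf}\{m\circ\widehat T\}=(\operatorname{ess\,inf}\{m\})\circ\widehat T$. This is clear by representing the essential infimum as a countable infimum. We conclude $R_t(U_TX)=U_TR_t(X)$, and restricting to $X=Y\in L^\infty(\F_{t+1})$ yields the claim for $S_t$.

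\emph{Main obstacle.} The step requiring the most care is the covariance of iterated regular conditional laws in Step 1. Because of the null-set bookkeeping, it must be checked that composing a version of the kernel with $\widehat T$ is again a version. This uses that $\widehat T$ is bimeasurable and preserves each $\F_s$ in both directions. The remaining steps are formal.
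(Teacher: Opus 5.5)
Your proof is correct and takes the same route as the paper: first show $R_0\circ U_T=R_0$ via $U_TZ\simad Z$, then transport admissible pairs $(m,A)$ through $U_T$ in the reconstruction formula and take essential infima. The only differences are that you spell out the covariance $\law(Z\circ\widehat T\mid\F_s)=\law(Z\mid\F_s)\circ\widehat T$ behind $U_TZ\simad Z$, which the paper only asserts, and that you prove equivariance for all of $R_t$ before restricting to $S_t$.
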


\begin{proof}
Since \(T\) preserves the filtration, \(U_TZ\simad Z\) for every terminal
random variable \(Z\). Hence \(R_0(U_TZ)=R_0(Z)\).

We use the reconstruction formula. Let \(m\in L^\infty(\F_t)\) and
\(A\in\F_t\). Put \(n:=m\circ T^{-1}\), and choose \(B\in\F_t\) with
\(A=T^{-1}B\), modulo null sets. Then
\[
\mathbf 1_A(U_TY-m)=U_T\bigl(\mathbf 1_B(Y-n)\bigr).
\]
Thus
\[
R_0(\mathbf 1_A(U_TY-m))
=
R_0(\mathbf 1_B(Y-n)).
\]
Consequently, \(m\) is admissible for \(S_t(U_TY)\) in the reconstruction
formula if and only if \(n=m\circ T^{-1}\) is admissible for \(S_t(Y)\). Taking
essential infima gives \(S_t(U_TY)=U_TS_t(Y)\).
\end{proof}

\begin{lemma}[Ergodicity of adapted automorphisms]
\label{lem:adapted-automorphism-ergodicity}
Let \(V\in L^\infty(\F_t)\). If \(U_TV=V\) for every adapted automorphism
\(T\) of the first \(t\) coordinates, then \(V\) is constant.
\end{lemma}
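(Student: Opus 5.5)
The plan is to use only a very small, explicit subfamily of adapted automorphisms, namely coordinatewise rotations of the torus, and to conclude with a Fourier argument. Identify \([0,1)^t\) with the torus \(\mathbb T^t=\R^t/\mathbb Z^t\), up to null sets, equipped with Haar measure \(\lambda^t\). For \(a=(a_1,\dots,a_t)\in[0,1)^t\), define the product map
\[
T_a(x_1,\dots,x_t):=(x_1+a_1 \bmod 1,\dots,x_t+a_t \bmod 1).
\]

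\emph{Step 1: each \(T_a\) is adapted.} Each coordinate map \(x_s\mapsto x_s+a_s \bmod 1\) is a measure-preserving bijection of \([0,1)\) with measurable inverse. Since \(T_a\) acts coordinatewise, \(T_a^{-1}\sigma(\pi_1,\dots,\pi_s)=\sigma(\pi_1,\dots,\pi_s)\) for every \(s=0,\dots,t\). Hence \(T_a\) is an adapted automorphism of the first \(t\) coordinates, and the hypothesis gives \(V\circ T_a=V\) a.e. for each fixed \(a\). The exceptional null set may depend on \(a\), but this is harmless for the next step.

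\emph{Step 2: Fourier argument.} Since \(V\) is bounded, it lies in \(L^2(\mathbb T^t)\). For \(k\in\mathbb Z^t\), set \(\widehat V(k):=\int V(x)e^{-2\pi i k\cdot x}\,dx\). Translation invariance of Lebesgue measure together with \(V\circ T_a=V\) a.e. gives
\[
\widehat V(k)=\int V(x+a)e^{-2\pi i k\cdot x}\,dx=e^{2\pi i k\cdot a}\,\widehat V(k)
\]
for every \(a\). For \(k\neq0\), choose \(a\) with \(e^{2\pi i k\cdot a}\neq1\); this forces \(\widehat V(k)=0\). By completeness of the trigonometric system in \(L^2(\mathbb T^t)\), \(V\) equals the constant \(\widehat V(0)\) a.e.

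There is no real obstacle here. The only points needing care are checking that product rotations genuinely preserve every \(\sigma(\pi_1,\dots,\pi_s)\), and noting that the null sets in Step 1 may depend on \(a\), which the Fourier argument tolerates. An alternative avoiding Fourier analysis would be to use coordinatewise interval-swapping automorphisms and a Lebesgue density argument, but the rotation route is the shortest.
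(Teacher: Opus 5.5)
Your proof is correct, but it takes a different route from the paper's.

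The paper argues by backward induction on $t$. Invariance under measure-preserving transformations of the last coordinate, chosen fiberwise over $(\pi_1,\dots,\pi_{t-1})$, forces $V$ to be $\F_{t-1}$-measurable. Iterating this reduces everything to the ergodicity of the full automorphism group of $[0,1]$ in the case $t=1$.

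You instead use one explicit abelian subgroup, the product rotations $T_a$, and conclude in one step from uniqueness of Fourier coefficients on $\mathbb T^t$. Your check that $T_a^{-1}\sigma(\pi_1,\dots,\pi_s)=\sigma(\pi_1,\dots,\pi_s)$ is right, and so is your handling of the $a$-dependent null sets.

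Your route buys a stronger and fully explicit statement. Translations alone already force $V$ to be constant, and in fact countably many suffice: given $k\neq0$, pick $j$ with $k_j\neq0$ and $a=2^{-m}e_j$ with $2^m\nmid k_j$. It also sidesteps the measurability and null-set bookkeeping behind the fiberwise choices, which the paper's sketch leaves implicit.

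The paper's route buys a different kind of generality. It needs only one-dimensional ergodicity and peels off one coordinate at a time along the filtration. This matches the conditional-atomlessness viewpoint of the paper's remark on the role of the standard filtered space. It would therefore adapt to filtered spaces without a global product or group structure, where your torus identification is not available.
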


\begin{proof}
For \(t=1\), this is the usual ergodicity of the full group of
measure-preserving automorphisms of \([0,1]\). For \(t>1\), invariance under
measure-preserving transformations of the last coordinate, chosen fiberwise
over \((\pi_1,\dots,\pi_{t-1})\), implies that \(V\) is
\(\F_{t-1}\)-measurable. Repeating the argument backwards reduces to the case
\(t=1\).
\end{proof}

\begin{lemma}[Current-kernel invariance]
\label{lem:current-kernel-invariance}
Let \(R\) be relevant and time consistent, and assume that \(R_0\) is
adapted-law invariant. If \(Y,\widetilde Y\in L^\infty(\F_{t+1})\) satisfy
\[
\law(Y\mid\F_t)=\law(\widetilde Y\mid\F_t),
\]
then
\[
S_t(Y)=S_t(\widetilde Y).
\]
\end{lemma}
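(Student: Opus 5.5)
The plan is to reduce the statement to adapted law invariance of $R_0$ through the reconstruction formula of \Cref{lem:reconstruction-from-R0}, exactly as in the proof of \Cref{lem:adapted-automorphism-equivariance}. Fix $Y,\widetilde Y\in L^\infty(\F_{t+1})$ with $\law(Y\mid\F_t)=\law(\widetilde Y\mid\F_t)$. We will show that an $m\in L^\infty(\F_t)$ is admissible for $S_t(Y)$ if and only if it is admissible for $S_t(\widetilde Y)$. Taking essential infima then gives $S_t(Y)=S_t(\widetilde Y)$.

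By symmetry it suffices to show one identity. For every $A\in\F_t$ and $m\in L^\infty(\F_t)$ we claim
\[
Z:=\mathbf 1_A(Y-m)\simad \widetilde Z:=\mathbf 1_A(\widetilde Y-m),
\]
viewed as terminal random variables on $\mathbb U_N$. Adapted law invariance of $R_0$ then yields $R_0(Z)=R_0(\widetilde Z)$, which is exactly the admissibility equivalence needed.

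To verify the claim, compute the iterated conditional laws in \eqref{eq:adapted-law-iterated}. Both $Z$ and $\widetilde Z$ are $\F_{t+1}$-measurable. Hence for $s\ge t+1$ we have $\law(Z\mid\F_s)=\delta_Z$, and the first $N-t-1$ layers are deterministic embeddings $z\mapsto\delta_{\delta_{\cdots z}}$. This can be formalized by writing the innermost nontrivial object as $\iota(Z)$, where $\iota:\R\to\Pc_b^{N-t-1}(\R)$ is the iterated Dirac map. That map is Borel and injective.

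At level $t$ we obtain
\[
\law(\iota(Z)\mid\F_t)=(\iota\circ g_{A,m})_\#\law(Y\mid\F_t),
\qquad g_{A,m}(\omega,y)=\mathbf 1_A(\omega)(y-m(\omega)).
\]
This holds because $A$ and $m$ are $\F_t$-measurable, so conditionally on $\F_t$ the map $y\mapsto \mathbf 1_A(y-m)$ is a fixed function. The same formula holds for $\widetilde Z$ with $\law(\widetilde Y\mid\F_t)$, which agrees with $\law(Y\mid\F_t)$ almost surely. Hence the $\F_t$-measurable $\Pc_b^{N-t}(\R)$-valued random variables produced at level $t$ coincide a.s. for $Z$ and $\widetilde Z$.

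All further layers, namely conditional laws given $\F_{t-1},\dots,\F_1$ and the final law, are functions of this common random variable together with the filtration. They therefore coincide, so $\lawad(Z)=\lawad(\widetilde Z)$.

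The main technical point is making the level-$t$ identity rigorous. One needs a jointly measurable version of the kernel $\omega\mapsto\law(Y\mid\F_t)(\omega)$ and the pushforward through the $\omega$-dependent map $g_{A,m}$. For this I would take regular conditional laws of the pair $(Y,(\mathbf 1_A,m))$ given $\F_t$, where the second component is $\F_t$-measurable. This gives the disintegration $\delta_{(\mathbf 1_A,m)(\omega)}\otimes\law(Y\mid\F_t)(\omega)$, and similarly for $\widetilde Y$.

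The case $t=0$ is immediate, since then $\law(Y)=\law(\widetilde Y)$ and $A,m$ are trivial. If $t=N-1$, there are no Dirac layers. Relevance and time consistency enter only through \Cref{lem:reconstruction-from-R0}.
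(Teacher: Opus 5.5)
Your proof is correct and follows the paper's argument. You use the reconstruction formula to reduce the claim to $\mathbf 1_A(Y-m)\simad\mathbf 1_A(\widetilde Y-m)$ for all $A\in\F_t$ and $m\in L^\infty(\F_t)$; this holds because both variables have the same $\F_t$-conditional law $\mathbf 1_A(\tau_{-m})_\#M+\mathbf 1_{A^c}\delta_0$, and the earlier layers of the iterated conditional law are determined by it. Your explicit handling of the Dirac layers above time $t+1$ and of the regular conditional law of the pair only fills in details the paper leaves implicit.
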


\begin{proof}
Set \(M:=\law(Y\mid\F_t)=\law(\widetilde Y\mid\F_t)\). Let
\(m\in L^\infty(\F_t)\) and \(A\in\F_t\). At time \(t\), the conditional laws
of \(\mathbf 1_A(Y-m)\) and \(\mathbf 1_A(\widetilde Y-m)\) are both
\[
\mathbf 1_A(\tau_{-m})_\#M+\mathbf 1_{A^c}\delta_0.
\]
Before time \(t\), the iterated conditional laws are obtained from this same
\(\F_t\)-measurable state. Hence the two terminal variables have the same
adapted law. Since \(R_0\) is adapted-law invariant,
\[
R_0(\mathbf 1_A(Y-m))
=
R_0(\mathbf 1_A(\widetilde Y-m)).
\]
The reconstruction formula therefore gives the same admissible \(m\)'s for
\(Y\) and \(\widetilde Y\), and hence \(S_t(Y)=S_t(\widetilde Y)\).
\end{proof}

\begin{lemma}[One-step identification from \(R_0\)]
\label{lem:one-step-from-R0}
Let \(R=(R_t)_{t=0}^N\) be relevant and time consistent on \(\mathbb U_N\).
Assume that \(R_0\) has the Fatou property and is adapted-law invariant. Then,
for every \(t<N\), there exists a unique relevant static law-invariant risk
measure \(\rho_t:\Pc_b(\R)\to\R\), with the Fatou property, such that
\[
S_t(Y)=\rho_t(\law(Y\mid\F_t))
\qquad\text{a.s.}
\]
for every \(Y\in L^\infty(\F_{t+1})\).
\end{lemma}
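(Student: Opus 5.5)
We first define \(\rho_t\) and then show that \(S_t\) factors through it. For \(\mu\in\Pc_b(\R)\), use the fresh coordinate \(\pi_{t+1}\) and set \(Y_\mu:=q_\mu(\pi_{t+1})\in L^\infty(\F_{t+1})\). This variable is independent of \(\F_t\) and satisfies \(\law(Y_\mu\mid\F_t)=\mu\) a.s.

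We claim that \(S_t(Y_\mu)\) is a.s.\ constant. If \(T\) is an adapted automorphism of the first \(t\) coordinates, then \(U_TY_\mu=Y_\mu\), because \(T\) leaves \(\pi_{t+1}\) fixed. \Cref{lem:adapted-automorphism-equivariance} then gives \(U_TS_t(Y_\mu)=S_t(Y_\mu)\), and \Cref{lem:adapted-automorphism-ergodicity} shows that \(S_t(Y_\mu)\) is constant. For \(t=0\) there is nothing to prove. We define \(\rho_t(\mu)\) to be this constant.

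The basic properties of \(\rho_t\) follow directly. Monotonicity in the sense of the lift, cash additivity and normalization are inherited from \(S_t\), since \(Y_{(\tau_m)_\#\mu}=Y_\mu+m\). For law invariance, any \(Z\) on an atomless space with law \(\mu\) can be transported to \(Y_\mu\). For Fatou, weak convergence \(\mu_n\to\mu\) on \([-M,M]\) gives \(q_{\mu_n}\to q_\mu\) at continuity points, hence a.s.\ in \(\pi_{t+1}\). \Cref{lem:R0-fatou-implies-one-step-fatou} then yields lower semicontinuity of \(\rho_t\). Uniqueness is immediate, since \(\rho_t(\mu)\) is recovered from \(S_t(Y_\mu)\). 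Relevance should follow from relevance of \(R\): take \(Y=\varepsilon\mathbf 1_{\{\pi_{t+1}\le p\}}\) and use \(R_0(S_t(Y))>0\) together with the constancy of \(S_t(Y)\).

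The main step is the identity \(S_t(Y)=\rho_t(\law(Y\mid\F_t))\) for general \(Y\). First treat \(Y\) whose conditional law \(M:=\law(Y\mid\F_t)\) takes finitely many values \(\mu_1,\dots,\mu_k\) on a partition \(B_1,\dots,B_k\in\F_t\). The variable \(\widetilde Y:=\sum_i\mathbf 1_{B_i}Y_{\mu_i}\) has the same conditional law as \(Y\). \Cref{lem:current-kernel-invariance} and \Cref{lem:locality} then give
\[
S_t(Y)=S_t(\widetilde Y)=\sum_i\mathbf 1_{B_i}\rho_t(\mu_i).
\]
More generally, \Cref{lem:current-kernel-invariance} gives \(S_t(Y)=S_t(q_{M}(\pi_{t+1}))\), where \(q_M(\omega,u)\) is the jointly measurable conditional quantile. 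So we may assume \(Y=q_M(\pi_{t+1})\).

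The hardest step is the passage from finitely-valued kernels to general \(M\), and I would handle it by monotone approximation using continuity from below. Discretize \(M\) from below in a way that is monotone in the first-order order: replace each \(q_M(\omega,\cdot)\) by a nonincreasing-in-\(\omega\)-resolution approximation. For instance, take dyadic rounding down of the quantile values, \(q^n:=2^{-n}\lfloor 2^nq_M\rfloor\), combined with a finite partition of \(\Pc([-C,C])\) compatible with that rounding. The aim is to get \(Y_n\uparrow Y\) a.s., each with finitely many conditional laws \(M_n\), and with \(M_n(\omega)\to M(\omega)\) weakly and monotonically.

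On the left-hand side, \Cref{lem:main-continuity-from-below} applied to \(S_t\) gives \(S_t(Y_n)\uparrow S_t(Y)\). On the right-hand side, \(\rho_t(M_n)\uparrow\rho_t(M)\) by the same lemma for the lift of \(\rho_t\), since \(q_{M_n(\omega)}\uparrow q_{M(\omega)}\) pointwise in \(u\). Combining these yields the identity.

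If producing finitely many conditional laws with monotone quantiles is awkward, there is a fallback: approximate the \(\F_t\)-measurable map \(\omega\mapsto q_M(\omega,\cdot)\) by one taking countably many values. Countable partitions are handled by locality just as finite ones are. Then use \(q^n\uparrow q_M\) in the countable-valued setting.
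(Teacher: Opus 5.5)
\textbf{Gap: the finitely-valued approximation from below is never constructed.}

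Everything up to the final approximation matches the paper's proof: the definition $\rho_t(\mu):=S_t(q_\mu(\pi_{t+1}))$, constancy via \Cref{lem:adapted-automorphism-equivariance} and \Cref{lem:adapted-automorphism-ergodicity}, Fatou via a.e.\ convergence of quantiles, uniqueness, locality for finitely valued kernels, and the reduction to $\widehat Y=q_M(\pi_{t+1})$ via \Cref{lem:current-kernel-invariance}. Your relevance argument ($R_0(Y)=R_0(S_t(Y))=\rho_t(\mu)$ with $B=\Omega$) is fine and slightly simpler than the paper's.

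The gap is the step you flag as hardest. Rounding quantile values, $q^n=d_n\circ q_M$ with $d_n(x):=2^{-n}\lfloor 2^nx\rfloor$, only yields the kernel $(d_n)_\#M$. That kernel is supported on a fixed finite grid, but it still takes uncountably many values in $\omega$, because the weights $M(\omega)([k2^{-n},(k+1)2^{-n}))$ are arbitrary $\F_t$-measurable functions. So your finitely-valued case does not apply to it.

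The ``finite partition of $\Pc([-C,C])$ compatible with that rounding'' is precisely what must be built, and it is not automatic. With only the Fatou property, an approximation $M_n\to M$ that is not below $M$ in first-order stochastic order gives only
$S_t(\widehat Y)\le\liminf_n\rho_t(M_n)$ and $\rho_t(M)\le\liminf_n\rho_t(M_n)$.
These two inequalities never compare $S_t(\widehat Y)$ with $\rho_t(M)$. Your fallback has the same problem. A countably valued approximation of $\omega\mapsto M(\omega)$ helps only if each approximant is stochastically dominated by $M(\omega)$, and ``$q^n\uparrow q_M$'' is again just value rounding.

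\textbf{The missing idea: discretize the weights, rounding so the approximant stays below $M$.}

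Once $M$ is a.s.\ supported on a fixed finite set $\{x_1<\dots<x_d\}$:
\begin{enumerate}[(1)]
\item Set $F_i:=M(\{x_1,\dots,x_i\})$ and replace it by $F_i^n:=2^{-n}\lceil 2^nF_i\rceil$, with $F_0^n=0$ and $F_d^n=1$.
\item The resulting kernel $M^n$ takes only finitely many values.
\item Its distribution function lies above that of $M$, so $q_{M^n}\le q_M$.
\item Since $F_i^n\downarrow F_i$, one gets $q_{M^n}\uparrow q_M$ pointwise.
\item \Cref{lem:main-continuity-from-below}, applied to $S_t$ and to the lift of $\rho_t$, then gives $S_t(\widehat Y)=\rho_t(M)$ for finite-support kernels.
\end{enumerate}
The general case follows from a second use of continuity from below along $d_n(\widehat Y)\uparrow\widehat Y$. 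Here \Cref{lem:current-kernel-invariance} identifies $S_t(d_n(\widehat Y))$ with $S_t\bigl(q_{(d_n)_\#M}(\pi_{t+1})\bigr)$, to which the finite-support case applies.
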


\begin{proof}
By \Cref{lem:R0-fatou-implies-one-step-fatou}, \(S_t\) has the Fatou property.

For \(\mu\in\Pc_b(\R)\), let \(q_\mu\) be the quantile function and set
\(Z^\mu:=q_\mu(\pi_{t+1})\). Then \(\law(Z^\mu\mid\F_t)=\mu\). We first show
that \(S_t(Z^\mu)\) is constant. If \(t=0\), this is immediate. If \(t>0\),
then \(Z^\mu\) is fixed by every adapted automorphism of the first
\(t\) coordinates. By \Cref{lem:adapted-automorphism-equivariance},
\(S_t(Z^\mu)\) is fixed by all such automorphisms, and by
\Cref{lem:adapted-automorphism-ergodicity} it is constant. Define
\[
\rho_t(\mu):=S_t(Z^\mu).
\]

The static risk-measure properties of \(\rho_t\) follow from those of \(S_t\).
Normalization is immediate. Cash additivity follows from
\(q_{(\tau_a)_\#\mu}=q_\mu+a\). Monotonicity follows from the quantile
coupling: if bounded random variables with laws \(\mu,\nu\) may be coupled as
\(X\le Y\), then \(q_\mu\le q_\nu\) a.e., and hence
\(\rho_t(\mu)\le\rho_t(\nu)\).

The Fatou property follows from the one-step Fatou property. If
\(\mu_n\weaklyto\mu\) on a common compact support, then
\(q_{\mu_n}\to q_\mu\) a.e. on \((0,1)\), after changing the quantiles on a
null set. Therefore
\begin{align}
\rho_t(\mu)
&=
S_t\big(q_\mu(\pi_{t+1})\big)
\le
\liminf_n S_t\big(q_{\mu_n}(\pi_{t+1})\big)
=
\liminf_n\rho_t(\mu_n).
\label{eq:rho-fatou-from-St}
\end{align}
Thus \(\rho_t\) is lower semicontinuous on bounded-support strata, and in
particular Borel.

We next prove relevance. Let
\(\mu=(1-p)\delta_0+p\delta_\varepsilon\), with \(p>0\) and
\(\varepsilon>0\). Monotonicity gives \(\rho_t(\mu)\ge0\). If
\(\rho_t(\mu)=0\), choose \(B\in\F_t\) with positive probability, taking
\(B=\Omega\) when \(t=0\), and set \(Y:=\mathbf 1_Bq_\mu(\pi_{t+1})\).
Locality gives \(S_t(Y)=0\), so time consistency gives \(R_0(Y)=0\). But
\(Y=\varepsilon\mathbf 1_C\) for a set \(C\) of positive probability, which
contradicts relevance of \(R_0\). Hence \(\rho_t\) is relevant.

It remains to prove
\begin{align}
S_t(Y)=\rho_t(\law(Y\mid\F_t))
\label{eq:one-step-identification-target}
\end{align}
for arbitrary \(Y\in L^\infty(\F_{t+1})\). Set
\(M:=\law(Y\mid\F_t)\), and define the canonical conditional resampling
\(\widehat Y:=q_M(\pi_{t+1})\). Then
\(\law(\widehat Y\mid\F_t)=M\), so
\Cref{lem:current-kernel-invariance} gives \(S_t(Y)=S_t(\widehat Y)\). It is
therefore enough to prove
\begin{align}
S_t(\widehat Y)=\rho_t(M).
\label{eq:one-step-resampling-target}
\end{align}

We prove \eqref{eq:one-step-resampling-target} in three steps. First assume
that \(M\) is finitely valued, say
\(M=\sum_{j=1}^J\mu_j\mathbf 1_{B_j}\), where
\(B_1,\dots,B_J\in\F_t\) form a partition. Then locality gives
\[
S_t(\widehat Y)
=
\sum_j\mathbf 1_{B_j}S_t\big(q_{\mu_j}(\pi_{t+1})\big)
=
\sum_j\mathbf 1_{B_j}\rho_t(\mu_j)
=
\rho_t(M).
\]

Second assume that \(M\) is a.s. supported by a fixed deterministic finite set
\(E=\{x_1<\cdots<x_d\}\). Write
\[
F_i:=M\big(\{x_1,\dots,x_i\}\big),
\qquad i=0,\dots,d,
\]
with \(F_0=0\) and \(F_d=1\). For \(n\ge1\), set
\[
F_i^n:=2^{-n}\lceil 2^nF_i\rceil,
\qquad i=1,\dots,d-1,
\]
and \(F_0^n:=0\), \(F_d^n:=1\). Let \(M^n\) be the random probability measure
on \(E\) defined by
\[
M^n(\{x_i\}) := F_i^n-F_{i-1}^n.
\]
Then \(M^n\) is finitely valued, since all its cumulative weights are dyadic.
Moreover, \(F_i^n\downarrow F_i\), and therefore
\[
q_{M^n}(\pi_{t+1})\uparrow q_M(\pi_{t+1})=\widehat Y
\qquad\text{a.s.}
\]
The finitely valued case gives
\[
S_t\big(q_{M^n}(\pi_{t+1})\big)=\rho_t(M^n).
\]
Using continuity from below for \(S_t\) and for the lift of \(\rho_t\), we get
\[
S_t(\widehat Y)
=
\lim_n S_t\big(q_{M^n}(\pi_{t+1})\big)
=
\lim_n \rho_t(M^n)
=
\rho_t(M).
\]

Finally let \(M\) be arbitrary but bounded. Choose \(K\) such that
\(M(\omega)\in\Pc([-K,K])\) a.s., and define
\(d_n(x):=2^{-n}\lfloor 2^nx\rfloor\). Then \(d_n(x)\uparrow x\) for every
\(x\), and \(\widehat Y_n:=d_n(\widehat Y)\uparrow \widehat Y\). Set
\(M^n:=(d_n)_\#M\). Then \(M^n\) is supported by a deterministic finite grid,
and \(\law(\widehat Y_n\mid\F_t)=M^n\). By
\Cref{lem:current-kernel-invariance}, \(S_t(\widehat Y_n)\) agrees with
\(S_t\big(q_{M^n}(\pi_{t+1})\big)\). Hence the finite-support case gives
\[
S_t(\widehat Y_n)=\rho_t(M^n).
\]
Again by continuity from below,
\[
S_t(\widehat Y)
=
\lim_n S_t(\widehat Y_n)
=
\lim_n \rho_t(M^n)
=
\rho_t(M).
\]
This proves \eqref{eq:one-step-resampling-target}, and hence
\eqref{eq:one-step-identification-target}. Uniqueness follows from
\[
\rho_t(\mu)=S_t\big(q_\mu(\pi_{t+1})\big).
\]
\end{proof}

\begin{proof}[Proof of \Cref{thm:main-standard-cube}]
Assume first that \(R_0\) is adapted-law invariant. By
\Cref{lem:one-step-from-R0}, there are unique relevant static law-invariant
Fatou risk measures \(\rho_t\) such that
\(S_t(Y)=\rho_t(\law(Y\mid\F_t))\). Since \(R\) is time consistent,
\[
R_t(X)=S_t(R_{t+1}(X))
=
\rho_t(\law(R_{t+1}(X)\mid\F_t)).
\]
This proves the one-step representation.

Conversely, assume the one-step representation. Since the \(\rho_t\)'s have
the Fatou property, they are Borel on bounded-support strata. Define
recursively
\begin{align}
r_N(x)&:=x,
&
r_t(m)&:=\rho_t\big((r_{t+1})_\#m\big),
\qquad m\in\Pc_b^{N-t}(\R).
\label{eq:r-recursion-main-proof}
\end{align}
The maps \(r_t\) are Borel on bounded-support strata by backward induction.
If \(L_N(X):=X\) and \(L_t(X):=\law(L_{t+1}(X)\mid\F_t)\), then
\(L_0(X)=\lawad(X)\), and a backward induction using
\eqref{eq:r-recursion-main-proof} gives
\[
R_t(X)=r_t(L_t(X)).
\]
In particular,
\[
R_0(X)=r_0(\lawad(X)).
\]
Thus \(R_0\) factors through the adapted law, proving adapted-law invariance.

We now prove the converse construction. Let
\(\rho_0,\dots,\rho_{N-1}\) be relevant static law-invariant risk measures with
the Fatou property, and define \(R\) by the backward recursion
\begin{align}
R_N(X)&=X,
&
R_t(X)&=\rho_t(\law(R_{t+1}(X)\mid\F_t)).
\label{eq:constructed-backward-recursion}
\end{align}
By \Cref{lem:main-conditional-lifts}, the one-step maps are normalized,
monotone, cash additive, and have the Fatou property. Hence the backward
composition is a time-consistent dynamic risk measure. The preceding
\(r_t\)-recursion shows that \(R_0\) factors through \(\lawad\).

Relevance follows by backward induction. If \(X=\varepsilon\mathbf 1_A\) with
\(\P(A)>0\), set \(X_N:=X\) and \(X_t:=R_t(X)\). Then \(X_t\ge0\) for all
\(t\). If \(X_{t+1}>0\) with positive probability, then on the set
\[
B_t:=\{\P(X_{t+1}>0\mid\F_t)>0\}
\]
the conditional law of \(X_{t+1}\) gives positive mass to \((0,\infty)\).
Relevance and monotonicity of \(\rho_t\) imply \(X_t>0\) on \(B_t\). Hence
\(R_0(X)>0\).

The Fatou property of the constructed \(R_0\) follows from the Fatou
properties of the one-step maps. First use
\Cref{lem:main-continuity-from-below} backward in time to get continuity from
below for the whole composition. If \(X_n\to X\) a.s. with a common bound, put
\(Y_k:=\inf_{n\ge k}X_n\). Then \(Y_k\uparrow X\) and \(Y_k\le X_n\) for
\(n\ge k\), so
\[
R_0(X)=\lim_kR_0(Y_k)\le\liminf_n R_0(X_n).
\]

Finally, convexity and coherence are inherited one step at a time. If \(R\) is
convex, then each \(S_t\) is conditionally convex; testing on one-step
realizations of arbitrary bounded joint laws on the fresh coordinate
\(\pi_{t+1}\) gives convexity of \(\rho_t\). Conversely, if each \(\rho_t\) is
convex, then each conditional lift is conditionally convex by
\Cref{lem:main-conditional-lifts}, and the monotone backward composition of
conditionally convex maps is conditionally convex. The proof for coherence is
the same, adding positive homogeneity.
\end{proof}

\section{Adapted law invariance for cash-flow processes}
\label{sec:cash-flow-processes}

The preceding results concern a single terminal liability. We briefly record
the corresponding cash-flow version of the main characterization. To avoid
num\'eraire issues, we assume deterministic interest rates and express all
payments in units of the bank account. Thus
\(C=(C_1,\dots,C_N)\), with \(C_t\in L^\infty(\F_t)\), is interpreted as a
stream of discounted losses paid at the respective dates.

For \(t=0,\dots,N\), write
\[
\mathcal C_t:=\prod_{s=t+1}^N L^\infty(\F_s),
\qquad
\mathcal C_N:=\{0\},
\]
and let \(V_t:\mathcal C_t\to L^\infty(\F_t)\) denote the time-\(t\)
assessment of the remaining cash-flow stream. A dynamic risk measure on processes is the family \(V=(V_t)_{t=0}^N\). We write
\(C_{t+1:N}:=(C_{t+1},\dots,C_N)\), with empty tails identified with zero.
The full time-\(t\) assessment of a stream including a current payment
\(C_t\) is \(C_t+V_t(C_{t+1:N})\), so current-date cash additivity is
built into this notation. Following the composition framework
for risk measures on processes, see
\cite{CheriditoKupper2011,AcciaioFollmerPenner2012}, put
\[
H_t(Y):=V_t(Y,0,\dots,0),
\qquad Y\in L^\infty(\F_{t+1}),
\]
and assume that the maps \(H_t\) are normalized and monotone and that time
consistency takes the recursive form
\begin{align}
V_t(C_{t+1:N})
=
H_t\bigl(C_{t+1}+V_{t+1}(C_{t+2:N})\bigr),
\qquad t=0,\dots,N-1.
\label{eq:cash-flow-time-consistency}
\end{align}

The definition of \(\lawad\) extends verbatim to bounded
\(\R^N\)-valued terminal random variables. We use this extension for the
adapted cash-flow vector \(C=(C_1,\dots,C_N)\), and call \(V\)
\emph{adapted-law invariant} if
\[
\lawad(C)=\lawad(\widetilde C)
\quad\Longrightarrow\quad
V_0(C_{1:N})=V_0(\widetilde C_{1:N}).
\]

\begin{definition}[Conditional one-step dated cash additivity]
\label{def:one-step-dated-cash-additivity}
The cash-flow risk measure \(V\) is \emph{conditionally one-step dated cash
additive} if, for every \(t<N\), \(Y\in L^\infty(\F_{t+1})\), and
\(m\in L^\infty(\F_t)\),
\begin{align}
H_t(Y+m)=H_t(Y)+m
\qquad\text{a.s.}
\label{eq:one-step-dated-cash-additivity}
\end{align}
\end{definition}

Since all amounts are expressed in discounted units,
\eqref{eq:one-step-dated-cash-additivity} says that an
\(\F_t\)-measurable loss paid one period later raises the time-\(t\)
assessment by exactly its discounted amount.

\begin{corollary}[Cash-flow version of the main characterization]
\label{cor:cash-flow-main-characterization}
Let \(V=(V_t)_{t=0}^N\) satisfy
\eqref{eq:cash-flow-time-consistency} on the standard filtered cube, and
assume that it is conditionally one-step dated cash additive. Suppose that
\(V_0\) is relevant, in the sense that
\(V_0(C_{1:N})>0\) for every componentwise nonnegative cash-flow stream with
\(\P(\sum_{s=1}^N C_s>0)>0\), and that \(V_0\) has the Fatou property under
uniformly bounded componentwise a.s. convergence. Then the following are
equivalent.

\begin{enumerate}[(i)]
\item \(V\) is adapted-law invariant.

\item There are unique relevant static law-invariant risk measures
\(\rho_t:\Pc_b(\R)\to\R\), \(t=0,\dots,N-1\), with the Fatou property, such
that
\begin{align}
V_N(0)&=0,
&
V_t(C_{t+1:N})
&=
\rho_t\!\left(
\law\bigl(C_{t+1}+V_{t+1}(C_{t+2:N})\mid\F_t\bigr)
\right),
\quad t=N-1,\dots,0.
\label{eq:cash-flow-conditional-law-recursion}
\end{align}
\end{enumerate}

Here convexity and coherence of \(V\) are understood componentwise. Under
these equivalent conditions, \(V\) is convex if and only if each \(\rho_t\)
is convex, and \(V\) is coherent if and only if each \(\rho_t\) is coherent.
Conversely, any family \(\rho_0,\dots,\rho_{N-1}\) as in (ii)
defines through \eqref{eq:cash-flow-conditional-law-recursion} a relevant
time-consistent conditionally one-step dated cash-additive cash-flow risk
measure whose initial functional has the Fatou property and is adapted-law
invariant.
\end{corollary}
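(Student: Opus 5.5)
The plan is to reduce the cash-flow statement to \Cref{thm:main-standard-cube} by encoding each cash-flow stream as a terminal liability. The one-step maps \(H_t\) already play the role of \(S_t\). By hypothesis they are normalized and monotone, and by \eqref{eq:one-step-dated-cash-additivity} they are cash additive. \Cref{lem:locality} therefore applies to each \(H_t\).

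\textbf{Associated dynamic risk measure.} Set \(R_N:=\id\) and \(R_t:=H_t\circ\cdots\circ H_{N-1}\) on \(L^\infty(\F_N)\). This is a time-consistent dynamic risk measure in the sense of \Cref{def:dynamic-risk}. Iterating \eqref{eq:cash-flow-time-consistency} together with cash additivity of the \(H_s\) gives the identity
\[
V_t(C_{t+1:N})=R_t\Bigl(\sum_{s=t+1}^N C_s\Bigr).
\]
To check it, note that \(C_{t+1}+V_{t+1}(C_{t+2:N})=C_{t+1}+R_{t+1}(\sum_{s\ge t+2}C_s)\). This equals \(R_{t+1}(\sum_{s\ge t+1}C_s)\) because \(C_{t+1}\) is \(\F_{t+1}\)-measurable. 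Consequently:
\begin{enumerate}[(a)]
\item Relevance of \(V_0\) applied to \(C=(0,\dots,0,\varepsilon\mathbf 1_A)\) gives relevance of \(R\). Conversely, relevance of \(R\) gives relevance of \(V_0\), since the sum of a nonnegative stream is a nonnegative terminal variable.
\item The Fatou property transfers in both directions, since componentwise a.s.\ convergence implies convergence of the sums, and terminal-only streams are special streams.
\end{enumerate}

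\textbf{Transfer of adapted law invariance.} If \(V_0\) is adapted-law invariant, then for terminal \(X\simad\widetilde X\) the streams \((0,\dots,0,X)\) and \((0,\dots,0,\widetilde X)\) have equal adapted laws in the vector sense. This is immediate from the recursive definition, because the zero components carry no information. Hence \(R_0(X)=R_0(\widetilde X)\).

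\Cref{thm:main-standard-cube} then yields unique \(\rho_t\) with \(H_t(Y)=S_t(Y)=\rho_t(\law(Y\mid\F_t))\). Substituting into \eqref{eq:cash-flow-time-consistency} gives \eqref{eq:cash-flow-conditional-law-recursion}. Uniqueness is inherited, since the recursion restricted to streams \((Y,0,\dots,0)\) determines \(H_t\) and hence \(\rho_t\).

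\textbf{Converse direction.} Given \eqref{eq:cash-flow-conditional-law-recursion}, I would mimic the \(r_t\)-recursion \eqref{eq:r-recursion-main-proof}, but with states that also carry the past cash-flow components. Concretely, define on the vector-valued iterated laws \(\Pc_b^{N-t}(\R^{N})\)
\[
v_t(m):=\rho_t\bigl((c_{t+1}+v_{t+1})_\#m\bigr),
\]
where \(c_{t+1}\) reads off the \((t+1)\)-st coordinate of the state at time \(t+1\). This coordinate is well defined because \(C_{t+1}\) is \(\F_{t+1}\)-measurable, so it is a deterministic function of the conditional-law state at time \(t+1\).

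\textbf{Main obstacle.} I expect this bookkeeping to be the delicate step. One must check that the iterated conditional law of the vector \(C\) at time \(t+1\) determines both \(C_{t+1}\) (via its first marginal being a Dirac mass) and the state needed for \(V_{t+1}\). One must also check that \(v_t\) is Borel on bounded strata, by backward induction using the lower semicontinuity of the \(\rho_t\). Once this is in place, \(V_0=v_0(\lawad(C))\), which proves adapted law invariance.

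\textbf{Remaining properties.} Time consistency, one-step dated cash additivity, relevance and Fatou follow from \Cref{lem:main-conditional-lifts} and from the converse part of \Cref{thm:main-standard-cube} applied to \(R\), via the identity \(V_0(C)=R_0(\sum_s C_s)\). Convexity and coherence, understood componentwise, transfer in both directions through the same identity together with the corresponding statements for \(R\) in \Cref{thm:main-standard-cube}.
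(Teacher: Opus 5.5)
Your proposal is correct. The forward direction is exactly the paper's argument: the aggregation identity \eqref{eq:cash-flow-aggregation}, then the terminal embedding $X\mapsto(0,\dots,0,X)$ to pass adapted law invariance from $V_0$ to $R_0$, then \Cref{thm:main-standard-cube}.

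You differ in the converse, where you must show that $V_0$ is adapted-law invariant. You build a vector-valued recursion $v_t$ on $\Pc_b^{N-t}(\R^N)$ that has to read off $C_{t+1}$ from the time-$(t+1)$ state. The paper avoids this bookkeeping. It uses three facts:
\begin{enumerate}[(1)]
\item the identity $V_0(C_{1:N})=R_0\bigl(\sum_s C_s\bigr)$, which holds under the standing hypotheses;
\item the converse part of \Cref{thm:main-standard-cube}, which already shows that $R_0$ factors through $\lawad$ of terminal variables;
\item the fact that $\lawad\bigl(\sum_s C_s\bigr)$ is a function of $\lawad(C)$, obtained by iterating $\law(f(Z)\mid\G)=f_\#\law(Z\mid\G)$ for the Borel summation map $f(x)=\sum_s x_s$.
\end{enumerate}

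Your own ``remaining properties'' step already invokes the aggregation identity and the converse part of the main theorem. So the $v_t$ construction, and the measurability and read-off issues you flag as the main obstacle, are redundant. What your route buys is an explicit factorization $V_0=v_0(\lawad(C))$. The cost is redoing the Borel-measurability induction of the $r_t$-recursion in the vector setting.

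Your construction is nonetheless sound. The terminal marginal of the time-$(t+1)$ state is $\law(C\mid\F_{t+1})$, and its $(t+1)$-st coordinate marginal is $\delta_{C_{t+1}}$ a.s. Hence a Borel read-off map $c_{t+1}$ exists, for instance the barycenter of that coordinate marginal.
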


\begin{proof}
Define a family on terminal positions by
\[
R_N(X):=X,
\qquad
R_t(X):=H_t(R_{t+1}(X)),
\quad t=N-1,\dots,0.
\]
By \eqref{eq:one-step-dated-cash-additivity}, each \(H_t\) is cash additive.
Hence \(R\) is a time-consistent dynamic risk measure in the sense of
\Cref{def:dynamic-risk,def:dynamic-properties}, and \(H_t\) is its one-step
restriction.

A backward induction gives the aggregation identity
\begin{align}
V_t(C_{t+1:N})
=
R_t\!\left(\sum_{s=t+1}^N C_s\right),
\qquad t=0,\dots,N.
\label{eq:cash-flow-aggregation}
\end{align}
Indeed, if the identity holds at time \(t+1\), then cash additivity of
\(R_{t+1}\) gives
\[
R_{t+1}\!\left(\sum_{s=t+1}^N C_s\right)
=
C_{t+1}
+
R_{t+1}\!\left(\sum_{s=t+2}^N C_s\right)
=
C_{t+1}+V_{t+1}(C_{t+2:N}),
\]
and applying \(H_t\) yields the claim at time \(t\).

By functoriality of iterated conditional laws under deterministic Borel maps,
equality of the adapted laws of two cash-flow vectors implies equality of the
adapted laws of their sums.\footnote{More precisely, if $f\to F$ is a Borel map between standard Borel spaces, then
$\law(f(Z)\mid\G)=f_\#\law(Z\mid\G)$ 
a.s. Iterating this identity shows that
$\lawad(f(Z))$ is determined by $\lawad(Z)$. 
Hence equality of adapted laws
is preserved under deterministic Borel transformations. 
Here we apply this to the summation map.} 
Conversely, the terminal embedding
\(X\mapsto(0,\dots,0,X)\) preserves equality of adapted laws. In view of
\eqref{eq:cash-flow-aggregation}, \(V_0\) is therefore adapted-law invariant
if and only if \(R_0\) is adapted-law invariant. The relevance and Fatou
assumptions on \(V_0\) imply the corresponding properties of \(R_0\) by
restriction to terminal-only streams.

The result now follows from \Cref{thm:main-standard-cube}: there are unique
static law-invariant risk measures \(\rho_t\) such that
\(H_t(Y)=\rho_t(\law(Y\mid\F_t))\). Substituting
\(Y=C_{t+1}+V_{t+1}(C_{t+2:N})\) in
\eqref{eq:cash-flow-time-consistency} gives
\eqref{eq:cash-flow-conditional-law-recursion}. The converse, as well as the
claims concerning convexity and coherence, follows from the converse part of
\Cref{thm:main-standard-cube} together with
\eqref{eq:cash-flow-aggregation}.
\end{proof}

Thus, under conditional one-step dated cash additivity, the cash-flow problem
contains no additional structure beyond the aggregate discounted loss in
\eqref{eq:cash-flow-aggregation}.

\begin{remark}[The cash-subadditive case]
\label{rem:cash-flow-subadditive}
In the general process framework, one may impose only the one-sided condition
\begin{align}
H_t(Y+m)\le H_t(Y)+m,
\qquad m\in L^\infty_+(\F_t),
\label{eq:one-step-dated-cash-subadditivity}
\end{align}
which is the cash-subadditive analogue of
\eqref{eq:one-step-dated-cash-additivity} in the present loss convention; see
\cite{CheriditoKupper2011,AcciaioFollmerPenner2012}. Then the family obtained by
composing the \(H_t\)'s need not be cash additive, and the aggregation identity
\eqref{eq:cash-flow-aggregation} generally fails. Thus payment dates may remain
relevant even for cash-flow streams with the same total discounted loss.

Consequently, the cash-subadditive case does not follow from
\Cref{thm:main-standard-cube}. A corresponding characterization would require
a process-level reconstruction and one-step identification theorem. We leave this to future research. 
\end{remark}


\section{Proof of the two-period Kupper--Schachermayer theorem}
\label{sec:KS-two-period-proof}

We prove the two-period version of \Cref{thm:KS-finite-rigidity}. 
Recall that
\[
\mathbb U_2
=
([0,1]^2,\mathcal B([0,1]^2),\lambda^2,(\F_t)_{t=0}^2),
\qquad
\F_0=0,\quad \F_1=\sigma(\pi_1),\quad \F_2=\sigma(\pi_1,\pi_2).
\]
For \(\gamma\in\R\setminus\{0\}\), define the law-level functional
\begin{align}
\mathrm e^\gamma(\mu)
&:=
\frac1\gamma\log\int_\R e^{\gamma x}\,\mu(dx),
\label{eq:KS-law-entropic}
\end{align}
with limiting cases
\begin{align}
\mathrm e^0(\mu):=\int_\R x\,\mu(dx),
\qquad
\mathrm e^\infty(\mu):=\sup\supp(\mu).
\label{eq:KS-law-entropic-limits}
\end{align}
The corresponding conditional operators are
\begin{align}
\mathcal E_t^\gamma(X)
:=
\mathrm e^\gamma(\law(X\mid\F_t)),
\qquad t=0,1,2.
\label{eq:KS-conditional-entropic}
\end{align}

The probabilistic part of the proof is encoded in the next two lemmas.

\begin{lemma}[Independent future risks are evaluated deterministically, see {\cite[Lemma 2.2]{KupperSchachermayer2009}}]
\label{lem:KS-independent-two-period}
Let \(R=(R_0,R_1,R_2)\) be a relevant time-consistent dynamic risk measure on
\(\mathbb U_2\), and assume that \(R_0\) is terminal-law invariant. If
\(Y\in L^\infty(\F_2)\) is independent of \(\F_1\), then
\[
R_1(Y)=R_0(Y)
\qquad\text{a.s.}
\]
\end{lemma}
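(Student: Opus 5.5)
The plan is to show first that \(R_1(Y)\) is a.s.\ constant, say equal to \(v\). Time consistency and normalization then give
\[
R_0(Y)=R_0(R_1(Y))=R_0(v)=v,
\]
which is the claim. The tool throughout is the reconstruction formula of \Cref{lem:reconstruction-from-R0}. Under terminal-law invariance it says that \(R_1(X)\) is determined by the map \((m,A)\mapsto\law(\mathbf 1_A(X-m))\) on \(L^\infty(\F_1)\times\F_1\).

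\emph{Step 1: replace \(Y\) by a canonical copy.} Let \(\mu:=\law(Y)\) and \(\widehat Y:=q_\mu(\pi_2)\). Since \(Y\) is independent of \(\F_1\), the pairs \((\pi_1,Y)\) and \((\pi_1,\widehat Y)\) have the same joint law \(\lambda\otimes\mu\). Now take \(m\in L^\infty(\F_1)\) and \(A\in\F_1\). Then \(\mathbf 1_A(Y-m)\) and \(\mathbf 1_A(\widehat Y-m)\) are the same Borel function of \((\pi_1,Y)\), respectively \((\pi_1,\widehat Y)\). Hence they have the same terminal law. Terminal-law invariance gives
\[
R_0\bigl(\mathbf 1_A(Y-m)\bigr)=R_0\bigl(\mathbf 1_A(\widehat Y-m)\bigr).
\]
So the admissible sets in the reconstruction formula coincide, and \(R_1(Y)=R_1(\widehat Y)\). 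This is the same argument as in \Cref{lem:current-kernel-invariance}, with terminal-law invariance in place of adapted-law invariance.

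\emph{Step 2: \(R_1(\widehat Y)\) is constant.} Let \(T\) be any measure-preserving automorphism of the first coordinate, extended by leaving \(\pi_2\) fixed. Then \(Z\circ T\) has the same terminal law as \(Z\) for every \(Z\), so \(R_0(U_TZ)=R_0(Z)\). The proof of \Cref{lem:adapted-automorphism-equivariance} used only this invariance of \(R_0\) under \(U_T\), together with the reconstruction formula. It therefore gives \(R_1(U_T\widehat Y)=U_TR_1(\widehat Y)\). Since \(\widehat Y\) depends only on \(\pi_2\), we have \(U_T\widehat Y=\widehat Y\), so \(R_1(\widehat Y)\in L^\infty(\F_1)\) is \(U_T\)-invariant for all such \(T\). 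By \Cref{lem:adapted-automorphism-ergodicity} with \(t=1\), \(R_1(\widehat Y)\) is a.s.\ equal to a constant \(v\).

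Combining the two steps gives \(R_1(Y)=v\) a.s., and the conclusion follows from the time-consistency identity above. The point needing the most care is Step 1. Its justification is that the whole admissible set in the reconstruction formula, ranging over all \(\F_1\)-measurable \(m\) and not only constant ones, is transported exactly. This holds because equality of the joint laws of \((\pi_1,Y)\) and \((\pi_1,\widehat Y)\) controls every Borel function of them. Everything else is a direct reuse of the automorphism lemmas.
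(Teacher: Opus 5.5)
Your proof is correct, but it takes a more structural route than the paper.

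The paper argues directly. With \(V:=R_1(Y)\) assumed non-constant, it picks \(A\subset\{V\ge b\}\) and \(A'\subset\{V\le a\}\) in \(\F_1\) of equal positive measure. By independence, \(\mathbf 1_A(Y-a)\) and \(\mathbf 1_{A'}(Y-a)\) have the same law. By locality and time consistency, their \(R_0\)-values are \(R_0(\mathbf 1_A(V-a))>0\) and \(R_0(\mathbf 1_{A'}(V-a))\le 0\), respectively, which contradicts terminal-law invariance. That argument needs neither the reconstruction formula nor a canonical copy of \(Y\). It only tests with the constant shift \(a\) and two indicator sets, and independence alone matches the laws. Relevance enters directly rather than through \Cref{lem:reconstruction-from-R0}.

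Your route first shows, in the spirit of \Cref{lem:current-kernel-invariance}, that \(R_1(Y)\) depends only on the joint law of \((\pi_1,Y)\). This forces you to transport the full admissible set over all \(\F_1\)-measurable \(m\), which you do correctly. You then invoke the equivariance and ergodicity lemmas. Your Step 1 is genuinely needed, since \(Y\) itself need not be fixed by the automorphisms: \(Y=\pi_1+\pi_2 \bmod 1\) is independent of \(\F_1\) but not \(U_T\)-invariant. You also rightly note that the proof of \Cref{lem:adapted-automorphism-equivariance} uses only \(R_0\circ U_T=R_0\), which terminal-law invariance supplies.

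What your version buys is that it exhibits the lemma as the terminal-law analogue of the one-step identification mechanism in \Cref{lem:one-step-from-R0}. The price is length and reliance on \Cref{lem:adapted-automorphism-ergodicity}. Its \(t=1\) case, once unpacked, is exactly the equal-measure swap that the paper performs by hand.
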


\begin{proof}
Put \(V:=R_1(Y)\). We first show that \(V\) is constant. If not, there are
\(a<b\) such that \(\P(V\le a)>0\) and \(\P(V\ge b)>0\). Since \(\F_1\) is
atomless, choose \(A\subset\{V\ge b\}\) and \(A'\subset\{V\le a\}\), both in
\(\F_1\), with \(\P(A)=\P(A')>0\).

By locality and cash additivity,
\begin{align}
R_1(\mathbf 1_A(Y-a))&=\mathbf 1_A(V-a),
&
R_1(\mathbf 1_{A'}(Y-a))&=\mathbf 1_{A'}(V-a).
\label{eq:KS-independent-locality}
\end{align}
Time consistency gives the same identities after applying \(R_0\). Since \(Y\)
is independent of \(\F_1\) and \(\P(A)=\P(A')\), the variables
\(\mathbf 1_A(Y-a)\) and \(\mathbf 1_{A'}(Y-a)\) have the same law. Hence,
by terminal-law invariance, the corresponding \(R_0\)-values are equal. But
\[
\mathbf 1_A(V-a)\ge (b-a)\mathbf 1_A,
\qquad
\mathbf 1_{A'}(V-a)\le0,
\]
so monotonicity and relevance imply
\[
R_0(\mathbf 1_A(V-a))>0,
\qquad
R_0(\mathbf 1_{A'}(V-a))\le0,
\]
a contradiction. Thus \(V\) is constant. Finally,
\(R_0(Y)=R_0(R_1(Y))=R_0(V)=V\).
\end{proof}

\begin{lemma}[Decomposability of the initial law functional]
\label{lem:KS-initial-decomposable-two-period}
Let \(R=(R_0,R_1,R_2)\) satisfy the assumptions of
\Cref{lem:KS-independent-two-period}, and let \(c:\Pc_b(\R)\to\R\) be the
law-level representative of \(R_0\), that is, \(c(\law(X)):=R_0(X)\). Then
\begin{align}
c\Big(\sum_i p_i\mu_i\Big)
=
c\Big(\sum_i p_i\delta_{c(\mu_i)}\Big)
\label{eq:KS-c-decomposable}
\end{align}
for every probability vector \((p_i)_{i=1}^k\), with zero weights omitted, and
all \(\mu_i\in\Pc_b(\R)\).
\end{lemma}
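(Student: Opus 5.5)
The idea is to realize the mixture $\sum_i p_i\mu_i$ on $\mathbb U_2$ so that the mixing index is revealed at time $1$ and the component law is resolved at time $2$. Then time consistency at time $1$, combined with \Cref{lem:KS-independent-two-period} applied on each piece of the partition, replaces each component by its deterministic value $c(\mu_i)$.

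Concretely, partition $[0,1]$ into intervals $I_1,\dots,I_k$ of lengths $p_1,\dots,p_k$ and set $A_i:=\{\pi_1\in I_i\}\in\F_1$, so that $\P(A_i)=p_i$. Put $Y_i:=q_{\mu_i}(\pi_2)$. Each $Y_i$ is bounded, has law $\mu_i$, and is independent of $\F_1$. Define
\[
X:=\sum_i \mathbf 1_{A_i}Y_i .
\]
Since $\pi_1$ and $\pi_2$ are independent, $\law(X)=\sum_i p_i\mu_i$. By \Cref{lem:KS-independent-two-period}, $R_1(Y_i)=R_0(Y_i)=c(\mu_i)$ a.s. Locality (\Cref{lem:locality}) applied to $R_1$, iterated over the finite $\F_1$-partition, then gives
\[
R_1(X)=\sum_i\mathbf 1_{A_i}R_1(Y_i)=\sum_i \mathbf 1_{A_i}c(\mu_i).
\]
Time consistency yields $R_0(X)=R_0(R_1(X))$. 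The right-hand side is the $R_0$-value of a random variable with law $\sum_i p_i\delta_{c(\mu_i)}$ (if several $c(\mu_i)$ coincide, the point masses simply add). Since $R_0$ is terminal-law invariant, this equals $c\big(\sum_i p_i\delta_{c(\mu_i)}\big)$, while the left-hand side is $c\big(\sum_ip_i\mu_i\big)$. This proves \eqref{eq:KS-c-decomposable}.

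Two small points need checking. First, $c$ is well defined on all of $\Pc_b(\R)$: every bounded law is realized on $\mathbb U_2$ via $q_\mu(\pi_2)$, and terminal-law invariance gives independence of the representative. Second, locality across a $k$-piece partition follows from the two-set version of \Cref{lem:locality} by induction, since $\mathbf 1_{A_i}$ and the $Y_i$ are bounded and the $A_i$ lie in $\F_1$.

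The main step, and the only non-routine one, is the invocation of \Cref{lem:KS-independent-two-period}. Its key hypothesis is independence of $Y_i$ from $\F_1$, which is exactly why the components are built from the fresh coordinate $\pi_2$ rather than by some arbitrary realization of the mixture.
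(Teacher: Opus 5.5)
Your proposal is correct and follows the paper's proof essentially step for step. You take an $\F_1$-partition with $\P(A_i)=p_i$ and fresh components $Y_i=q_{\mu_i}(\pi_2)$. You then use \Cref{lem:KS-independent-two-period} together with locality to get $R_1(X)=\sum_i\mathbf 1_{A_i}c(\mu_i)$, and conclude by time consistency and terminal-law invariance.
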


\begin{proof}
Choose a partition \(A_1,\dots,A_k\in\F_1\) with \(\P(A_i)=p_i\), and set
\(Y_i:=q_{\mu_i}(\pi_2)\). Then \(Y_i\) is independent of \(\F_1\) and has
law \(\mu_i\). For \(Y:=\sum_i\mathbf 1_{A_i}Y_i\), one has
\(\law(Y)=\sum_i p_i\mu_i\). By locality and
\Cref{lem:KS-independent-two-period},
\[
R_1(Y)
=
\sum_i\mathbf 1_{A_i}R_1(Y_i)
=
\sum_i\mathbf 1_{A_i}c(\mu_i).
\]
Thus \(\law(R_1(Y))=\sum_i p_i\delta_{c(\mu_i)}\), and time consistency gives
\eqref{eq:KS-c-decomposable}.
\end{proof}

The rest of the proof is deterministic. We use the following standard
finite-lottery form of the Kolmogorov--Nagumo--de Finetti \cite{Kolmogorov1930,Nagumo1930,deFinetti1931} characterization of
quasi-arithmetic means; see
 \cite[Chapter V]{Aczel1966}.

\begin{theorem}[Kolmogorov--Nagumo--de Finetti representation]
\label{thm:finite-vNM-deFinetti}
Let \(I\subset\R\) be an interval, and let \(C\) be a certainty equivalent on
finitely supported probability measures on \(I\). 


Assume that \(C\) is normalized on constants,
internal, strictly increasing, mixture-continuous, and satisfies reduction of
compound lotteries:
\begin{align}
\min\supp(\mu)\le C(\mu)\le \max\supp(\mu),
\qquad
C(\delta_x)=x,
\label{eq:KNdF-internality}
\\
C\Big(\sum_i p_i\mu_i\Big)
=
C\Big(\sum_i p_i\delta_{C(\mu_i)}\Big).
\label{eq:KNdF-reduction}
\end{align}
Here internal signifies that
\(
\min\supp(\mu)\le C(\mu)\le \max\supp(\mu).
\)
Strict monotonicity means that, if \(x<y\) and \(p\in(0,1)\), then
\(x<C(p\delta_y+(1-p)\delta_x)<y\), and mixture-continuity means that
\(p\mapsto C(p\mu+(1-p)\nu)\) is continuous for finitely supported
\(\mu,\nu\). Then there exists a continuous strictly monotone function
\(u:I\to\R\), unique up to nonzero affine transformations, such that
\begin{align}
C\Big(\sum_i p_i\delta_{x_i}\Big)
=
u^{-1}\Big(\sum_i p_i u(x_i)\Big).
\label{eq:KNdF-representation}
\end{align}
\end{theorem}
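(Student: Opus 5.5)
The plan is the classical probability-equivalent construction: calibrate every outcome against a binary lottery between two fixed reference points, and use reduction of compound lotteries \eqref{eq:KNdF-reduction} to turn expected utility into a plain averaging statement.

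\textbf{Step 1: the calibration map.} First fix a compact subinterval \([a,b]\subset I\) with \(a<b\). For \(p\in[0,1]\) put \(\ell_p:=p\delta_b+(1-p)\delta_a\) and \(g(p):=C(\ell_p)\). By \eqref{eq:KNdF-internality}, \(g(0)=a\) and \(g(1)=b\), and \(g\) is continuous by mixture-continuity.

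Next I show that \(g\) is strictly increasing. Take \(0\le p<p'\le 1\) and set \(\lambda:=(1-p')/(1-p)\in[0,1)\). Then \(\ell_{p'}=\lambda\ell_p+(1-\lambda)\delta_b\). Reduction gives \(g(p')=C(\lambda\delta_{g(p)}+(1-\lambda)\delta_b)\). If \(\lambda=0\) this equals \(b\); otherwise strict monotonicity gives a value strictly above \(g(p)\). In both cases \(g(p')>g(p)\); the case \(g(p)=b\) cannot occur for \(p<1\) by the same argument applied from the other side.

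Hence \(g:[0,1]\to[a,b]\) is a homeomorphism. Define \(u:=g^{-1}\), which is continuous and strictly increasing, so that \(x=C(\ell_{u(x)})\) for every \(x\in[a,b]\).

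\textbf{Step 2: the representation on \([a,b]\).} Let \(\mu=\sum_ip_i\delta_{x_i}\) with \(x_i\in[a,b]\). Apply reduction twice, once to the compound lottery \(\sum_ip_i\ell_{u(x_i)}\) and once to \(\sum_ip_i\delta_{x_i}\). Both reduce to \(C\bigl(\sum_ip_i\delta_{C(\ell_{u(x_i)})}\bigr)=C(\mu)\). On the other hand \(\sum_ip_i\ell_{u(x_i)}=\ell_{\sum_ip_iu(x_i)}\), so
\[
C(\mu)=g\Bigl(\sum_ip_iu(x_i)\Bigr)=u^{-1}\Bigl(\sum_ip_iu(x_i)\Bigr),
\]
which is \eqref{eq:KNdF-representation} on \([a,b]\).

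\textbf{Step 3: uniqueness on a fixed interval.} Suppose \(u\) and \(v\) both represent \(C\) on \([a,b]\), and put \(\phi:=v\circ u^{-1}\) on \(u([a,b])\). Comparing the two representations on two-point lotteries gives
\[
\phi(ps+(1-p)r)=p\phi(s)+(1-p)\phi(r)
\]
for all \(s,r\) and \(p\). So \(\phi\) is affine, and by strict monotonicity the slope is nonzero. Continuity is not even needed here, since the identity holds for every \(p\).

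\textbf{Step 4: extension to all of \(I\).} I exhaust \(I\) by an increasing sequence of compact intervals \([a_n,b_n]\supseteq[a_0,b_0]\). On each interval I take a representing function \(u_n\) and normalize it affinely so that \(u_n(a_0)=0\) and \(u_n(b_0)=1\). By Step 3, the restriction of \(u_{n+1}\) to \([a_n,b_n]\) is an affine image of \(u_n\) agreeing with it at \(a_0\) and \(b_0\), hence equal to \(u_n\). The \(u_n\) therefore glue to a single continuous, strictly monotone \(u\) on \(I\).

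Every finitely supported \(\mu\) lives in some \([a_n,b_n]\), so \eqref{eq:KNdF-representation} holds on \(I\). Uniqueness up to affine maps on \(I\) follows from Step 3 applied on each \([a_n,b_n]\) together with the same normalization.

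\textbf{Main obstacle.} I expect Step 1 to need the most care. The strict monotonicity of \(p\mapsto C(\ell_p)\), including at the endpoints, must be derived carefully from reduction plus the two-point strict monotonicity hypothesis, since everything else rests on \(g\) being a homeomorphism. The gluing in Step 4 is routine once uniqueness on compact intervals is available.
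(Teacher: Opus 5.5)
The paper gives no proof of this theorem. It quotes the result as the standard finite-lottery form of the Kolmogorov--Nagumo--de Finetti characterization and refers to \cite[Chapter V]{Aczel1966}. So the relevant question is whether your argument stands on its own, and it does.

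Your proof is correct and self-contained, using the classical probability-equivalent construction:
\begin{enumerate}[(1)]
\item Calibrate each \(x\in[a,b]\) by the unique \(p\) with \(C(p\delta_b+(1-p)\delta_a)=x\).
\item Note that \(\sum_i p_i\ell_{u(x_i)}=\ell_{\sum_i p_iu(x_i)}\).
\item Let a single application of reduction collapse this compound lottery to \(\sum_i p_i\delta_{x_i}\).
\end{enumerate}

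Compared with the citation, your route shows exactly which hypotheses carry the weight:
\begin{enumerate}[(a)]
\item Mixture-continuity is needed only along the one binary family \(p\mapsto C(\ell_p)\).
\item Strict monotonicity is needed only in the binary form stated. It gives \(g(p)<b\) for \(p\in(0,1)\) directly, with no argument ``from the other side'' required, and this in turn gives the strict increase of \(g\).
\item No monotonicity in the outcomes is assumed; it comes out of the representation.
\end{enumerate}
The citation, for its part, buys brevity.

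Your Step 4 is exactly the patching the paper carries out right after, in \Cref{lem:KS-finite-lottery-reduction}. There the theorem is only ever invoked on compact intervals \([a,b]\). So for the paper's purposes your Steps 1--3 already suffice.

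Two small tidy-ups:
\begin{enumerate}[(1)]
\item Uniqueness on all of \(I\) needs no normalization. The identity \(\phi(ps+(1-p)r)=p\phi(s)+(1-p)\phi(r)\) from Step 3 holds for all \(s,r\in u(I)\), so \(\phi\) is affine on \(u(I)\).
\item For continuity of the glued \(u\), choose the exhaustion so that every point of \(I\) lies in the relative interior (in \(I\)) of some \([a_n,b_n]\). Alternatively, note that a strictly monotone function on an interval whose image is an interval is continuous.
\end{enumerate}
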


\begin{lemma}[Decomposable finite-lottery certainty equivalents]
\label{lem:KS-finite-lottery-reduction}
Let \(c\) be defined on finitely supported laws and assume that it is
normalized, monotone, cash additive, relevant, and decomposable. Then either
\(c(\mu)=\max\supp(\mu)\) for every finitely supported law \(\mu\), or there
exists a continuous strictly monotone function \(u:\R\to\R\) such that
\begin{align}
c\Big(\sum_i p_i\delta_{x_i}\Big)
=
u^{-1}\Big(\sum_i p_i u(x_i)\Big)
\label{eq:KS-quasi-arithmetic}
\end{align}
for every finitely supported law \(\sum_i p_i\delta_{x_i}\).
\end{lemma}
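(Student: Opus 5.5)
The plan is to reduce the statement to \Cref{thm:finite-vNM-deFinetti}, applied on $I=\R$, by verifying its hypotheses for $c$ restricted to finitely supported laws. The only exception is the degenerate case $c=\max\supp$, which must be split off first.

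Three of the hypotheses are immediate. Normalization on constants follows from normalization together with cash additivity, giving $c(\delta_x)=x$. Internality follows from monotonicity: $\min\supp\mu\le X\le\max\supp\mu$ pointwise gives $\min\supp\mu\le c(\mu)\le\max\supp\mu$. Reduction of compound lotteries is exactly decomposability \eqref{eq:KS-c-decomposable}.

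Strict monotonicity and mixture-continuity are where the real work lies. By cash additivity, together with a scaling reduction where needed, both can be studied through the one-parameter functions
\[
f_a(p):=c\bigl(p\delta_a+(1-p)\delta_0\bigr),\qquad a>0,\ p\in[0,1].
\]
Each $f_a$ is nondecreasing in $p$ by monotonicity (quantile coupling). Relevance gives $f_a(p)>0$ for $p>0$.

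Decomposability then yields a functional equation. Take $p=\sum_i p_iq_i$ and express the lottery as a compound lottery of two-point lotteries. Taking $p=q^n$ shows
\[
f_a(q^2)=f_{f_a(q)}(q)\ \text{ by cash-additive rescaling},
\]
and more generally it lets me interpolate. I expect the dichotomy to come out as follows. Either $f_a(p)=a$ for some $p<1$, and then decomposability iterated (splitting $p\delta_a+(1-p)\delta_0$ into sub-lotteries each of which already evaluates to $a$) propagates this to $f_a(p)=a$ for all $p>0$ and all $a$. Cash additivity and decomposability then give $c(\mu)=\max\supp\mu$ for all finitely supported $\mu$. Otherwise $f_a(p)<a$ for all $p<1$; combined with $f_a(p)>0$ for $p>0$, this gives strict internality for two-point lotteries, and hence strict monotonicity.

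The main obstacle is mixture-continuity. A monotone $f_a$ can jump, and I would rule out jumps using decomposability as follows. Suppose $f_a$ jumps at $p_0$, i.e.\ $f_a(p_0-)<f_a(p_0+)$. Write the lottery at parameter $p_0\pm\epsilon$ as a mixture of the lottery at $p_0$ with small-weight lotteries. Reduction turns this into a two-point lottery between $f_a(p_0)$ and a value in $[0,a]$ with weight $\epsilon$-small. A jump there would force $f_b$ to have a jump at $0$ for some $b$, i.e.\ $\lim_{p\downarrow0}f_b(p)>0$. To contradict that, I would iterate decomposability: $f_b(p/2)$ relates to $f_b(p)$ through a compound lottery, which forces the right-limit at $0$ to be a fixed point. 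Cash additivity then makes that fixed point either $0$ or force the $\max$ case.

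Once continuity holds for two-point lotteries, the general case $p\mapsto c(p\mu+(1-p)\nu)$ follows by reducing through decomposability to a two-point lottery between $c(\mu)$ and $c(\nu)$. With all hypotheses verified, \Cref{thm:finite-vNM-deFinetti} yields $u$ and the representation \eqref{eq:KS-quasi-arithmetic}.
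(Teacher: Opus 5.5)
\emph{Verdict: the overall route is the paper's, but the upper-endpoint branch has a genuine gap, and the continuity argument has a smaller missing case.}

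Your route is the paper's. You work with binary lotteries $f_a(p)=M_p(a,0)$, where $M_p(x,y):=c(p\delta_x+(1-p)\delta_y)$. You split into an upper-endpoint case and a strictly internal case, and use relevance to exclude the lower endpoint. Continuity comes from the binary reduction identity plus the Lipschitz bound $|M_p(x,y)-M_p(x',y')|\le\max\{|x-x'|,|y-y'|\}$. Mixture-continuity then follows by decomposability, and you finish with the Kolmogorov--Nagumo--de Finetti theorem. Applying that theorem directly on $I=\R$, rather than on compact intervals and patching, is fine as the theorem is stated.

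\textbf{The gap: propagating the upper-endpoint case to all spreads.} You claim that $f_a(p_0)=a$ for one $a$ and one $p_0\in(0,1)$ propagates to \emph{all} $a$. Your mechanism, splitting into sub-lotteries that already evaluate to $a$, gives $f_a(p_0r)=f_{f_a(p_0)}(r)=f_a(r)$. Hence $f_a\equiv a$ on $(0,1]$, but only for that single spread. A lottery $p\delta_{a'}+(1-p)\delta_0$ has no sub-lottery involving $a$, so nothing transfers to other spreads $a'$.

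Without this step neither branch of your dichotomy is established. A priori $c$ could equal $\max$ on binary lotteries of small spread and be strictly internal on large ones. Then neither the first alternative of the lemma nor the strict-monotonicity hypothesis of the representation theorem holds.

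Two ingredients are missing:
\begin{itemize}
\item Downward closure. If $f_h\equiv h$ and $0<r\le h$, then $h=M_p(h,0)\le M_p(h,h-r)\le h$, and cash additivity gives $f_r\equiv r$.
\item Upward doubling, which needs a third outcome and two different groupings. Suppose $f_h\equiv h$ and $f_e\equiv e$ with $0<e\le h$. Put $L:=h+e$ and $\mu:=\alpha(q\delta_L+(1-q)\delta_e)+(1-\alpha)\delta_0$. Grouping $\{L,e\}$ gives $c(\mu)=f_L(\alpha)$. Grouping $\{e,0\}$ gives $c(\mu)=M_{\alpha q}(L,e)=e+f_h(\alpha q)=L$. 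Hence $f_L\equiv L$, and iterating this doubling covers all spreads.
\end{itemize}

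\textbf{A smaller omission: left jumps in the continuity argument.} Your continuity argument only treats right jumps. For $p_0-\epsilon$ the reduction gives $f_a(p_0-\epsilon)=M_{1-\eta}(f_a(p_0),0)$ with $\eta=\epsilon/p_0$. This is the behaviour of $f_b$ near $p=1$ for $b=f_a(p_0)$, not a jump of any $f_b$ at $0$: cash additivity only translates, and there is no reflection symmetry. You also need $f_b(1-)=b$ for continuity at $p=1$ itself.

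The mirror image of your fixed-point argument handles this. Decomposability gives $f_b(\tfrac{1+p}{2})=M_{1/2}(b,f_b(p))$. Letting $p\uparrow1$ and using the Lipschitz bound yields $U=M_{1/2}(b,U)$ for $U:=f_b(1-)$. If $U<b$, this says $f_{b-U}(1/2)=0$, which relevance forbids.

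The paper instead applies the fixed-point identity directly at the jump point, $f(p-)=M_{1/2}(f(p-),f(p))$, rather than first reducing to endpoints; both work.

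Finally, no scaling reduction is available, since cash additivity only gives translations. None is needed, though: $f_a$, $a>0$, already covers every binary lottery up to translation.
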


\begin{proof}
For \(p\in[0,1]\), write
\(M_p(x,y):=c(p\delta_x+(1-p)\delta_y)\). By monotonicity and normalization,
\(M_p\) is internal; by cash additivity it is translation invariant. Moreover,
monotonicity and cash additivity give
\begin{align}
|M_p(x,y)-M_p(x',y')|
\le
\max\{|x-x'|,|y-y'|\}.
\label{eq:KS-binary-lipschitz}
\end{align}
Decomposability gives the binary reduction identity
\begin{align}
M_{rp+(1-r)q}(b,a)
=
M_r(M_p(b,a),M_q(b,a)),
\qquad a<b.
\label{eq:KS-binary-reduction}
\end{align}


We first separate endpoint cases. Fix \(a<b\) and put \(f(p):=M_p(b,a)\).
If \(f(p_0)=b\) for some \(p_0\in(0,1)\), then
\[
f(p_0r)=M_r(f(p_0),a)=M_r(b,a)=f(r).
\]
Together with monotonicity in \(p\), this gives \(f(q)=b\) for every
\(q>0\). After translation, \(d:=b-a\) therefore satisfies
\[
M_p(d,0)=d,\qquad p\in(0,1].
\]

We claim that this already implies \(c=\max\supp\) on all finite lotteries.
Let
\[
H:=\{h>0: M_p(h,0)=h\text{ for all }p\in(0,1]\}.
\]
Then \(d\in H\). If \(h\in H\) and \(0<r\le h\), then
\[
h=M_p(h,0)\le M_p(h,h-r)\le h,
\]
and cash additivity gives \(M_p(r,0)=r\). Hence \(H\) is downward closed.

Moreover, if \((0,h]\subset H\), then \((0,2h]\subset H\). Indeed, let
\(L=h+e\), \(0<e\le h\). For \(\alpha,q\in(0,1)\), set
\[
\mu:=\alpha(q\delta_L+(1-q)\delta_e)+(1-\alpha)\delta_0 .
\]
Since \(L-e=h\in H\), decomposability gives
\[
c(\mu)=M_\alpha(L,0).
\]
Writing also
\[
\mu=\alpha q\,\delta_L+(1-\alpha q)
\left(
\frac{\alpha(1-q)}{1-\alpha q}\delta_e
+
\frac{1-\alpha}{1-\alpha q}\delta_0
\right),
\]
and using \(e\in H\), decomposability gives \(c(\mu)=M_{\alpha q}(L,e)=L\).
Thus \(M_\alpha(L,0)=L\), and \(L\in H\). By iteration, \(H=(0,\infty)\).

Hence every binary lottery with positive mass at its upper outcome is evaluated
at that upper outcome. If \(\mu\) is finitely supported, \(y:=\max\supp(\mu)\),
and \(p:=\mu(\{y\})>0\), write \(\mu=p\delta_y+(1-p)\nu\). Since
\(m:=c(\nu)<y\), decomposability gives
\[
c(\mu)=M_p(y,m)=y.
\]
Therefore \(c=\max\supp\) on all finite lotteries.

The lower endpoint case is excluded by relevance. Indeed, if \(M_p(b,a)=a\)
for some \(a<b\) and \(p\in(0,1)\), then, after translating by \(-a\),
\(c((1-p)\delta_0+p\delta_{b-a})=0\), contradicting relevance.

We may therefore assume the strictly internal case. Then \(f\) is strictly
increasing: for \(0<p<q<1\),
\(f(p)=M_{p/q}(f(q),a)<f(q)\). It is also continuous. Since \(f\) is monotone,
left and right limits exist. If \(f(p-)<f(p)\), choose \(p_n\uparrow p\);
applying \eqref{eq:KS-binary-reduction} to \((p_n+p)/2\) and using
\eqref{eq:KS-binary-lipschitz} gives
\(f(p-)=M_{1/2}(f(p-),f(p))\), contradicting strict internality. Right
continuity is identical.

The same argument at the endpoints gives \(f(0+)=a\) and \(f(1-)=b\), so
\(p\mapsto M_p(b,a)\) is continuous on \([0,1]\). Hence \(c\) is
mixture-continuous. Indeed, for finitely supported \(\mu,\nu\),
decomposability yields
\[
c(p\mu+(1-p)\nu)
=
c\bigl(p\delta_{c(\mu)}+(1-p)\delta_{c(\nu)}\bigr),
\]
and the right-hand side is either constant or one of the continuous binary
maps just considered.

Thus, on each compact interval \(I=[a,b]\), the restriction of \(c\) satisfies
the hypotheses of \Cref{thm:finite-vNM-deFinetti}. Hence
\eqref{eq:KNdF-representation} holds on \(I\). By uniqueness in
\Cref{thm:finite-vNM-deFinetti}, the local generators on overlapping intervals
are nonzero affine transforms of one another. Patching them gives a global
continuous strictly monotone \(u:\R\to\R\), and hence
\eqref{eq:KS-quasi-arithmetic}.
\end{proof}

\begin{lemma}[Cash-additive quasi-arithmetic means]
\label{lem:KS-cash-additive-quasi-arithmetic}
Let \(u:\R\to\R\) be continuous and strictly monotone, and assume that
\(c\) is given by \eqref{eq:KS-quasi-arithmetic}. If \(c\) is cash additive,
then either \(u\) is affine, in which case \(c\) is the expectation, or there
exists \(\gamma\in\R\setminus\{0\}\) such that
\begin{align}
c\Big(\sum_i p_i\delta_{x_i}\Big)
=
\frac1\gamma\log\sum_i p_i e^{\gamma x_i}.
\label{eq:KS-entropic-finite}
\end{align}
\end{lemma}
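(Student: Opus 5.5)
The plan is to turn cash additivity of \(c\) into a functional equation for \(u\) on binary lotteries, and then solve it by the classical Jensen/Cauchy argument. Fix \(p\in(0,1)\). Cash additivity of \(c\), applied to \(p\delta_x+(1-p)\delta_y\), says that
\[
u^{-1}\bigl(p\,u(x+m)+(1-p)\,u(y+m)\bigr)=u^{-1}\bigl(p\,u(x)+(1-p)\,u(y)\bigr)+m
\]
for all \(x,y,m\). Equivalently, for each fixed \(m\) the translated generator \(u_m(x):=u(x+m)\) generates the same quasi-arithmetic mean on finite lotteries as \(u\). I will invoke the uniqueness part of \Cref{thm:finite-vNM-deFinetti}, or the classical uniqueness statement for quasi-arithmetic means, applied on each compact interval and then patched. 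This gives constants \(a(m)\neq0\) and \(b(m)\) with
\[
u(x+m)=a(m)\,u(x)+b(m),\qquad x,m\in\R .
\]
As a sanity check, the uniqueness claim really only needs two-point lotteries, and those are exactly what cash additivity controls.

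Next, solve this functional equation. Normalize \(u(0)=0\); this is allowed, since replacing \(u\) by \(u-u(0)\) does not change the mean. Then \(b(m)=u(m)\), so \(u(x+m)=a(m)u(x)+u(m)\). The function \(a\) is continuous, because \(a(m)=(u(x_0+m)-u(m))/u(x_0)\) for any \(x_0\) with \(u(x_0)\neq0\). Exchanging the roles of \(x\) and \(m\) gives
\[
a(m)u(x)+u(m)=a(x)u(m)+u(x).
\]
There are two cases.

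\textbf{Case \(a\equiv1\).} Then \(u\) is additive and continuous, hence linear. So \(u\) is affine and \(c\) is the expectation.

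\textbf{Case \(a(m_0)\neq1\) for some \(m_0\).} Setting \(m=m_0\) yields \(u(x)=\kappa\,(a(x)-1)\) with \(\kappa:=u(m_0)/(a(m_0)-1)\), and \(\kappa\neq0\) because \(u\) is not constant. Substituting this back into \(u(x+m)=a(m)u(x)+u(m)\) shows that \(a(x+m)=a(x)a(m)\). Since \(a\) is continuous and not identically zero, \(a(x)=e^{\gamma x}\) for some \(\gamma\), and \(\gamma\neq0\) in this case. Hence \(u(x)=\kappa(e^{\gamma x}-1)\), and a direct computation gives \eqref{eq:KS-entropic-finite}.

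The main obstacle is the first step. It requires the uniqueness of the generator on all of \(\R\), whereas \Cref{thm:finite-vNM-deFinetti} provides it only up to affine maps on each compact interval. I would handle this in two stages. First, fix \(m\) and obtain \(a_I(m),b_I(m)\) on each interval \(I\). Second, use overlaps of nested intervals \([-n,n]\) to show that these constants are independent of \(I\), exactly as in the patching argument at the end of the proof of \Cref{lem:KS-finite-lottery-reduction}.

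The remaining steps are routine. Continuity of \(a\) rules out pathological solutions of Cauchy's equation, and the sign of \(\gamma\) is not restricted at this stage, since both signs give cash-additive entropic means.
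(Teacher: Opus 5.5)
Your proof is correct. It differs from the paper's mainly in the first step.

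You obtain $u(x+m)=a(m)u(x)+b(m)$ from the uniqueness half of \Cref{thm:finite-vNM-deFinetti} on compact intervals, then patch. The paper instead conjugates the translation by the generator: with $I:=u(\R)$ and $T_m(z):=u(u^{-1}(z)+m)$, cash additivity says exactly that $\sum_i p_iT_m(z_i)=T_m(\sum_i p_iz_i)$ for $z_i\in I$. So $T_m$ is affine on all of $u(\R)$ at once. This is in effect a two-line proof of the uniqueness you need, valid globally. The obstacle you flag (compact intervals versus $\R$) therefore never arises, and no external theorem is used. Your route also works: the patching over $[-n,n]$ is sound, since the affine coefficients are already determined on $[-1,1]$, where $u$ is nonconstant. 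But it relies on a black box whose content in this case is precisely that observation.

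For the functional equation the two arguments are essentially the same. The paper gets $a(m+n)=a(m)a(n)$ and $b(m+n)=a(m)b(n)+b(m)$ from $T_{m+n}=T_m\circ T_n$, and pins down $b$ using commutativity $T_m\circ T_n=T_n\circ T_m$. Your normalization $u(0)=0$ makes $b=u$, and your exchange identity $(a(m)-1)u(x)=(a(x)-1)u(m)$ is exactly that commutativity relation. It has the small advantage of giving $u=\kappa(a-1)$ in one line, after which multiplicativity of $a$ follows by substitution.
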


\begin{proof}
Let \(I:=u(\R)\). For \(m\in\R\), define
\(T_m:I\to I\) by \(T_m(z):=u(u^{-1}(z)+m)\). Cash additivity says that
\(T_m\) preserves finite convex combinations. Indeed, for \(z_i=u(x_i)\),
\begin{align}
u^{-1}\Big(\sum_i p_iT_m(z_i)\Big)
&=
c\Big(\sum_i p_i\delta_{x_i+m}\Big)
=
c\Big(\sum_i p_i\delta_{x_i}\Big)+m
\nonumber\\
&=
u^{-1}\Big(T_m\big(\sum_i p_iz_i\big)\Big).
\label{eq:KS-Tm-affine-reason}
\end{align}
Thus \(T_m\) is continuous and affine on the interval \(I\):
\(T_m(z)=a(m)z+b(m)\), with \(a(m)>0\). The identity
\(T_{m+n}=T_m\circ T_n\) gives
\begin{align}
a(m+n)=a(m)a(n),
\qquad
b(m+n)=a(m)b(n)+b(m).
\label{eq:KS-affine-cocycle}
\end{align}
By continuity in \(m\), either \(a\equiv1\), or \(a(m)=e^{\gamma m}\) for
some \(\gamma\ne0\).

If \(a\equiv1\), then \(b(m+n)=b(m)+b(n)\), hence \(b(m)=\beta m\) with
\(\beta\ne0\). Thus \(u(x+m)=u(x)+\beta m\), so \(u\) is affine and \(c\) is
the expectation.

If \(a(m)=e^{\gamma m}\), then commutativity of translations implies
\(b(m)=\kappa(e^{\gamma m}-1)\) for some \(\kappa\). Hence \(v:=u+\kappa\)
satisfies \(v(x+m)=e^{\gamma m}v(x)\). Thus \(v(x)=v(0)e^{\gamma x}\), with
\(v(0)\ne0\). Since quasi-arithmetic means are unchanged by nonzero affine
transformations of the generator, \eqref{eq:KS-entropic-finite} follows.
\end{proof}

\begin{lemma}[Deterministic classification]
\label{lem:KS-deterministic-classification}
Let \(c:\Pc_b(\R)\to\R\) be normalized, monotone, cash additive, relevant, and
decomposable. Then \(c=\mathrm e^\gamma\) for some
\(\gamma\in(-\infty,\infty]\).
\end{lemma}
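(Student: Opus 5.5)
First I will restrict \(c\) to finitely supported laws and apply \Cref{lem:KS-finite-lottery-reduction}. In the first alternative, \(c=\max\supp=\mathrm e^\infty\) on finite lotteries. In the second alternative, \(c\) is a quasi-arithmetic mean, and \Cref{lem:KS-cash-additive-quasi-arithmetic} identifies it on finite lotteries with \(\mathrm e^0\) (if \(u\) is affine) or with \(\mathrm e^\gamma\) for some \(\gamma\in\R\setminus\{0\}\). Hence \(c=\mathrm e^\gamma\) on finitely supported laws for some \(\gamma\in(-\infty,\infty]\). What remains is to extend this identity to all of \(\Pc_b(\R)\). This is the main obstacle, because no Fatou or continuity assumption is imposed on \(c\). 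The extension therefore has to come from monotonicity and decomposability alone.

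For the extension I would use monotone sandwiching by dyadic discretizations. Fix \(\mu\in\Pc_b(\R)\) with support in \([-K,K]\). Let \(d_n(x)=2^{-n}\lfloor 2^nx\rfloor\) and \(d^n(x)=2^{-n}\lceil 2^nx\rceil\), and set \(\mu_n^-:=(d_n)_\#\mu\) and \(\mu_n^+:=(d^n)_\#\mu\). Realize \(\mu\) as \(X=q_\mu(\pi_2)\) on \(\mathbb U_2\). Then \(d_n(X)\le X\le d^n(X)\), so monotonicity of \(R_0\) (equivalently, of \(c\) under quantile coupling) gives
\[
\mathrm e^\gamma(\mu_n^-)=c(\mu_n^-)\le c(\mu)\le c(\mu_n^+)=\mathrm e^\gamma(\mu_n^+).
\]
For finite \(\gamma\), both bounds converge to \(\mathrm e^\gamma(\mu)\) as \(n\to\infty\). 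This follows from dominated convergence on \([-K-1,K+1]\), or more simply because \(d^n-d_n\le 2^{-n}\) and \(\mathrm e^\gamma\) is cash additive and monotone, so the two bounds differ by at most \(2^{-n}\). This settles the case \(\gamma<\infty\).

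The case \(\gamma=\infty\) is not covered by the sandwich, and I would handle it separately. Let \(s=\sup\supp\mu\). Monotonicity gives \(c(\mu)\le s\). For the reverse inequality, fix \(\varepsilon>0\) and put \(p=\mu([s-\varepsilon,s])>0\). Then \(\mu\) dominates the two-point law \(p\delta_{s-\varepsilon}+(1-p)\delta_{-K}\) in the quantile order. This law is finitely supported, so \(c\) of it equals \(\max\supp=s-\varepsilon\), and monotonicity gives \(c(\mu)\ge s-\varepsilon\). Letting \(\varepsilon\downarrow0\) yields \(c=\mathrm e^\infty\).

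Relevance and decomposability are needed only through \Cref{lem:KS-finite-lottery-reduction}.
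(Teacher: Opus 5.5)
Your proof is correct and follows the paper's route: you classify $c$ on finite lotteries via \Cref{lem:KS-finite-lottery-reduction} and \Cref{lem:KS-cash-additive-quasi-arithmetic}, then extend to $\Pc_b(\R)$ by uniform approximation on a common realization, and your dyadic sandwich is exactly the paper's observation that monotonicity and cash additivity make these functionals $1$-Lipschitz in sup-norm. Your separate treatment of $\gamma=\infty$ is valid but unnecessary, since $\mathrm e^\infty=\sup\supp$ is also monotone and cash additive, so the same $2^{-n}$ sandwich already covers that case (only your dominated-convergence justification does not).
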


\begin{proof}
On finitely supported laws, \Cref{lem:KS-finite-lottery-reduction} gives
either the upper endpoint or the quasi-arithmetic form
\eqref{eq:KS-quasi-arithmetic}. In the latter case,
\Cref{lem:KS-cash-additive-quasi-arithmetic} gives either expectation or
\eqref{eq:KS-entropic-finite}.

It remains to pass from finite laws to bounded laws. Monotonicity and cash
additivity imply
\[
|c(\law(X))-c(\law(Y))|\le \|X-Y\|_\infty
\]
whenever \(X,Y\) are realized on a common probability space. Uniform
approximation by finite-valued random variables extends the formula to all of
\(\Pc_b(\R)\).
\end{proof}

\begin{theorem}[Two-period Kupper--Schachermayer rigidity]
\label{thm:KS-two-period-rigidity}
Let \(R=(R_0,R_1,R_2)\) be a relevant time-consistent dynamic risk measure on
\(\mathbb U_2\). If \(R\) is terminal-law invariant, then there exists
\(\gamma\in(-\infty,\infty]\) such that
\[
R_t(X)=\mathcal E_t^\gamma(X),
\qquad t=0,1,2.
\]
Conversely, every such family is relevant, time consistent, and
terminal-law invariant. Moreover, it is convex if and only if
\(\gamma\in[0,\infty]\), and coherent if and only if
\(\gamma\in\{0,\infty\}\).
\end{theorem}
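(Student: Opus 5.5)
The proof assembles the lemmas of this section. Let \(c:\Pc_b(\R)\to\R\) be the law-level representative of \(R_0\), well defined by terminal-law invariance.

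\textbf{Classification of \(c\).} Normalization, monotonicity and cash additivity of \(c\) are inherited from \(R_0\), since any law in \(\Pc_b(\R)\) is realized on \(\mathbb U_2\). Relevance of \(c\) follows from relevance of \(R_0\), because \(\varepsilon\mathbf 1_A\) has law \((1-p)\delta_0+p\delta_\varepsilon\). Decomposability is \Cref{lem:KS-initial-decomposable-two-period}. Hence \Cref{lem:KS-deterministic-classification} yields \(c=\mathrm e^\gamma\) for some \(\gamma\in(-\infty,\infty]\), that is, \(R_0=\mathcal E_0^\gamma\).

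\textbf{Identification of \(R_1\) and \(R_2\).} The family \((\mathcal E_t^\gamma)_{t=0}^2\) is a relevant time-consistent dynamic risk measure with the same initial functional. Time consistency follows from the tower property \(\E[e^{\gamma X}\mid\F_0]=\E[\E[e^{\gamma X}\mid\F_1]]\), from iterated conditional means, and from \(\operatorname{ess\,sup}\) of the conditional essential supremum. By \Cref{lem:reconstruction-from-R0}, a relevant time-consistent family is determined by \(R_0\), so \(R_t=\mathcal E_t^\gamma\) for \(t=0,1,2\). Note that \(R_2=\id\) holds directly.

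\textbf{Converse.} Terminal-law invariance holds because \(\mathcal E_0^\gamma(X)=\mathrm e^\gamma(\law(X))\), and time consistency was just checked. For relevance, take \(X=\varepsilon\mathbf 1_A\) with \(p=\P(A)>0\):
\begin{itemize}
\item for \(\gamma\neq0\) finite, \(\mathcal E_0^\gamma(X)=\frac1\gamma\log(1-p+pe^{\gamma\varepsilon})>0\), since the argument of the logarithm is \(>1\) for \(\gamma>0\) and \(<1\) for \(\gamma<0\);
\item for \(\gamma=0\), \(\mathcal E_0^0(X)=p\varepsilon>0\);
\item for \(\gamma=\infty\), \(\mathcal E_0^\infty(X)=\varepsilon>0\).
\end{itemize}
Monotonicity, cash additivity and normalization are immediate in each case.

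\textbf{Convexity and coherence.} Convexity holds for \(\gamma\ge0\): Hölder's inequality shows that \(\log\E[e^{\gamma X}\mid\F_t]\) is conditionally convex, and the cases \(\gamma=0,\infty\) are clear. For \(\gamma<0\) the map is concave and not affine. Concretely, with \(X\) Bernoulli and \(\lambda=1/2\), the convexity inequality fails strictly, since strict Jensen applies to the strictly concave map \(m\mapsto\frac1\gamma\log m\) composed with an affine function. For coherence, \(\gamma=0\) and \(\gamma=\infty\) are positively homogeneous. For finite \(\gamma\neq0\), \(\mathcal E^\gamma(aX)=\frac1a\mathcal E^{a\gamma}(X)\cdot a\)-type scaling fails: \(\mathcal E^\gamma\) is not positively homogeneous on Bernoulli laws, as seen by comparing small- and large-\(a\) asymptotics \(a\,\E X\) versus \(a\max X\).

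\textbf{Main obstacle.} The only nonroutine point is already absorbed into \Cref{lem:KS-deterministic-classification}. Here the remaining care is verifying that decomposability and relevance transfer to \(c\) on all of \(\Pc_b(\R)\); richness of \(\mathbb U_2\) handles this.
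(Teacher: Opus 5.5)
Your proposal is correct and follows the paper's proof essentially step for step: you pass to the law-level representative $c$, get decomposability from \Cref{lem:KS-initial-decomposable-two-period}, classify $c$ via \Cref{lem:KS-deterministic-classification}, identify $R_1,R_2$ through the tower property and the uniqueness in \Cref{lem:reconstruction-from-R0}, and finish with the standard convexity and homogeneity checks. Only the routine checks need tidying: for $\gamma<0$ the map $m\mapsto\frac1\gamma\log m$ is strictly convex, not concave, though your ``concave and not affine'' argument already suffices; the scaling identity should read $\mathcal E^\gamma(aX)=a\,\mathcal E^{a\gamma}(X)$; and for $\gamma<0$ the large-$a$ asymptotics are $a\min X$ rather than $a\max X$.
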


\begin{proof}
Let \(c(\law(X)):=R_0(X)\). By
\Cref{lem:KS-initial-decomposable-two-period}, \(c\) satisfies
\eqref{eq:KS-c-decomposable}. It is normalized, monotone, cash additive, and
relevant because \(R_0\) has these properties. Hence
\Cref{lem:KS-deterministic-classification} gives \(c=\mathrm e^\gamma\) for
some \(\gamma\in(-\infty,\infty]\). Thus
\[
R_0(X)=\mathcal E_0^\gamma(X).
\]

The rest follows from reconstruction. The family
\((\mathcal E_t^\gamma)_{t=0}^2\) is a relevant time-consistent dynamic risk
measure with the same initial functional \(R_0\), by the tower property for
finite \(\gamma\), and by the corresponding tower properties of conditional
expectation and conditional essential supremum in the limiting cases. Since a
relevant time-consistent dynamic risk measure is uniquely determined by its
initial functional, \(R_t=\mathcal E_t^\gamma\) for \(t=0,1,2\).

The converse direction is the same tower-property check. The convexity and
coherence assertions are standard: \(\gamma>0\) gives convexity by conditional
Hölder, \(\gamma=0\) is linear, \(\gamma=\infty\) is sublinear, while
\(\gamma<0\) is not convex on an atomless space. Positive homogeneity holds
only for \(\gamma=0\) and \(\gamma=\infty\).
\end{proof}

\section{Proof of the adapted Kusuoka representation}
\label{sec:adapted-kusuoka}

We use the following standard form of the static Kusuoka theorem in the loss
convention of this paper. We state it as a result on the base functional, the original statement \cite[Theorem 4]{Kusuoka2001} involves the lifted functional. The translation from one statement to the other uses the fact that any probability measure on $\R$ can be represented as the law of a random variable defined over an atomless probability space.

\begin{theorem}[Static Kusuoka representation]
\label{thm:static-kusuoka}
A static law-invariant risk measure \(\rho:\Pc_b(\R)\to\R\) is coherent and
has the Fatou property if and only if there exists a nonempty convex weakly
closed set \(\mathcal M\subset\Pc([0,1])\) such that
\[
\rho(\mu)
=
\sup_{\nu\in\mathcal M}
\int_{[0,1]}\AVaR_\alpha(\mu)\,\nu(d\alpha),
\qquad
\mu\in\Pc_b(\R).
\]
\end{theorem}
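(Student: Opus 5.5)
The statement is the classical Kusuoka theorem recast at the level of laws, so the plan is to transfer the lifted statement of \cite[Theorem 4]{Kusuoka2001} to the base functional $\rho$ through a single fixed atomless space. I would work on $([0,1],\mathcal B,\lambda)$ and use the quantile representation $\mu\mapsto q_\mu(U)$ with $U$ uniform. This realizes every $\mu\in\Pc_b(\R)$ as a bounded random variable with law $\mu$. The lift $\bar\rho(X):=\rho(\law(X))$ is then a law-invariant monetary risk measure on $L^\infty([0,1])$. By the conventions after \Cref{def:static-properties}, the properties of $\rho$ (coherence and the Fatou property) are by definition those of $\bar\rho$; the only care needed is that checking them on this single space suffices, and this holds because any joint law of finitely many or countably many bounded variables can be realized on $[0,1]$.

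For the direction ``$\Rightarrow$'', assume $\rho$ is coherent with the Fatou property. Then $\bar\rho$ is a law-invariant coherent risk measure on $L^\infty$ of an atomless standard space with the Fatou property. In the loss convention, Kusuoka's theorem (after the sign change $X\mapsto -X$ relative to his convention) gives a set $\mathcal M$ of probability measures on $[0,1]$ such that
\[
\bar\rho(X)=\sup_{\nu\in\mathcal M}\int_{[0,1]}\AVaR_\alpha(\law(X))\,\nu(d\alpha).
\]
Evaluating at $X=q_\mu(U)$ yields the law-level formula. I would then replace $\mathcal M$ by its closed convex hull without changing the supremum. This uses two facts: $\nu\mapsto\int\AVaR_\alpha(\mu)\,\nu(d\alpha)$ is affine, and it is weakly lower semicontinuous, since $\alpha\mapsto\AVaR_\alpha(\mu)$ is continuous on $[0,1)$, nondecreasing and bounded, and lower semicontinuous at $1$, where its limit is $\sup\supp\mu$. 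For a closed convex hull, adding limits of convex combinations can only produce values bounded by the supremum. This is why continuity at $\alpha=1$ must be checked: $\AVaR_\alpha(\mu)\uparrow\sup\supp\mu$ as $\alpha\uparrow1$, so the integrand is in fact continuous and bounded and the pairing is weakly continuous.

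For the direction ``$\Leftarrow$'', each $\mu\mapsto\AVaR_\alpha(\mu)$ lifts to a coherent, law-invariant functional on $L^\infty$, namely the standard convex dual formula, and it has the Fatou property. Mixtures in $\nu$ and suprema over $\mathcal M$ preserve monotonicity, cash additivity, normalization (since $\nu$ is a probability measure), convexity and positive homogeneity. Lower semicontinuity on each stratum $\Pc([-M,M])$ follows because the functional is a supremum of weakly continuous functionals. Indeed, $\mu\mapsto\AVaR_\alpha(\mu)$ is continuous for $\alpha<1$ and lower semicontinuous for $\alpha=1$ under weak convergence on compact support; by Fatou's lemma in $\alpha$, using the uniform bound $M$, the integral is lower semicontinuous, and a supremum of lower semicontinuous maps is lower semicontinuous.

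The main obstacle is the bookkeeping around Kusuoka's original hypotheses. His theorem is stated for a standard atomless space with the Fatou property, in the opposite sign convention, and possibly with $\mathcal M$ not closed or convex. I would therefore carefully verify the translation via $X\mapsto -X$, and the handling of the endpoint $\alpha=1$, where $\AVaR_1=\sup\supp$ in the lifted setting equals $\operatorname{ess\,sup}$.
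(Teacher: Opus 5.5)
Your proposal is correct and follows the paper's route: the paper invokes Kusuoka's theorem for the lifted functional and transfers it to the law level by realizing every $\mu\in\Pc_b(\R)$ on an atomless space, which is what you do with $q_\mu(U)$ on $[0,1]$. Your extra bookkeeping fills in details the paper leaves implicit: passing to the weakly closed convex hull using continuity of $\alpha\mapsto\AVaR_\alpha(\mu)$ on $[0,1]$, and checking the converse directly via lower semicontinuity on each stratum $\Pc([-M,M])$. When you translate Kusuoka's statement, note that besides $X\mapsto -X$ his level parameter is a tail mass, so you also need $\alpha\mapsto 1-\alpha$.
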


\begin{proof}[Proof of \Cref{cor:adapted-kusuoka}]
Assume first that \(R\) is adapted-law invariant and coherent. By
\Cref{thm:main-standard-cube}, the one-step maps have the form
\[
S_t(Y)=\rho_t(\law(Y\mid\F_t)),
\]
where each \(\rho_t:\Pc_b(\R)\to\R\) is a relevant static law-invariant risk
measure with the Fatou property. The same theorem gives coherence of each
\(\rho_t\). Applying \Cref{thm:static-kusuoka} to \(\rho_t\) yields a nonempty
convex weakly closed set \(\mathcal M_t\subset\Pc([0,1])\) such that
\[
\rho_t(\mu)
=
\sup_{\nu\in\mathcal M_t}
\int_{[0,1]}\AVaR_\alpha(\mu)\,\nu(d\alpha).
\]
Substituting \(\mu=\law(Y\mid\F_t)\) gives the conditional one-step formula.
The nested formula follows from time consistency,
\(R_t=S_t\circ S_{t+1}\circ\cdots\circ S_{N-1}\).

Conversely, let nonempty convex weakly closed sets
\(\mathcal M_t\subset\Pc([0,1])\) be given and define
\[
\rho_t(\mu)
:=
\sup_{\nu\in\mathcal M_t}
\int_{[0,1]}\AVaR_\alpha(\mu)\,\nu(d\alpha).
\]
By \Cref{thm:static-kusuoka}, each \(\rho_t\) is static law invariant,
coherent, and has the Fatou property. It is also relevant: if
\(\mu=(1-p)\delta_0+p\delta_\varepsilon\) with \(p>0\) and
\(\varepsilon>0\), then \(\AVaR_\alpha(\mu)>0\) for every
\(\alpha\in[0,1]\), hence \(\rho_t(\mu)>0\).

Now define \(R\) by the backward recursion
\[
R_N(X):=X,
\qquad
R_t(X):=\rho_t(\law(R_{t+1}(X)\mid\F_t)).
\]
The converse part of \Cref{thm:main-standard-cube} gives relevance, time
consistency, adapted law invariance, and the Fatou property of \(R_0\).
Coherence follows either from \Cref{thm:main-standard-cube} or directly from
the conditional coherence of the one-step lifts and monotone backward
composition. This proves the converse implication and the final assertion.
\end{proof}

{\bf Acknowledgment:}
MB gratefully acknowledges financial support from FWF through P34743, P35197, and FW506064 and from OeNB through AB1898311. SP gratefully acknowledges the financial support from the Natural Sciences and Engineering Research Council (NSERC) of Canada with grant number RGPIN-2025-05847. 
MS gratefully acknowledges financial support from FWF through FW506064.
This research was funded in whole or in part by the Austrian Science Fund (FWF) [10.55776/P34743,  10.55776/P35197, and 10.55776/FW506064]. For open access purposes, the authors have applied a CC BY public copyright license to any author accepted manuscript version arising from this submission.

\printbibliography
\end{document}